\newtheorem{theorem}{Theorem}[section]
\newtheorem{corollary}{Corollary}[section]
\newtheorem{lemma}{Lemma}[section]
\newtheorem{proposition}{Proposition}[section]
\newtheorem{definition}{Definition}[section]
\def\bdiv{%
\nonscript\mskip-\medmuskip\mkern5mu%
\mathbin{\operator@font div}\penalty900\mkern5mu%
\nonscript\mskip-\medmuskip}
\def\angle#1{\left\langle #1\right\rangle}
\newcommand{\rank}{\mbox{\sf rank}}
\newcommand{\select}{\mbox{\sf select}}
\newcommand{\access}{\mbox{\sf access}}
\newcommand{\excess}{\mbox{\sf excess}}
\newcommand{\parent}{\mbox{\sf parent}}
\newcommand{\depth}{\mbox{\sf depth}}
\newcommand{\open}{\mbox{\sf open}}
\newcommand{\close}{\mbox{\sf close}}
\newcommand{\nextexcess}{\mbox{\sf next-excess}}
\newcommand{\levelancestor}{\mbox{\sf level-ancestor}}
\newcommand{\levelsuccessor}{\mbox{\sf level-successor}}
\newcommand{\isancestor}{\mbox{\sf is-ancestor}}
\newcommand{\firstchild}{\mbox{\sf first-child}}
\def\ceil#1{\left\lceil #1\right\rceil}
\def\floor#1{\left\lfloor #1\right\rfloor}
\def\calp{\mbox{$\cal P$}}
\def\th{\mathit th}
\newcommand{\Rank}{\mbox{\sf p-rank}} 
\newcommand{\Select}{\mbox{\sf select}} 
\newcommand{\fullrank}{\mbox{\sf rank}}
\newcommand{\Under}{\mbox{\sf under}}
\begin{document}

\begin{frontmatter}

\title{Succinct Representations of Permutations and Functions\tnoteref{t1,t2}}
\tnotetext[t1]{Work supported in part by UISTRF project 2001.04/IT.}
\tnotetext[t2]{Preliminary versions of these results have appeared in the {\it Proceedings of International Colloquium on Automata, Languages and Programming (ICALP)} in 2003 and 2004.}

\author[UW]{J. Ian Munro}
\ead{imunro@uwaterloo.ca}

\author[UL]{Rajeev Raman}
\ead{rr29@leicester.ac.uk}

\author[IMSc]{Venkatesh Raman}
\ead{vraman@imsc.res.in}

\author[SNU]{S. Srinivasa Rao\corref{cor1}}
\ead{ssrao@cse.snu.ac.kr}

\cortext[cor1]{Corresponding author}

\address[UW]{School of Computer Science, University of Waterloo, Waterloo, ON, N2L 3G1, Canada} 
\address[UL]{Department of Computer Science, University of Leicester, Leicester, LE1 7RH, UK} 
\address[IMSc]{Institute of Mathematical Sciences, Chennai, 600 113, India} 
\address[SNU]{School of Computer Science and Engineering, Seoul National University, Seoul, 151-744, Republic of Korea}


\begin{abstract}
We investigate the problem of succinctly representing an arbitrary
permutation, $\pi$, on $\{0,\ldots,n-1\}$ so that $\pi^k(i)$ 
can be computed quickly for any $i$ and any (positive or negative)
integer power $k$. A representation taking $(1+\epsilon) n \lg n + O(1)$
bits suffices to compute arbitrary powers in constant time, for any positive
constant $\epsilon \le 1$.  
A representation taking the optimal $\ceil{\lg n!} + o(n)$ bits can be
used to compute arbitrary powers in $O(\lg n / \lg\lg n)$ time.

We then consider the more general problem of succinctly representing 
an arbitrary function, $f: [n] \rightarrow [n]$ so that $f^k(i)$ can 
be computed quickly for any $i$ and any integer power $k$. We give a 
representation that takes $(1+\epsilon) n \lg n + O(1)$ bits, for any 
positive constant $\epsilon \le 1$, and computes arbitrary positive 
powers in constant time.  It can also be used to compute $f^k(i)$, 
for any negative integer $k$, in optimal $O(1+|f^k(i)|)$ time.

We place emphasis on the \emph{redundancy}, or the space beyond 
the information-theoretic lower bound that the data structure uses 
in order to support operations efficiently.   A number of lower bounds
have recently been shown on the redundancy of data structures.  
These lower bounds confirm the space-time optimality of some of 
our solutions.  Furthermore, the redundancy of one of our 
structures ``surpasses'' a recent lower bound by 
Golynski~[Golynski, SODA 2009], thus demonstrating the 
limitations of this lower bound.
\end{abstract}

\begin{keyword}
Succinct data structures \sep Space redundancy \sep Permutations \sep Functions \sep Benes network \sep Succinct tree representations \sep Level ancestor queries
\end{keyword}

\end{frontmatter}

\section{Introduction}

For an arbitrary function $f$ from $[n] = \{0,\ldots,n-1\}$ 
to $[n]$, define $f^{k}(i)$, for all $i \in [n]$, and any integer $k$
as follows: 
$$f^k(i) = \left\{ 
		  \begin{array}{ll}
			i & \mbox{ when } k=0 \\
			 f(f^{k-1}(i)) & \mbox{ when } k>0 \mbox{ and } \\
			 \{j | f^{-k}(j) = i \} & \mbox{ when } k<0. 
			 \end{array}
		  \right. $$
We consider the following problem: we are given a specific and
arbitrary (static) function $f$ from $[n]$ to $[n]$
that arises in some application.
We want to represent $f$ (after pre-processing $f$)
in a data structure that, given $k$ and $i$ as parameters, 
rapidly returns the value of $f^{k}(i)$.  For the sake of
simplicity, in the rest of the paper we assume that the given 
number $k$ is bounded by some polynomial in $n$.

Our interest is in \emph{succinct}, or highly-space efficient, 
representations of such functions, whose space usage is close 
to the information-theoretic lower bound for representing such
a function.  Since there are $n^n$ functions from 
$[n]$ to $[n]$, such a function cannot be represented
in less than $\lceil n \lg n \rceil$ bits\footnote{$\lg$ denotes the %
logarithm base 2.}.  
Any amount of memory used by a data structure that represents such a function,
above and beyond this lower bound, is termed the \emph{redundancy} 
of the data structure. We also consider the case where 
$f$ is given as a ``black box'',
i.e. the data structure is given access to a routine to evaluate $f(i)$
for any $i \in [n]$; in this case any amount of memory whatsoever used
by the data structure is its {redundancy}.
The fundamental aim is to understand precisely the minimum redundancy 
required to support operations rapidly.  

Clearly, the above problem is trivial if space is not an issue. 
To facilitate the computation in constant time, one could 
store $f^k(i)$ for all $i$ and $k$ ($|k| \le n$, along with some 
extra information), but that would require $\Omega(n^2)$ words 
of memory. The most natural compromise is to retain the values 
of $f^k(i)$ where $2 \le k \le n$ is a power of $2$. This 
$\Theta(n \lg n)$-word representation easily yields a
logarithmic evaluation scheme.  Unfortunately, this representation 
not only uses non-linear space (and is relatively slow) but also 
does not support queries for the negative powers of $f$ efficiently. 
Given $f$ in a natural representation --- the sequence $f(i)$ 
for $i = 0,\dots, n-1$, or as a black box --- 
a highly space-efficient solution is to store no additional 
data structures (zero redundancy), and
to compute $f^k(i)$ in $k$ steps, for positive $k$. 
However, this is unacceptably slow for large $k$, and still 
does not address the issue of negative powers.

\subsection{Results}

Our results are primarily in the unit-cost RAM with word size
$\Theta(\log n)$ bits, where we measure the running time and
the bits of space used by an algorithm.  We also consider
the ``black-box'' model, known also as the \emph{systematic} model \cite{GP07},
where we look at the number of evaluations of $f$ in addition
to the running time and space (in bits) used by the algorithm.  
Lower bound results are discussed in either the black-box model or in the
\emph{cell-probe} model, where we consider the
space (in bits) used by the algorithm, and the running time is
the number of $w$-bit words of the data structure 
read by the algorithm to answer a query (and all other
computation is for free).  Finally, we also briefly consider
the \emph{bit-probe} model, which is the cell-probe model with $w=1$ \cite{bitprobe}.

\subsubsection{Permutations}

We begin by considering a special case, where the function is
a permutation (abbreviated hereafter as a {\it perm\/} 
\cite{Knuth}) of $ [n] = \{0,\ldots,n-1\}$.  This turns out not
only to be an interesting sub-case in its own right, but is also 
essential to our solution to the general problem.  Note that
for storing perms, the information-theoretic lower bound
is $\calp(n) = \lceil \lg n! \rceil \approx n \lg n - 1.44 n$ bits, so
the obvious representation (as an array storing $\pi(i)$ for $i=1,\ldots,n$) 
has redundancy $\Theta(n)$ bits (and of course does not support 
inverses or powers). We obtain the following results for representing perms:
\begin{enumerate}

\item We give a representation that uses $\calp(n) + O(n (\lg\lg n)^5 / (\lg n)^2)$ 
bits, and supports $\pi()$ and $\pi^{-1}()$ in $O(\lg n/\lg\lg n)$ time.

\item In the ``black box'' model, where access to the
perm is only through the $\pi()$ operation, we show how to
support $\pi^{-1}()$ in $O(t)$ time and at most $t+1$ evaluations of $\pi()$,
using $(n/t) (\lg n + \lg t + O(1))$ bits, for any $1 \le t \le n$.

\item Given a structure that represents a perm $\pi$ in space $S(n)$ bits,
and supports $\pi()$ and $\pi^{-1}()$ in time $t_f(n)$ and $t_i(n)$ 
respectively, we show how to represent a given perm $\pi'$ on $[n]$ in
space $S(n) + O(n \lg n/ \lg \lg n)$ bits (or
$S(n) + O(\sqrt{n} \lg n)$ bits) and support 
arbitrary powers of $\pi'$ in $t_f(n) + t_i(n) + O(1)$ time
(or $t_f(n) + t_i(n) + O(\lg \lg n)$ time, respectively).
\end{enumerate}
As corollaries, we get the following representations of perms:
%
\begin{enumerate}
\setcounter{enumi}{3}
\item one that uses ${\cal P}(n) + O((n/t) \lg n)$ bits, and
supports $\pi()$ in $O(1)$ time and $\pi^{-1}()$ in $O(t)$ time,
for any $t \le \lg n$.

\item one that uses ${\cal P}(n) + O((n/t) \lg n)$ bits and supports
$\pi^k()$ in $O(t)$ time for arbitrary $k$, for any 
$t \le \lg n$.

\item one that uses ${\cal P}(n) + O(n (\lg \lg n)^5/(\lg n)^2)$ bits and 
supports $\pi^k()$ in $O(\lg n / \lg \lg n )$ time for arbitrary $k$.
\end{enumerate}


\subsubsection*{Related Work}

Perms are fundamental in computer 
science and have been the focus of
extensive study.  A number of papers have dealt with issues pertaining
to perm generation, membership in perm groups etc.  
There has also been work on space-efficient representation 
of restricted classes of perms, such as the perms representing 
the lexicographic order of the suffixes of a 
string \cite{GV00,HMR}, or so-called approximately min-wise
independent perms, used for document similarity estimation \cite{BFCM}.
Our paper is the first to study the space-efficient
representation of general perms so that general powers can be computed
efficiently (however, see the discussion on Hellman's work 
in Section~\ref{subsec:motivation}).

Recently Golynski \cite{golynski-09,golynski-thesis} showed a number
of lower bounds for the redundancy of permutation representations.
He showed a space lower bound of $\Omega((n/t)\lg (n/t))$ bits for 
Item (2) for any algorithm that evaluates $\pi$ at most $t < n/2$ 
times \cite[Theorem 17]{golynski-thesis}.
Thus, (2) is asymptotically optimal for all $t = n^{1-\Omega(1)}$. 
Furthermore, Golynski \cite{golynski-09} showed that the redundancy
of (4) is asympotically optimal in the cell probe model with 
word size $w = \lg n$: specifically,
that any perm representation which supports $\pi()$ in $O(1)$ probes
and $\pi^{-1}()$ in $t$ probes, for any
$t \le (1/16)(\lg n /\lg \lg n)$, must
have asymptotically the same redundancy as (4). He also shows
that any perm that supports both $\pi()$ and $\pi^{-1}()$ in
at most $t$ cell probes, for any $t \le (1/16)(\lg n /\lg \lg n)$, must
have redundancy $\Omega(n (\lg \lg n)^2 / \lg n)$. 
In the preliminary version of this paper \cite{MRRR}, a
perm representation was given that supported $\pi()$ and
$\pi^{-1}()$ in $O(\lg n/\lg \lg n)$ time, and had
redundancy $\Theta(n (\lg \lg n)^2 / \lg n)$.  Golynski suggested that
the result of \cite{MRRR} was ``optimal up to constant factor in the cell
probe model''. However, we note that the lower bound is quite sensitive to the
precise constant in the number of probes: our result (1) obtains an asymptotically smaller
redundancy by using over $2 \lg n/\lg \lg n$ cell probes.

\subsubsection{Functions}

For general functions from $[n]$ to $[n]$, 
our main result is that we reduce the problem of 
representing functions to that of representing
permutations, with $O(n)$ additional bits.  As corollaries,
we get the following representations of functions, both of 
which use close to the information-theoretic minimum amount of
space, and answer queries in optimal time:
%
\begin{enumerate}
\item one that uses $n \lg n (1 + 1/t) + O(1)$ bits, and
supports $f^{k}(i)$ in $O(1 + |f^{k}(i)| \cdot t)$ time for any integer $k$, 
and for any $t \le \lg n / \lg \lg n$.

\item one that uses $n \lg n + O(n)$ bits and 
supports $f^{k}(i)$ in $O((1 + |f^{k}(i)|) \cdot (\lg n / \lg \lg n))$ time, 
for any integer $k$.
\end{enumerate}
Along the way, we show that an unlabelled static $n$-node rooted tree can be
represented using the optimal $2n+o(n)$ bits of space to answer
\emph{level-ancestor} --- given a node $x$ and a number $k$, to report the $i$-th ancestor 
of $x$ --- and \emph{level-successor/level-predecessor} queries ---  to report the 
next/previous node at the same level as the given node --- 
in constant time.   We represent the tree in $2n$ bits as a balanced parenthesis (BP) sequence.
The key technical contribution is to provide a $o(n)$-bit
index for \emph{excess search} in a BP sequence.  For a position $i$
in a BP sequence, $\excess(i)$ is the number of unclosed open parentheses up to that
position (this corresponds to the depth of a node in the tree represented by the BP).
The operation $\nextexcess(i,k)$, starting at a position $i$ in the BP sequence,
finds the next position $j$ whose excess is $k$; 
we support $\nextexcess$ in $O(1)$ time provided that
$j$'s excess is at most $(\lg n)^{c}$ below or above the excess of $i$ (i.e., $|k - \excess(i)| = O((\lg n)^{c})$), for any fixed constant $c \ge 0$.
To add standard navigational operations, one can use existing $o(n)$ bit indices
for BP sequences \cite{MR}.

\subsection*{Related work}
 
The problem of representing a function $f$ space-efficiently
in the ``black box'' model, so that $f^{-1}$ can be computed quickly,
was considered by Hellman \cite{hellman}.
Specialized to perms, Hellman's idea is similar to our
``black box'' representation for representing a perm and its inverse,
modulo some implementation details.
The version of the function powers problem that we consider is different:
whereas Hellman attempts, given $x$, to find any $y$ such that $f(y) = x$,
we enumerate all such $y$.  Furthermore, our solution does not use the
``black box'' model, and assumes space for representing $f$ in its entirety,
which is both unnecessary and prohibitive in Hellman's context.

Representing trees to support level-ancestor queries is a well-studied problem.
Solutions with $O(n)$ preprocessing time and $O(1)$ query time were given by Dietz
\cite{Dietz}, Berkman and Vishkin \cite{BV} and by Alstrup and Holm
\cite{AH}. A much simpler solution was given by Bender and
Farach-Colton \cite{BF}.  For a tree on $n$ nodes, all these solutions
require $\Theta(n)$ words, or $\Theta(n \lg n)$ bits, to represent the tree itself, and the additional
data structures stored to support level-ancestor queries also take $\Theta(n)$ words 
(level-successor/predecessor is trivial using $\Theta(n)$ words).

As noted above, our interest is in succinct tree representations.  We make a few
remarks about such representations, so as to better understand our contribution in
relation to others. 
Succinct tree representations can also be considered to be split into
a \emph{tree encoding} that takes $2n + o(n)$ bits, and an \emph{index}
of $o(n)$ bits for that tree encoding.  There are many tree encodings,
including BP \cite{MR}, DFUDS \cite{bdmrrr}, LOUDS \cite{Jacobson} and Partition \cite{GRR}, and it is not known
if they are equivalent, i.e. if there are operations that have $o(n)$ sized 
indices for one tree encoding and not the other.  Another feature is that
different tree encodings impose different numberings on the nodes of the tree.
Therefore, a result showing a succinct index for a particular operation in (say) BP
does not imply the existence of a succinct index for that operation in (say) LOUDS.
This matters from an application perspective because the only way to get a space-efficient
data structure that simultaneously supports operations $a$ and $b$, where
$a$ and $b$ are known to be supported only
by (say) LOUDS and BP-based tree encodings respectively, would be to encode the tree twice,
once each in LOUDS and BP  
and to maintain the correspondence between the LOUDS and BP numberings,
which would severely affect the space usage.

We provide $o(n)$-bit BP indices for the operations of level-ancestor and 
level-successor/predecessor, via excess search.  Geary et al. \cite{GRR} gave 
a $o(n)$-bit index for supporting level-ancestor in $O(1)$ time using the 
Partition encoding, but they did not provide support for level-successor/predecessor;
a $o(n)$-bit index for supporting these queries was announced by He et al. \cite{HMR-trees}.  
Very recently Sadakane and Navarro \cite{SN10} gave an alternative algorithm for excess 
search in BP and showed that excess search together with range-minimum queries suffice 
to support a wide variety of tree operations, among other things.  Their excess 
index is of smaller size,  but seems
not to support search for excess values greater than the starting point. 

\subsection{Motivation}
\label{subsec:motivation}

There are a number of motivations for succinct data structures in
general, many to do with text indexing or representing huge graphs
\cite{GV00,Jacobson,MR,RRR-SODA}.  
Work on succinct representation of a perm and its inverse was, for one
of the authors, originally motivated by a data warehousing
application. Under the indexing scheme in the system, the perm
corresponding to the rows of a relation sorted under any given key was
explicitly stored. It was realized that to perform certain joins, the
inverse of a segment of this perm was precisely what was required. The
perms in question occupied a substantial portion of the several
hundred gigabytes in the indexing structure and doubling this space
requirement (for the perm inverses) for the sole purpose of improving
the time to compute certain joins was inappropriate.  

Since the publication of the preliminary versions of these papers, the
results herein have found numerous applications, most notably to
the problem of supporting $\fullrank$ and $\Select$ operations over
strings of large alphabets \cite{Golynskietal}. Other
applications arise in Bioinformatics \cite{BYM}.  The more general
problem of quickly computing $\pi^{k}()$ also has number of
applications.  An interesting one is determining the $r^{\th}$ root of
a perm \cite{Pouyanne}.  Our techniques not only solve the $r^{\th}$
power problem immediately, but can also be used to find the $r^{\th}$
root, if one exists. Inverting a ``one-way'' function, particularly in the scenario
considered by Hellman \cite{hellman}, is a fundamental task in cryptography.

Finally, very recently a number of results have been shown that
focus on the redundancy of succinct data structures for various objects,
including \cite{GP07,golynski-07,golynski-09,PV10}; we have already
mentioned lower bounds on the redundancy of 
representing perms in particular.  This has been accompanied by
some remarkable
results on very low-redundancy data structures.  For example,
consider the simple task of representing a sequence of
$n$ integers from $[r]$, for some $r \ge 1$ to permit
random access to the $i$-th integer.  The naive bound of 
$n \ceil{\lg r}$ bits has redundancy $\Theta(n)$ bits
relative to the optimal $\ceil{n \lg r}$ bits.  Following the
first non-trivial result on this topic (\cite[Theorem~3]{MRRR}),
a line of work culminated in Dodis et al.'s remarkable result that $O(1)$-time
access can be obtained with effectively zero redundancy \cite{DPT-10}. 
We also note that the redundancy is often important in practice,
as the ``lower-order'' redunancy term in the space usage is
often significant for practical input sizes \cite{Geary-BP}.

The remainder of the paper is organized as follows. The next section
describes some previous results on indexable dictionaries used in
later sections.  
 Section~\ref{sec:perms} deals  with permutation representations.
In Section~\ref{sec:shortcut} we describe the `shortcut' method, and 
Section \ref{sec:benes} describes an optimal space
representation based on Benes networks. Both of these are representations
supporting $\pi()$ and $\pi^{-1}()$ queries, and we consider the optimality
of these solutions in Section~\ref{subsec:optimality}.
In Section~\ref{sec:powers1} we consider representations that support arbitrary powers.
Sections \ref{sec:level-anc} and \ref{sec:functs} deal with general function
representation. 
Section \ref{sec:level-anc} outlines new operations on balanced parenthesis sequences
which lead to 
an optimal-space tree representation that supports level-ancestor queries
along with various other navigational operations in constant time. 
Section \ref{sec:functs} describes a succinct representation of a function that
supports computing arbitrary powers in optimal time.

\section{Preliminaries}\label{sec:prelim}
Given a set $S \subseteq [m]$, $|S|=n$,  define the following operations:
\begin{description}
\item[$\fullrank(x,S)$:]  Given $x \in [m]$, return $|\{y \in S | y < x \}|$,
\item[$\Select(i, S)$:] Given $i \in [n]$, return the $i+1$-st smallest element in $S$,
\item[$\Rank(x, S)$:] Given $x \in [m]$,
 return $-1$ if $x \not \in S$ and $\fullrank(x,S)$ otherwise (the \emph{partial rank} operation).  
\end{description}
Furthermore, define the following data structures:

\begin{itemize}
\item 
A {\em fully indexable dictionary (FID)} representation for $S$ supports
$\fullrank(x,S)$, $\Select(i,S)$, $\fullrank(x,\bar{S})$ and $\Select(i,\bar{S})$
in $O(1)$ time.

\item An {\em indexable
 dictionary (ID)} $S$ supports $\Rank(x,S)$ and $\Select(i,S)$ in $O(1)$ time.
\end{itemize}
Raman, Raman and Rao \cite{RRR-SODA} show the following:
\begin{theorem} On the RAM model with wordsize $O(\lg m)$ bits:
\label{litomfid}
\begin{itemize}
\item[(a)]
There is a FID for a set $S \subseteq [m]$ of size $n$
using at most $\ceil {\lg {m \choose n}} + O(m \lg \lg m/\lg m)$ bits.
\item[(b)]
There is an ID for a set $S \subseteq [m]$ of size $n$
using at most $\ceil {\lg {m \choose n}} + o(n) + O(\lg \lg m)$ bits.
\end{itemize}
\end{theorem}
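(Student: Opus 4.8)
\emph{Proof plan.} I would prove (a) and (b) separately, building (b) on (a). \emph{Part (a), the FID.} Store the length-$m$ characteristic bit vector $B$ of $S$ (which has $n$ ones) in a blockwise \emph{enumerative} encoding with a multi-level directory on top. Chop $B$ into blocks of $u=\ceil{(\lg m)/2}$ bits; describe block $i$ by its \emph{class} $c_i\in\{0,\ldots,u\}$, its number of ones, and its \emph{offset} $o_i$, its index within a fixed enumeration of the $u$-bit strings with exactly $c_i$ ones. Keep the class sequence in a separate array at $\ceil{\lg(u+1)}$ bits per entry, and concatenate the offsets, spending $\ceil{\lg {u\choose c_i}}$ bits on $o_i$. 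For a fixed class sequence the number of compatible bit vectors is $\prod_i {u\choose c_i}$, and summing these products over all class sequences with total $n$ gives ${m\choose n}$; hence $\sum_i\lg {u\choose c_i}\le\lg {m\choose n}$, and since the ceilings cost at most one bit per block the offset string has length at most $\lg {m\choose n}+O(m/\lg m)$. The class array costs $O((m/u)\lg u)=O(m\lg\lg m/\lg m)$ bits, which dominates the redundancy (the short final block, if $m$ is not a multiple of $u$, is handled separately).

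To answer a query in $O(1)$ time I need the start position in the offset string of any given block, and the number of ones in all preceding blocks. For the latter, store prefix sums of the class array sampled at superblock granularity ($\Theta(\lg m)$ consecutive blocks) together with in-superblock relative sums, costing $O(m\lg\lg m/\lg m)$ bits; likewise for the former. Within-block rank and select, and reconstructing a block from its $(c_i,o_i)$ pair, are done by precomputed tables indexed by class, offset and position: since $2^u=O(\sqrt m)$ there are only $O(\sqrt m\cdot\mathrm{polylog}\,m)$ entries, fitting in $o(m)$ bits. Then $\fullrank(x,S)$ is the sampled prefix one-count up to $x$'s block, plus an in-superblock correction, plus a within-block rank from a table; and $\Select(i,S)$ first locates the block holding the $(i{+}1)$-st one by a constant-time search over the sampled class prefix sums (the standard two-level select machinery, adapted to the compressed stream, at $O(m/\lg m)$ extra bits), then finishes inside that block by table. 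The complement operations reuse the \emph{same} class array --- a block of class $c$ has complement-class $u-c$ --- and mirror tables, so they cost nothing extra. I expect the fiddliest point to be making both the block-boundary lookups and the one-counting prefix queries $O(1)$ while keeping the redundancy at $O(m\lg\lg m/\lg m)$.

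\emph{Part (b), the ID.} Only $\Rank$ (which may return $-1$ on non-members) and $\Select$ are needed, but the redundancy must drop to $o(n)+O(\lg\lg m)$. First dispose of the dense case: if $n=\omega(m\lg\lg m/\lg m)$ the FID of (a) already has redundancy $O(m\lg\lg m/\lg m)=o(n)$, so assume $n=O(m\lg\lg m/\lg m)$; then $\lg(m/n)=\Omega(\lg\lg m)$, so the target $\lg {m\choose n}=\Theta(n\lg(m/n))$ is comfortably larger than $n$. Now use an Elias--Fano-style split: choose a parameter $L$, write each $x\in S$ as $x=bL+\ell$ with $0\le\ell<L$, pack the low parts $\ell$ (in sorted-by-$x$ order) into $\ceil{n\lg L}+O(\lg m)$ bits with $O(1)$ access, and represent the non-decreasing sequence of high parts $b$ as a bit vector (the $0^{c_0}1\,0^{c_1}1\cdots$ encoding of its value multiplicities) equipped with a part-(a) FID; then $\Select(i,S)$ is immediate. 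Two things must be engineered: (1) pick $L$ so that the redundancies of the two pieces cancel against $n\lg(m/n)$ and leave only $o(n)$ --- roughly, let the low parts span a range a polylogarithmic factor below $m/n$, so the inherent $\Theta(n)$-type loss of the high-part FID is absorbed; (2) support $\Rank(x,S)$ for $x\in S$ --- after the high-part FID yields the contiguous index range of elements sharing $x$'s high part, one is left with a partial-rank query among their sorted, distinct low parts, handled by recursing the ID construction on that (polynomially smaller) sub-universe, so that only $O(1)$ levels occur and the costs telescope. The main obstacle throughout (b) is precisely this redundancy accounting: naive Elias--Fano already loses $\Theta(n)$ bits, so the split parameter and the recursion must be tuned so every ``$+O(n)$'' is really ``$+O(n/\mathrm{polylog}\,n)$'' summing to $o(n)$, while every operation stays $O(1)$.
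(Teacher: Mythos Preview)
This theorem is not proved in the present paper: it is quoted from Raman, Raman and Rao~\cite{RRR-SODA} as a black box in the preliminaries, so there is no ``paper's own proof'' to compare against here. That said, your plan is essentially a faithful reconstruction of the constructions in~\cite{RRR-SODA}: for part~(a) the block-by-block enumerative code (class/offset pairs on $\Theta(\lg m)$-bit blocks with a two-level directory and $O(\sqrt m\,\mathrm{polylog}\,m)$-sized lookup tables) is exactly the RRR FID, and your space accounting is correct; for part~(b) the high/low split with a FID on the high parts and recursion on the low parts within each bucket is again the RRR ID, and your observation that the dense case is already covered by~(a) and that only $O(1)$ levels of recursion are needed matches their argument. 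The one place to be careful in an actual write-up is the $O(\lg\lg m)$ term in~(b), which in~\cite{RRR-SODA} arises from storing the recursion parameters and a small number of global pointers; your sketch does not make this explicit, but it is a bookkeeping detail rather than a gap.
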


\section{Representing Permutations}
\label{sec:perms}

\subsection{The Shortcut Method}\label{sec:shortcut}

We first provide a space-efficient representation (based on Hellman's idea) that supports $\pi^{-1}()$ in the ``black box'' model.  Recall that
in the ``black box'' model, the perm is accessible only
through calls of $\pi()$.
Let $t \geq 2$ be a parameter. We trace the cycle structure
of the perm $\pi$, and for every cycle whose length $k$ is greater
than $t$, the key idea is to associate
with some selected elements, a \emph{shortcut pointer} to an element
$t$ positions prior to it.  Specifically, let $c_0, c_1, \ldots, c_{k-1}$
be the elements of a cycle of the perm $\pi$ such that
$\pi(c_i)=c_{{(i+1)}\bmod{k}}$, for $i=0,1, \ldots, k-1$.
We associate shortcut pointers with the indices
whose $\pi$ values are $c_{i t}$, for $i = 0,1, \ldots, l = \floor {k/t}$,
and the shortcut pointer value at
$c_{it}$ stores the index whose $\pi$ value is $c_{((i-1) \bmod (l+1))t}$,
for $i = 0,1, \ldots, l$ (see Fig.~\ref{fig:shortcut}).  Let $s \leq
n/t$ be the number of shortcut pointers after doing this for every
cycle of the perm and let $d_1 < d_2 < \ldots < d_s$ be the elements
associated with shortcut pointers.

\begin{figure}[htbp]
\centering
\epsfxsize 12cm
\epsfbox{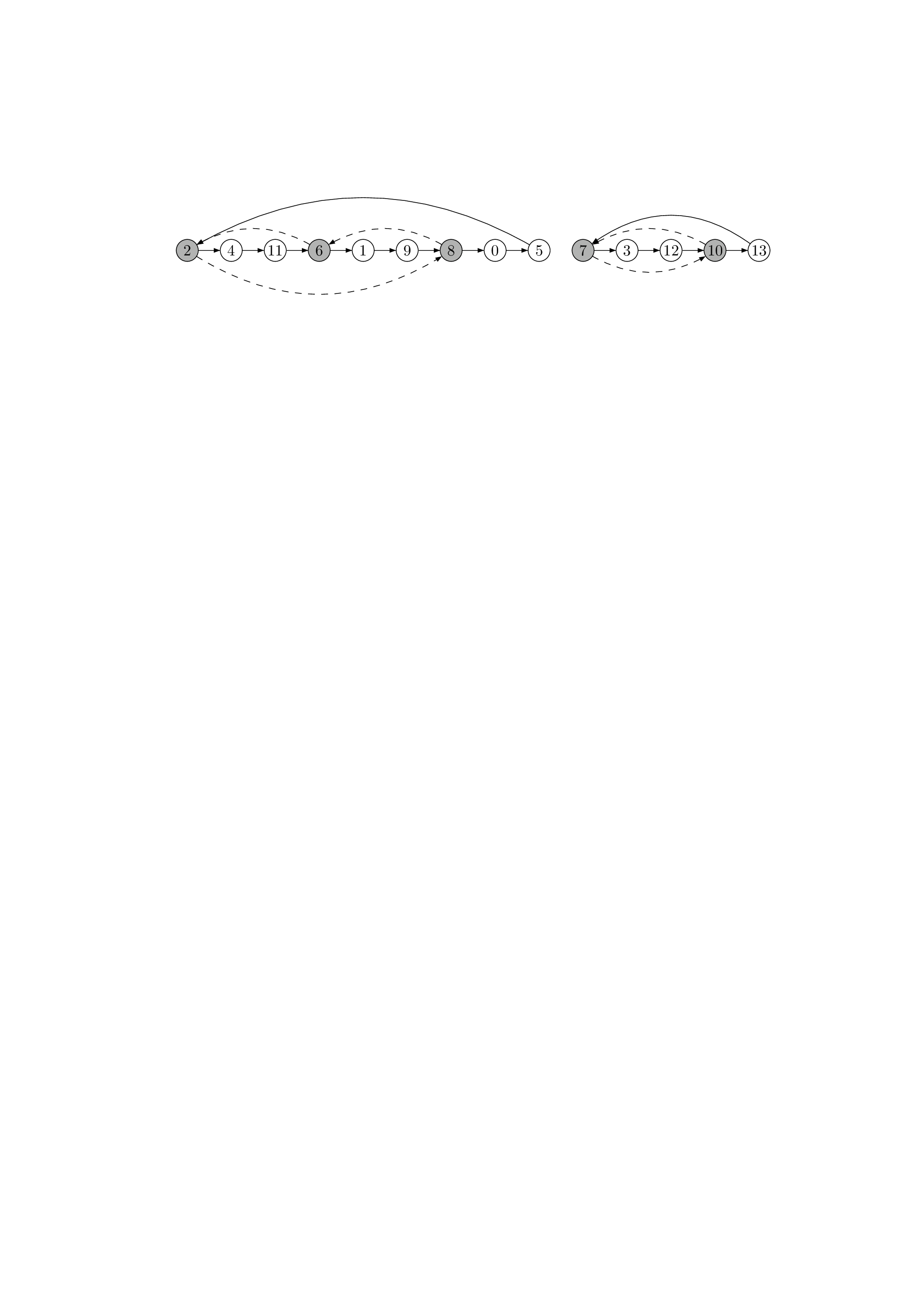}
\caption{Shortcut method. Solid lines denote the perm, %
and the dotted lines denote the shortcut pointers. The shaded %
nodes indicate the positions having shortcut pointers. }
\label{fig:shortcut}
\end{figure}

We store the set $\{d_i\}$ in a data structure $D$ that is
an instance of the indexable dictionary (ID) of
Theorem~\ref{litomfid}(b). Given an index $i$, $D$ allows us
to test if a particular element has a shortcut pointer with it, and if so, returns its position in the
set $\{d_i\}$.  We store the sequence $\{s_i\}$, where $s_i$ is the
shortcut pointer associated with $d_i$ in an array $S$.
The following procedure computes
$\pi^{-1} (x)$ for a given $x$:
\begin{tabbing}
xxxx\=xxxx\=xxxx\=xxxx\=xxxx\=xxxx\=xxxx\=xxxx\=xxxx\=xxxx\=xxxx\=xxxx\=xxxx\=\kill

$i := x$;\\
\>{\bf while} $\pi (i) \neq x$ {\bf do}\\
\>\> {\bf if} $i \in D$ and $\Rank(i,D) = r$  \>\>\>\>\>\>\>$/\!/\,$both found by querying $D$ \\
\>\>\>{\bf then} $j := S[r]$;\\
\>\>\>{\bf else} $j := \pi (i)$;\\
\>\>$i := j;$\\
\>{\bf endwhile}\\
{\bf return $i$}
\end{tabbing}
Since we have a shortcut pointer for every $t$ elements of a cycle,
the number of $\pi()$ evaluations made by the algorithm is at most
$t+1$, and all other operations take $O(1)$ time by Theorem~\ref{litomfid}.
By the standard approximation
$\lceil \lg {n \choose s} \rceil = s (\lg (n/s) + O(1))$, we see that
the space used by $D$ is at most $(n/t) (\lg t + O(1))$ bits.  The space
used by $S$ is clearly $s \lceil \lg n \rceil = s (\lg n + O(1))$.
Thus we have:

\begin{theorem}
\label{thm:shortcuts2}
Given an arbitrary permutation $\pi$ on $[n]$ as a ``black box'',
and an integer $1 \le t \le n$, there is a data structure that
uses at most $(n/t) (\lg n + \lg t + O(1))$ bits that allows
$\pi^{-1} ()$ to be computed in at most $t+1$ evaluations of $\pi()$,
plus $O(t)$ time.
\end{theorem}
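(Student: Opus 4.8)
The construction of the dictionary $D$, the array $S$ and the tracing procedure are already in place, so the plan is to verify three things: that the procedure returns $\pi^{-1}(x)$, that it makes at most $t+1$ calls to $\pi()$ together with $O(t)$ other work, and that $D$ and $S$ jointly occupy $(n/t)(\lg n + \lg t + O(1))$ bits. I would run the whole argument one cycle of $\pi$ at a time, since $\pi$ acts independently on its cycles and the shortcut pointers of each long cycle are installed in isolation.

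For correctness, fix the cycle $c_0, c_1, \ldots, c_{k-1}$ through $x$, with $\pi(c_i) = c_{(i+1) \bmod k}$ and $x = c_j$, so that $\pi^{-1}(x) = c_{(j-1) \bmod k}$. If $k \le t$ the cycle carries no shortcut pointers; the loop is then a plain forward walk $c_j, c_{j+1}, \ldots$ along the cycle, which by the halting test $\pi(i) = x$ stops exactly on reaching $c_{(j-1) \bmod k}$, after at most $k - 1 \le t - 1$ steps. If $k > t$, the shortcut pointers behave as ``$-t$ along the cycle'' jumps; I would show that starting at $c_j$ the walk first advances under $\pi$ until it meets a shortcut-equipped position, which --- since consecutive shortcut-equipped positions are at most $t$ apart along the cycle, counting also the short wrap-around gap of size less than $t$ --- happens within $t$ forward steps and lands at a position at most $t$ ahead of $c_{(j-1) \bmod k}$; a single shortcut dereference then places the walk at a position at most $t$ behind $c_{(j-1) \bmod k}$ but still before the next shortcut-equipped position, so a short forward run reaches $c_{(j-1) \bmod k}$ without overshoot, where the test halts the loop. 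The cycle positions actually visited then form a run of at most $t+1$ consecutive positions --- the two forward segments glued by the single $-t$ jump --- so $\pi()$ is evaluated at most $t+1$ times, while every other step is a single $\Rank(\cdot,D)$ query at a visited position plus a constant number of look-ups in $S$, each $O(1)$ by Theorem~\ref{litomfid}(b), for $O(t)$ further time. The degenerate configurations --- $c_{(j-1) \bmod k}$ being itself shortcut-equipped, and the wrap-around segment --- I would dispatch by direct inspection, confirming the loop still halts at the right element within the same bounds.

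For the space bound, $D$ stores a set of size $s \le n/t$ drawn from $[n]$, so by Theorem~\ref{litomfid}(b) and the estimate $\ceil{\lg {n \choose s}} = s(\lg(n/s) + O(1))$ it uses at most $(n/t)(\lg t + O(1))$ bits; the array $S$ of $s$ pointers of $\ceil{\lg n}$ bits each uses $(n/t)(\lg n + O(1))$ bits; summing these two gives $(n/t)(\lg n + \lg t + O(1))$.

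The main obstacle is the combined correctness-and-counting claim for long cycles: one must show at once that the interleaving of forward walking and shortcut jumping never loops, always halts precisely at $c_{(j-1) \bmod k}$, and in doing so touches no more than $t+1$ distinct cycle positions. This forces a careful analysis of where a shortcut jump lands relative both to $c_{(j-1) \bmod k}$ and to the two shortcut-equipped positions flanking it, carried out uniformly over the length-$t$ interior segments, the single shorter wrap-around segment, and the boundary cases; the space bound and the per-step cost accounting are routine given Theorem~\ref{litomfid}.
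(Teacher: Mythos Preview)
Your proposal is correct and follows the same approach as the paper: the paper's own argument is essentially the one-line assertion that, because shortcuts are spaced every $t$ positions along each long cycle, the tracing procedure makes at most $t+1$ evaluations of $\pi()$, together with exactly the space calculation you give via $\lceil \lg \binom{n}{s} \rceil = s(\lg(n/s) + O(1))$ and $s\lceil\lg n\rceil$ for $D$ and $S$. Your plan simply unpacks that assertion cycle by cycle and makes explicit the case split between short and long cycles, which is precisely the intended reading.
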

We get the following easy corollary:
\begin{corollary}
\label{cor1}
There is a representation of an arbitrary perm $\pi$ on $[n]$ using at most
${\calp}(n) + O((n/t) \lg n))$ for any $1 \le t \le \lg n$ that supports
$\pi()$ in $O(1)$ time and $\pi^{-1}$ in $O(t)$ time.
\end{corollary}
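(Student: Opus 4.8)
The plan is to combine the trivial explicit representation of $\pi$ with the black-box structure of Theorem~\ref{thm:shortcuts2}. First I would store $\pi$ as a plain array $A[0\,..\,n-1]$ with $A[i]=\pi(i)$, each entry using $\lceil \lg n\rceil$ bits. This immediately supports $\pi()$ in $O(1)$ time, and it occupies $n\lceil \lg n\rceil \le n\lg n + n$ bits; since $\calp(n)=\lceil \lg n!\rceil = n\lg n - \Theta(n)$ by the standard estimate quoted in the introduction, the array uses $\calp(n)+O(n)$ bits.

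Next I would layer on top of this array the auxiliary data structures $D$ (an ID) and $S$ (the shortcut-pointer array) of the Shortcut Method, instantiated with the given parameter $t$, exactly as in Theorem~\ref{thm:shortcuts2}. That theorem is phrased in the black-box model, but the array $A$ provides precisely the $O(1)$-time access to $\pi()$ that the black-box algorithm needs, so it applies verbatim: the $\pi^{-1}()$ procedure makes at most $t+1$ calls to $\pi()$, plus $O(t)$ further work, and $D$ together with $S$ occupies $(n/t)(\lg n + \lg t + O(1)) = O((n/t)\lg n)$ additional bits.

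It then remains to add up the space and time. The total space is $n\lceil \lg n\rceil + (n/t)(\lg n + \lg t + O(1)) = \calp(n) + O(n) + O((n/t)\lg n)$. The one small point to check is that the $O(n)$ redundancy of the plain array is absorbed by the allowed $O((n/t)\lg n)$ term; this holds over the entire stated range because $t\le \lg n$ forces $(n/t)\lg n \ge n$, so the total is $\calp(n)+O((n/t)\lg n)$. For the running time, $\pi()$ is a single array lookup, and in the $\pi^{-1}()$ procedure each of the $\le t+1$ evaluations of $\pi()$ now costs $O(1)$ instead of being counted as a black-box call, so the procedure finishes in $O(t)$ time.

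The main (indeed essentially the only) obstacle is the bookkeeping just described: verifying that the $\Theta(n)$-bit redundancy of the explicit array is dominated by $O((n/t)\lg n)$ throughout $1\le t\le \lg n$, and observing that the systematic/black-box hypothesis of Theorem~\ref{thm:shortcuts2} is trivially satisfied by an explicit array with $O(1)$-time evaluation. No new combinatorial idea is needed beyond Theorem~\ref{thm:shortcuts2}.
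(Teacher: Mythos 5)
Your proof is correct and matches the paper's argument essentially verbatim: store $\pi$ as a plain array to get $O(1)$-time $\pi()$ at cost $\calp(n)+O(n)$ bits, layer the shortcut structure of Theorem~\ref{thm:shortcuts2} on top, and observe that $t\le\lg n$ makes the $O(n)$ term of the array dominated by $O((n/t)\lg n)$.
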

\begin{proof}
We represent $\pi$ naively as an array taking
$n \lceil \lg n \rceil$  = ${\calp}(n) + O(n)$ bits,
and allowing $\pi()$ to be computed in $O(1)$ time,  and apply
Theorem~\ref{thm:shortcuts2}.  The space bound follows since for $t \le \lg n$,
$(n/t)(\lg n + \lg t + O(1)) = \Omega(n)$.
\end{proof}
\smallskip
\noindent
\emph{Remark:} Choosing $t = \lceil (1/\epsilon) \rceil$ for any
constant $\epsilon > 0$ in Corollary~\ref{cor1} we get a representation
of a permutation $\pi$ on $[n]$ in $(1+\epsilon)  n \lg n$ bits where
$\pi()$ and $\pi^{-1}$ both take $O(1)$ time.
\smallskip

\subsection{Representations based on the Benes network}
\label{sec:benes}

\subsubsection{The Benes Network}

The results in this section are based on the Benes network, a
communication network composed of a number of \emph{switches},
which we now briefly outline (see \cite{LeightonBook} for details).
Each switch has two inputs $x_0$ and $x_1$ and two
outputs $y_0$ and $y_1$ and can be
configured either so that $x_0$ is connected to $y_0$ (i.e. a packet
that is input along $x_0$ comes out of $y_0$) and $x_1$ is connected
to $y_1$, or the other way around.  An $r$-Benes network has $2^r$
inputs and $2^{r}$ outputs, and is defined as follows.  For $r=1$, the Benes
network is a single switch with two inputs and two outputs.  An
$(r+1)$-Benes network is composed of $2^{r+1}$ switches and two
$r$-Benes networks, connected as shown in Fig.~\ref{fig:benes}(a).
A particular setting of the switches of a Benes network {\it
realises\/} a perm $\pi$ if a packet introduced at input $i$ comes out
at output $\pi(i)$, for all $i$ (Fig.~\ref{fig:benes}(b)).  The
following properties are either easy to verify or well-known
\cite{LeightonBook}.

\begin{itemize}
\item An $r$-Benes network has $r 2^r - 2^{r-1}$ switches, and every
path from an input to an output passes through $2r-1$ switches;
\item For every perm $\pi$ on $[2^r]$ there is a setting of the
switches of an $r$-Benes network that realises $\pi$.
\end{itemize}
\begin{figure}[htbp]
\begin{center}
\epsfxsize \textwidth \epsfbox{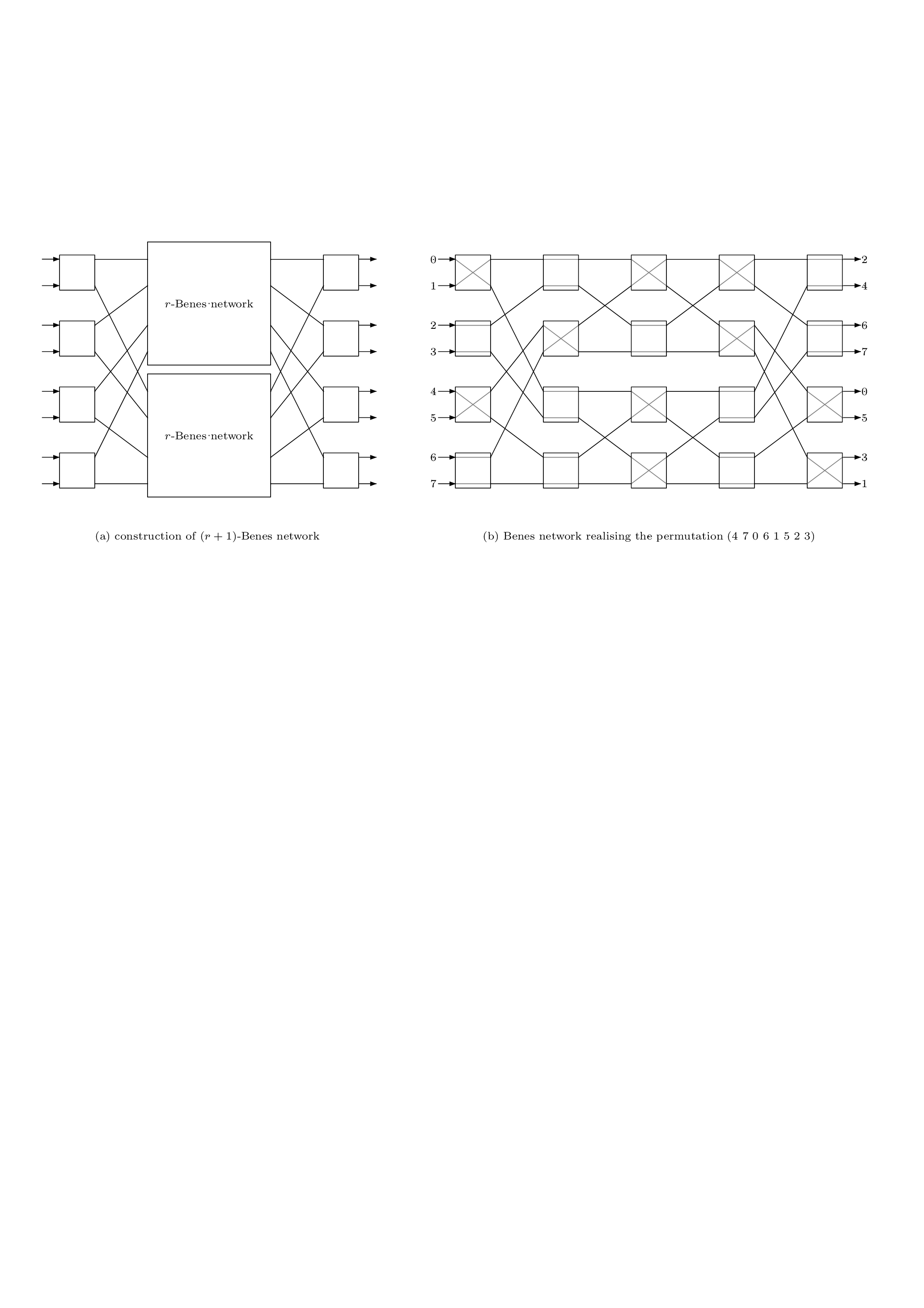}
\end{center}
\caption{The Benes network construction and an example}
\label{fig:benes}
\end{figure}

Clearly, Benes networks may be used to represent perms.
If $n = 2^r$, a representation of a perm
$\pi$ on $[n]$ may be obtained by configuring an $r$-Benes network to
realize $\pi$ and then listing the settings of the switches in some
canonical order (e.g. level-order).  This represents $\pi$ using $r
2^r - 2^{r-1}$ = $n \lg n - n/2$ bits.
Given $i$, one can trace the path taken by a packet
at input $i$ by inspecting the appropriate bits in
this representation, and thereby compute $\pi(i)$; by
tracing the path back from output $i$ we can likewise
compute $\pi^{-1}(i)$.  The time taken is clearly
$O(\lg n)$; indeed, the algorithm only makes
$O(\lg n)$ bit-probes.  To summarize:
\begin{proposition}
\label{prop:simple}
When $n = 2^r$ for some integer $r>0$, there is a representation of
an arbitrary perm $\pi$ on $[n]$ that uses $n \lg n - n/2$ bits and
supports the operations $\pi()$ and $\pi^{-1}()$ in $O(\lg n)$ time.
\end{proposition}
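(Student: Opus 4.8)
The plan is to let the data structure be nothing more than the bit-vector $B$ that lists, in level-order, the setting of each of the $r2^r - 2^{r-1}$ switches of an $r$-Benes network configured to realise $\pi$ --- one bit per switch: $0$ for the ``straight'' configuration $x_0\!\to\!y_0,\ x_1\!\to\!y_1$, and $1$ for the ``crossed'' one. Such a configuration exists by the well-known property quoted just above, and since $n = 2^r$ we have $r2^r - 2^{r-1} = n\lg n - n/2$, which gives the stated space bound immediately. Everything that remains is to justify that $\pi(i)$ and $\pi^{-1}(i)$ can each be read off from $B$ in $O(\lg n)$ time.

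For $\pi(i)$ I would simulate the packet that enters the network at input $i$. At each of the $2r-1$ levels the packet sits at one input port of a uniquely determined switch; from the level-order numbering one computes that switch's index in $B$ in $O(1)$ time, probes its bit, and thereby learns from which of its two output ports the packet leaves; the fixed wiring between consecutive levels then determines which switch, and which of its input ports, the packet reaches at the next level. After $2r-1 = O(\lg n)$ such steps the packet is at output $\pi(i)$, which is returned. Computing $\pi^{-1}(i)$ is entirely symmetric: start at output $i$ and trace the (unique) path \emph{backwards} through the network, using each switch's bit to decide which input port feeds the port currently occupied. Each step touches a constant number of machine words --- indeed, a single bit-probe plus $O(1)$ arithmetic --- so both operations run in $O(\lg n)$ time.

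The one point that needs care, and which I expect to be the main (if modest) obstacle, is making precise the claim that, given a switch's index in the level-order numbering and the port the packet occupies, the index and port of the next switch can be obtained in $O(1)$ time. This amounts to writing the wiring of the Benes network down explicitly: the first $r$ levels form a standard butterfly whose inter-level edges are a fixed bit-permutation of the port labels, and the last $r$ levels mirror this, the $2^r$ outer switches of each recursive stage interleaving the two half-size subnetworks. Under the level-order numbering these maps reduce to a few shifts and bit extractions on $O(\lg n)$-bit integers, all $O(1)$ on the RAM; alternatively, since the topology is fixed and independent of $\pi$, one may precompute $o(n)$-size navigation tables. Note that the proposition claims nothing about preprocessing time, so the (constructive, via the looping algorithm) proof of the realisability property need not be revisited here; once the switch settings are fixed, the rest is exactly the bookkeeping above.
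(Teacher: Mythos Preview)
Your proposal is correct and follows essentially the same approach as the paper: store the level-order listing of switch settings of an $r$-Benes network realising $\pi$, and answer $\pi()$ and $\pi^{-1}()$ by tracing the packet forward or backward through the $2r-1$ levels, one bit-probe per level. The paper's argument is in fact terser than yours---it simply asserts that the path can be traced in $O(\lg n)$ time (indeed $O(\lg n)$ bit-probes) without spelling out the constant-time per-level navigation that you elaborate on.
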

However, the Benes network
has two shortcomings from our viewpoint: firstly, the Benes network
is defined only for values of $n$ that are powers of 2.  In order
to represent a perm with $n$ not a power of 2, rounding up $n$
to the next higher power of 2 could double the space usage, which
is unacceptable.  Furthermore, even for $n$ a power of 2, representing
a perm using a Benes network uses ${\cal P}(n) + \Omega(n)$ bits.

We now define a family of Benes-like networks that admit greater
flexibility in the number of inputs, namely the $(q,r)$-Benes
networks, for integers $r\ge 0, q >1$.
\begin{definition}
A {\it $q$-permuter\/} to be a communication network that has $q$ inputs
and $q$ outputs, and realises any of the $q!$ perms of its
inputs (an $r$-Benes network is a $2^r$-permuter).
\end{definition}
\begin{definition}
A $(q, r)$-Benes network is a $q$-permuter for $r = 0$, and for $r > 0$ it
is composed of $q 2^r$ switches and two $(q, r-1)$-Benes networks,
connected together in exactly the same way as a standard Benes network.
\end{definition}

\begin{lemma}
\label{lem:modben}
Let $q > 1, r \ge 0$ be integers and take $p = q 2^r$.  Then:
\begin{enumerate}
\item A $(q,r)$-Benes network consists of $q 2^{r-1} (2r-1)$ switches and
$2^{r}$ $q$-permuters;
\item For every perm $\pi$ on $[p]$ there is a setting of the switches
of a $(q,r)$-Benes network that realises $\pi$.
\end{enumerate}
\end{lemma}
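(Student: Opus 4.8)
The plan is to prove both parts of Lemma~\ref{lem:modben} by induction on $r$, mirroring the classical analysis of the standard Benes network but carrying the parameter $q$ through unchanged.

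For part (1), the structural claim, I would argue as follows. By definition, a $(q,0)$-Benes network is a single $q$-permuter, which matches the formula: $q 2^{-1}(2\cdot 0 - 1) = -q/2$ switches (i.e.\ zero switches, interpreting the count correctly at the base) and $2^0 = 1$ $q$-permuter. More cleanly, I would take $r \ge 1$ and induct. An $(q,r)$-Benes network has, by definition, a front layer of $q 2^r$ switches (arranged as $q 2^{r-1}$ pairs, matching the standard Benes pattern), then two $(q,r-1)$-Benes networks in parallel, then presumably a symmetric back layer --- but actually in the standard Benes recursion the switches are split as a front column and a back column of $q 2^{r-1}$ switches each, totalling $q 2^r$; I would follow whichever convention the paper's Definition fixes (it says ``$q 2^r$ switches and two $(q,r-1)$-Benes networks, connected in exactly the same way as a standard Benes network''). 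By the induction hypothesis each $(q,r-1)$-Benes network has $q 2^{r-2}(2r-3)$ switches and $2^{r-1}$ $q$-permuters. Hence the total switch count is $q 2^r + 2 \cdot q 2^{r-2}(2r-3) = q 2^{r-1}\big(2 + (2r-3)\big) = q 2^{r-1}(2r-1)$, and the total number of $q$-permuters is $2 \cdot 2^{r-1} = 2^r$, as claimed. This is essentially a one-line recurrence unrolling.

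For part (2), the routing claim, I would again induct on $r$, exactly as in the textbook proof (e.g.\ \cite{LeightonBook}) that any permutation is realisable on a Benes network, the only change being that the two inner sub-networks are $(q,r-1)$-Benes networks rather than $(r-1)$-Benes networks, and the base case uses a $q$-permuter (which by Definition realises all $q!$ permutations of its inputs) rather than a single switch. For $r = 0$, $p = q$, and a $q$-permuter realises every perm on $[q]$ by definition. For $r \ge 1$, given a target perm $\pi$ on $[p]$ with $p = q 2^r$, I group the $p$ inputs into $q 2^{r-1}$ consecutive pairs, each pair feeding one front switch, and similarly group the outputs into pairs, each fed by one back switch. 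I then need to set the front and back switches so that the two members of each input pair are routed to the two different inner sub-networks, and symmetrically the two members of each output pair come from the two different inner sub-networks; this is the standard ``two-coloring'' argument. Form the bipartite multigraph on $q 2^{r-1}$ ``input-pair'' vertices and $q 2^{r-1}$ ``output-pair'' vertices, with an edge for each $i \in [p]$ joining the pair containing $i$ to the pair containing $\pi(i)$. Every vertex has degree exactly $2$, so this multigraph is a disjoint union of even cycles, which is therefore properly $2$-edge-colourable; the two colour classes tell us which sub-network each packet uses, and each sub-network then receives a permutation on $[q 2^{r-1}] = [p/2]$ of its own inputs, which it realises by the induction hypothesis. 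Setting the front and back switches according to the colouring completes the routing.

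The main obstacle --- really the only subtle point --- is verifying that the $2$-edge-colouring argument goes through cleanly with the generalised wiring, i.e.\ that the ``connected in exactly the same way as a standard Benes network'' clause in the Definition does indeed pair up inputs and outputs in the way the colouring argument needs, and that each inner $(q,r-1)$-Benes network receives a genuine permutation of its $p/2$ inputs (no collisions, all outputs hit) so that the induction hypothesis applies. Since the wiring pattern is identical to the classical one and only the ``leaves'' of the recursion differ (a $q$-permuter in place of a single $2$-permuter), this verification is routine: the colouring/cycle-decomposition argument never inspects the internal structure of the sub-networks, only the pairing of external ports, so the classical proof transfers verbatim. I would therefore present part (2) by citing the standard Benes routing argument and remarking only on the two modifications (inner networks are $(q,r-1)$-Benes; base case is a $q$-permuter).
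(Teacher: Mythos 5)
Your proof of part~(2) is correct and is exactly the argument the paper has in mind when it says ``(2) can be proved in the same way as for a standard Benes network'': the $2$-edge-colouring of the bipartite multigraph of input/output pairs carries over unchanged, with the base of the recursion handled by the $q$-permuter rather than a single switch.

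Part~(1), however, has a genuine gap, and in fact your own base-case check exposes an error in the formula as printed. Your unrolling of the recurrence $S(r) = q 2^r + 2S(r-1)$ is algebraically fine as an inductive step, but the base fails: at $r=0$ the formula gives $q2^{-1}(2\cdot 0-1) = -q/2$ while a $(q,0)$-Benes network is a single $q$-permuter with \emph{zero} switches, and at $r=1$ the formula gives $q$ while the definition directly yields $S(1) = q2^1 + 2\cdot 0 = 2q$. You noticed the $-q/2$ value but waved it away as ``interpreting the count correctly at the base''; it is a real inconsistency, not a notational artifact. Solving $S(r)=q2^r+2S(r-1)$ with $S(0)=0$ gives $S(r)=qr2^r$, which differs from the stated $q2^{r-1}(2r-1)$ by $q2^{r-1}$. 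The count $qr2^r$ is in fact the one the paper itself uses in the proof of Lemma~\ref{lem:level1} (the space bound ``$qr2^r + 2^r\calp(q)$'' bits), and a sanity check with $q=2$ confirms it: a $(2,r)$-Benes network is an $(r+1)$-Benes network, which has $(r+1)2^{r+1}-2^r = (2r+1)2^r$ switches, and $qr2^r + 2^r$ (counting the $2^r$ $2$-permuters as single switches) equals $r2^{r+1}+2^r = (2r+1)2^r$, whereas the stated formula comes up $2^r$ short. So the lemma as printed contains an error; the right move would have been to flag the discrepancy and correct the formula to $qr2^r$ rather than push the induction through a base case that does not hold.
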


\begin{proof}
(1) is obvious; (2) can be proved in the same way as for a standard
Benes network. 
\end{proof}

We now consider representations based on $(q,r)$-Benes networks; a
crucial component is the representation of the central $q$-permuters, which
we address in the next subsection.
Since we are not interested in designing communication networks as such,
we focus instead on ways to represent the perms represented
by the central $q$-permuters in
optimal (or very close to optimal) space and operate on it -- specifically,
to compute $\pi()$ and $\pi^{-1}()$ on the perms  represented
by the $q$-permuters -- in the bit-probe, cell-probe or RAM model.
This is sufficient to compute $\pi()$ and $\pi^{-1}$ in the
$(q,r)$ Benes network at large.

\subsubsection{Representing Small Perms}

In this section we consider the highly space-efficient representation of
``small'' perms to use as a central $q$-permuter in a $(q,r)$-Benes network.
It is straightforward (as noted in Section~\ref{subsec:optimality}) to represent
a perm on $[q]$, $q = O(\lg n/\lg \lg n)$ and operate on it in
the cell-probe model, or by table lookup in the RAM model.
As we will see, the larger
we can make our central $q$-permuters (while keeping optimal
space and reasonable processing times), the lower the redundancy of
our representation.  With this in mind, we now
give a method for asymptotically larger values of $q$.  We
use the following complexity bounds for integer multiplication
and division using the fast Fourier Transform \cite{CLRS}:

\begin{lemma}\label{lem:division}
Given a number $A$ occupying $m$ words and another number $B \le A$,
one can compute the numbers $(A \bmod B)$ and $(A \bdiv B)$ in
$O(m \lg m)$ time.
\end{lemma}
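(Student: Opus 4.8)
The plan is to take the FFT-based multiplication bound as the given fact --- two integers occupying at most $\ell$ words each can be multiplied in $O(\ell \lg \ell)$ time on the word RAM \cite{CLRS} --- and to reduce the computation of $A \bdiv B$ and $A \bmod B$ to a sequence of such multiplications whose sizes decrease geometrically, plus $O(1)$ additions and comparisons on $O(m)$-word numbers. First observe that, since $B \le A$ and $A$ occupies $m$ words, both $Q := A \bdiv B$ and $A \bmod B = A - QB$ occupy at most $m$ words; hence once $Q$ is known, $A \bmod B$ is obtained by one multiplication of $O(m)$-word numbers and one subtraction, i.e.\ in $O(m \lg m)$ time. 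So it is enough to compute $Q$ within the same bound.

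To compute $Q$ I would approximate $1/B$ by Newton's iteration and then multiply by $A$. Fix an integer $k$ with $2^k > A$ (only the symbol $2^k$ is ever manipulated; the magnitude of $k$ never enters the time bound), and let $R^{\ast} = \lfloor 2^k / B \rfloor$, so that $\lfloor A\,R^{\ast}/2^k \rfloor$ differs from $Q$ by at most an absolute constant. From a short (say two-word) initial approximation $R_0$ to $2^k/B$ read off the leading words of $B$, run $R_{i+1} = 2 R_i - \lfloor B\,R_i^2 / 2^k \rfloor$ with a fixed rounding convention; writing $R_i = (2^k/B)(1 - \delta_i)$, one checks $\delta_{i+1} = \delta_i^2$ up to the rounding error, i.e.\ the number of correct leading bits of $R_i$ roughly doubles at each step. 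Carrying \emph{both} $R_i$ and a truncation of $B$ to only $\Theta(2^i)$ leading words at step $i$ (with a constant number of guard words, and rescaling $2^k$ accordingly), the multiplications at step $i$ cost $O(2^i \lg 2^i)$; the $m$-word precision needed to recover $Q$ is reached after $O(\lg m)$ such steps, and since the step costs form essentially a geometric series their total is $O(m \lg m)$.

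Finally, with the resulting $R$ accurate to $m + O(1)$ words, compute $\widetilde{Q} := \lfloor A\,R/2^k \rfloor$ by one multiplication of $O(m)$-word numbers followed by a shift, in $O(m \lg m)$ time; by the accuracy of $R$ we have $\widetilde{Q} \in \{Q - c, \dots, Q + c\}$ for an absolute constant $c$. Testing each of the $O(1)$ candidates $j \in \{\widetilde{Q}-c, \dots, \widetilde{Q}+c\}$ --- forming $A - jB$ by one $O(m)$-word multiplication and subtraction and checking $0 \le A - jB < B$ --- identifies the unique $j$ equal to $Q$, and then $A - jB = A \bmod B$. The whole computation runs in $O(m \lg m)$ time, as claimed. (Recursive divide-and-conquer long division is an alternative, but solving its recurrence gives roughly $O(m \lg^2 m)$ unless extra care is taken, so the reciprocal-Newton route is cleaner.)

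The step I expect to be the main obstacle is the error analysis of the \emph{truncated} Newton iteration: one must verify that discarding all but $\Theta(2^i)$ words of $R_i$ and of $B$, together with the rounding in $\lfloor B R_i^2/2^k \rfloor$, contributes an error that stays comfortably below the quadratic-convergence gain, so that a fixed number of guard words per step suffices and the final $\widetilde{Q}$ is within $O(1)$ of $Q$. Routine side issues --- choosing and bounding the accuracy of $R_0$, handling the degenerate cases where $B$ is a small constant or is close to $A$, the normalization/rescaling of $2^k$ between steps, and confirming that the $O(\ell \lg \ell)$ word-RAM multiplication bound applies at the word sizes in play --- also need to be dispatched but present no real difficulty.
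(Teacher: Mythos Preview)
Your proof sketch is correct and follows the standard Newton-iteration approach to fast integer division. However, the paper does not actually prove this lemma: it is stated without proof, prefaced by ``We use the following complexity bounds for integer multiplication and division using the fast Fourier Transform \cite{CLRS}'', and simply cited as a known result. So there is nothing to compare against; your argument is essentially the textbook derivation that the citation points to.
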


\begin{lemma}\label{lem:small}
If $q \le (\lg n)^2 / (\lg \lg n)^4$, then there is
a representation of an arbitrary perm $\pi$ on $[q]$ using $\calp(q)$
bits that supports $\pi(i)$ and $\pi^{-1}(i)$ in $O(\lg n / \lg\lg n)$ time.
This assumes access to a set of precomputed constants that depend on $q$ and
can be stored in $O(q^2 \lg q)$ bits and also precomputed tables of size
$\sqrt{n} (\lg n)^{O(1)}$ bits.
\end{lemma}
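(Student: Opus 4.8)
The plan is to encode a permutation $\pi$ on $[q]$ by its image sequence $\langle \pi(0), \pi(1), \ldots, \pi(q-1)\rangle$ interpreted as a single large integer in the mixed-radix (factorial-like) number system: view the choice of $\pi(0)$ as a digit in radix $q$, then $\pi(1)$ as a digit in radix $q-1$ over the remaining elements, and so on. Concretely, if $N = \sum_{j=0}^{q-1} d_j \prod_{\ell < j}(q-\ell)$ where $d_j$ is the rank of $\pi(j)$ among $\{0,\ldots,q-1\}\setminus\{\pi(0),\ldots,\pi(j-1)\}$, then $N$ ranges over $[q!]$, so storing $N$ takes exactly $\lceil \lg q! \rceil = \calp(q)$ bits. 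The number $N$ occupies $m = O(\calp(q)/\lg n) = O(q \lg q / \lg n)$ words; since $q \le (\lg n)^2/(\lg\lg n)^4$ we have $m = O(\lg n / (\lg\lg n)^2)$, which is $o(\lg n)$.

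The key step is to show how to extract a single digit $d_j$ — and the running subset of already-used values — in the claimed time. To compute $\pi(j)$ we need $d_0, \ldots, d_j$: reading off $d_j$ from $N$ amounts to one division of an $m$-word number by the appropriate (single- or double-word) radix product and taking the quotient mod the next radix, which by Lemma~\ref{lem:division} costs $O(m \lg m) = O(\lg n / \lg\lg n)$ time (after $\lg m = O(\lg\lg n)$ absorbs into the bound). But doing this $j$ times would cost $O(q \cdot \lg n/\lg\lg n)$, which is too slow. Instead I would precompute, as part of the $O(q^2 \lg q)$-bit constants, the partial radix products $\prod_{\ell<j}(q-\ell)$ for all $j$, so that a single division of $N$ by $\prod_{\ell \le j}(q-\ell)$ yields a remainder whose top contribution isolates $d_j$ in $O(m\lg m)$ time regardless of $j$ — i.e., one division recovers $d_0,\ldots,d_j$ simultaneously as a packed sequence. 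Then the remaining work is to convert the digit sequence $d_0, \ldots, d_j$ into the actual value $\pi(j)$, which requires maintaining which of the $q$ values are still unused after $j$ steps and selecting the $d_j$-th unused one. This is a rank/select problem over a bitvector of length $q$; since $q = \sqrt{n}\,(\lg n)^{O(1)}$, a single such operation — and the incremental updates — can be done by table lookup into tables of size $\sqrt{n}(\lg n)^{O(1)}$ bits, handling $O(\lg n/\lg\lg n)$-bit chunks at a time, so the whole conversion also runs in $O(\lg n/\lg\lg n)$ time.

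For $\pi^{-1}(i)$ I would run essentially the reverse process: I must find the index $j$ such that $\pi(j) = i$. Reading the digits $d_0, d_1, \ldots$ in order from $N$ (again via precomputed partial products, in $O(\lg n/\lg\lg n)$ total time) and simulating the "remove the $d_j$-th remaining element" process, I stop as soon as the element removed equals $i$; the step number is $\pi^{-1}(i)$. This again reduces to a bounded number of rank/select operations on a length-$q$ bitvector together with the digit extraction, all supported by the precomputed tables, in $O(\lg n/\lg\lg n)$ time.

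The main obstacle is arranging the digit extraction so that it is done \emph{once} rather than $q$ times: a naive per-digit peeling of the factorial-base representation costs $\Theta(q)$ divisions. The fix — precomputing the cumulative radix products $\prod_{\ell<j}(q-\ell)$ in $O(q^2\lg q)$ bits (each such product has $O(q\lg q)$ bits and there are $q$ of them) and performing a single big-integer division per query — is what makes the time bound work, and verifying that these constants fit in the stated $O(q^2\lg q)$ bits and that the packed digit sequence can be unpacked and converted to $\pi(j)$ (or used to locate $\pi^{-1}(i)$) by $O(\lg n/\lg\lg n)$-bit table lookups into tables of size $\sqrt n (\lg n)^{O(1)}$ bits is the technical heart of the argument. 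A secondary check is that $q \le (\lg n)^2/(\lg\lg n)^4$ is exactly the threshold that keeps $m\lg m = O(\lg n/\lg\lg n)$, so the division cost does not exceed the target.
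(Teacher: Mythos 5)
Your encoding (mixed-radix/factorial number system), the use of Lemma~\ref{lem:division} for big-integer division, the precomputed partial radix products, and the final digit-to-value conversion by table lookup all match the spirit of the paper's argument.  However, there is a genuine gap at the crucial step, and the proposal also contains a misstatement of the size of $q$.

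The gap is in the claim that ``one division recovers $d_0,\ldots,d_j$ simultaneously as a packed sequence.''  It does not.  The integer $N \bmod \prod_{\ell\le j}(q-\ell)$ is again a mixed-radix number whose binary representation is \emph{not} the concatenation of the $O(\lg q)$-bit digits $d_0,\ldots,d_j$; to turn it into a flat bit array of digits you must carry out further divisions, and you need essentially all of $d_0,\ldots,d_j$ to determine $\pi(j)$ (or to simulate the removal process for $\pi^{-1}$), since each digit is a rank relative to the still-unused set.  The naive strategy — peel off digits one at a time — costs $\Theta(q)$ divisions, which is far too slow.  This is exactly the technical point the paper solves with a divide-and-conquer decode: with $m$ the word-length of $R$, split the digit range in half, compute $R'=R\bdiv(q/2)!$ and $R''=R\bmod(q/2)!$ in $O(m\lg m)$ time against a precomputed divisor, recurse on both halves (noting $|\lg R' - (\lg R)/2| = O(q)$ keeps the halves balanced to within $O(m/\lg m)$ words), and bottom out with single-word table lookup.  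The recurrence $T(m)=m\lg m+T(m_1)+T(m_2)$ then solves to $O(m(\lg m)^2)$, and with $m=O(\lg n/(\lg\lg n)^3)$ this gives the claimed $O(\lg n/\lg\lg n)$.  Without some such scheme, your ``single division'' step does not produce a lookup-table-ready digit array, and the whole $O(\lg n/\lg\lg n)$ time bound collapses.

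Two smaller points.  First, you write $q = \sqrt{n}\,(\lg n)^{O(1)}$; in fact $q \le (\lg n)^2/(\lg\lg n)^4$ is polylogarithmic — it is the lookup \emph{tables} that have size $\sqrt{n}(\lg n)^{O(1)}$ bits, a budget set so that tables indexed by roughly $(\lg n)/2$-bit chunks fit.  Second, $m = \calp(q)/\lg n = O(\lg n/(\lg\lg n)^3)$, not $O(\lg n/(\lg\lg n)^2)$; this is a harmless overestimate for a single division, but it obscures why the extra $(\lg m)^2 = O((\lg\lg n)^2)$ factor from the divide-and-conquer decode still lands within $O(\lg n/\lg\lg n)$.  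Your treatment of the digit-to-value step (maintaining the unused set and doing rank/select over a short bitvector by chunked table lookup) is a reasonable alternative to the paper's incremental computation of $\Under(x,i)$; both work once the packed digit sequence is actually available, which is the part you have not established.
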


\begin{proof}
We represent a perm $\pi$ over $[q]$ as a sequence $r(0), r(1), \dots, r(q-1)$,
where $r(0) = 0$ and for $1 \le i < q$,
$r(i) = |\{j < i | \pi(j) < \pi(i) \}|$
is the rank of $\pi(i)$ in the set $\{ \pi(0), \pi(1), \dots, \pi(i-1) \}$.
This sequence is viewed as
a $q$-digit number in a ``mixed-radix'' system, where
the $i$-th digit $r(i)$ is from $[i+1]$, representing the
integer $R = \sum_{i=0}^{q-1} i! r(i)$.  The perm
$\pi$ is encoded by storing $R$ in binary:
since $R$ is an integer from $[q!]$, the space used by the encoding
is $\calp(q)$ bits, and $R$ is stored in $m = O(\lg n/(\lg \lg n)^3)$
words. To compute $\pi()$ or $\pi^{-1}()$,
we first decode the sequence $r(0),\ldots,r(q-1)$ from $R$ in
$O(m (\lg m)^2)$ time, and from this seqeunce compute $\pi()$ and
$\pi^{-1}()$ in $O(m \lg m)$ and $O(m)$ time respectively,
for an overall running time of $O(m (\lg m)^2) = O(\lg n/\lg \lg n)$.
We now describe these steps, assuming
for simplicity that $q$ is a power of 2.

To decode $R$, we first obtain representations
$R'$ and $R''$ of the sequences of digits
$r(q-1),r(q-2),\ldots,r(q/2)$, and $r(q/2-1),\ldots,r(0)$ as
$R' = (R \bdiv (q/2)!)$ and $R'' = (R \bmod (q/2)!)$ in $O(m \lg m)$ time,
and recurse. When recursing, note that $\lg R' - (\lg R)/2 = O(q)$ bits, so the
lengths of $R'$ and $R''$ are equal to within $O(m/\lg m)$ words.  Standard
arithmetic, plus table lookup, is used once the integer to be decoded fits
into a single word. Thus, the recurrence is:
\begin{eqnarray*}
 T(m) &=& m \lg m + T(m_1) + T(m_2)\\
 T(1) &=& O(1)
\end{eqnarray*}
where $m_1 + m_2 \le m + 1$ and $|m_j - m/2| = O(m/\lg m)$ (for $j = 1, 2$), which
clearly solves to $O(m (\lg m)^2)$.  (It is assumed that the divisors
at each level of the recursion such as $(q/2)!$ at the top level,
$(q/4)!$ and $(3q/4)(3q/4 - 1)\cdots(q/2)$ at the next level etc.
are pre-computed (but these depend on $q$ only, and are independent
of the perm $\pi$).

We partition
the sequence $r(q-1),\ldots,r(0)$ into \emph{chunks} of
$ c= \lceil \frac{1}{2}(\lg n/\lg q) \rceil$ consecutive numbers each; each
chunk fits into a single word and the number of chunks is $O(m)$.
Define $\Under(x,i)$ as the number of values in
$\pi(q-1),\ldots,\pi(i)$ that are $\le x$.
As $r(q-1) = \pi(q-1)$, $\Under(x,q-1)$ is immediate.
 Further observe that:
\begin{itemize}
\item if $r(i) = x - \Under(x,i+1) - 1$ then $\pi(i) = x$;
\item if $r(i) < x - \Under(x,i+1) - 1$ then $\pi(i) < x$;
\item if $r(i) > x - \Under(x,i+1) - 1$ then $\pi(i) > x$.
\end{itemize}
Thus, $\Under(x,i)$ is easily computed from $\Under(x,i+1)$ and $r(i)$.
Given $\Under(x,i)$ and a chunk $r(i-1),\ldots,r(i-c)$ one can
perform all the following tasks in $O(1)$ time using table lookup:
\begin{itemize}
\item compute $\Under(x,i-c)$;
\item determine if there is a $j \in [i-1,i-c]$ such that
$\pi(j) = x$;
\item given a position $j \in [i-1,i-c]$, determine whether $\pi(j)$
$\le x$ or $> x$.
\end{itemize}
This gives an $O(m)$-time algorithm for computing $\pi^{-1}()$ and
an $O(m \lg m)$-time algorithm for computing $\pi()$ (via binary search).
\end{proof}

\subsubsection{Representing Larger Perms}

We will now use the representation of Lemma~\ref{lem:small},
to represent larger permutations via the Benes network.
We begin by showing:

\begin{proposition}
\label{prop:approx}
 For all integers $p, t \ge 0$, $p \ge t$ there is an integer $p' \ge p$
such that $p' = q 2^\ell$ and $p' < p(1 + 1/t)$, for integers $q$ and $\ell$
where $t < q \le 2t$ and $\ell \ge 0$.
\end{proposition}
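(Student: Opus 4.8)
The plan is to take $q$ to be an integer in the range $(t, 2t]$ of a special form — namely, $q = \lceil p/2^\ell \rceil$ for a suitably chosen $\ell$ — so that $p' := q\,2^\ell$ is both $\ge p$ and not much bigger than $p$. First I would choose $\ell$ to be the largest nonnegative integer such that $\lceil p/2^\ell \rceil > t$; equivalently, $\ell = \max\{0, \lfloor \lg(p/t)\rfloor\}$ or something close to it, and I would verify that with this choice $q := \lceil p/2^\ell\rceil$ satisfies $t < q \le 2t$. The upper bound $q \le 2t$ holds because increasing $\ell$ by one roughly halves $\lceil p/2^\ell\rceil$, so by maximality of $\ell$ the value at $\ell$ is at most about twice $t$ (one has to be slightly careful with the ceiling and with the boundary case $p \le 2t$, where $\ell = 0$ and $q = p \le 2t$ works directly, using the hypothesis $p \ge t$ to get $q > t$ — or $q = t$, in which case a trivial adjustment handles the strictness).

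Next I would bound $p' = q\,2^\ell$ from above. By definition of the ceiling, $q = \lceil p/2^\ell\rceil < p/2^\ell + 1$, so $p' = q\,2^\ell < p + 2^\ell$. It remains to show $2^\ell \le p/t$, which gives $p' < p + p/t = p(1 + 1/t)$ as required. This inequality $2^\ell \le p/t$ follows from the fact that $\lceil p/2^\ell\rceil > t$ forces $p/2^\ell > t$ ... wait, more carefully: $\lceil p/2^\ell\rceil > t$ with $t$ an integer means $\lceil p/2^\ell \rceil \ge t+1$, hence $p/2^\ell > t$, i.e. $2^\ell < p/t$, which is even stronger than what we need. (When $\ell = 0$ this reads $1 \le p/t$, i.e. $p \ge t$, which is the hypothesis.) Finally $p' = q\,2^\ell \ge (p/2^\ell)\,2^\ell = p$ trivially from $q \ge p/2^\ell$, and $\ell \ge 0$ by construction.

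The only real subtlety — the step I'd expect to need the most care — is the boundary behavior of the ceiling function and the strict inequality $q > t$ versus $q = t$. Specifically, when $p$ is exactly a power of two times $t$, or just above, the "largest $\ell$ with $\lceil p/2^\ell\rceil > t$" could land on $q = t+1$, $q=2t$, or in degenerate small cases could require $\ell=0$; I would handle $p \le 2t$ as an explicit separate case (take $\ell = 0$, $q = p$, noting $t \le p \le 2t$, and if $p = t$ exactly we may instead take $q = 2p = 2t$ with $\ell$ decreased — but $\ell$ is already $0$, so in fact the cleanest fix is to require $q$ in the half-open range by defining $\ell$ as the largest integer with $\lceil p/2^\ell\rceil \ge t+1$, which automatically gives $q \ge t+1 > t$). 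Once the case analysis is set up this way, everything else is the routine arithmetic sketched above.
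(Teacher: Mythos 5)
Your approach is essentially the same as the paper's: set $q = \lceil p/2^\ell \rceil$ for the $\ell$ that places $p/2^\ell$ in $(t,2t]$ (equivalently your maximal-$\ell$ formulation), then bound $p' = q\,2^\ell < p + 2^\ell$ and use $p/2^\ell > t \Rightarrow 2^\ell < p/t$ to conclude. Your observation that $p = t$ is genuinely problematic is a sharper reading than the paper's proof, which silently assumes such an $\ell$ exists — and indeed the proposition as literally stated fails at $p = t$ (any $q > t$ and $\ell \ge 0$ force $p' > t$, yet $p' < t+1$ is required); in the paper's application $p$ is always much larger than $t$, so this edge case is harmless.
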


\begin{proof}
Take $q$ to be $\ceil{p/2^\ell}$, where $\ell$ is the integer that
satisfies $t < p/2^\ell \le 2t$.  Note that $p' < (p/2^\ell + 1)\cdot 2^r =
p(1 + 2^r/p) < p(1+1/t)$.
\end{proof}

Now we describe the necessary modifications to the Benes network.
Although no new ideas are needed, a little care is needed to minimize
redundancy.

\begin{lemma}
\label{lem:level1}
For any integer $p \le n$, if $p = q 2^r$ for integers $q$ and $r$ such that
$(\lg n)^2/2(\lg\lg n)^4 < q \le (\lg n)^2/(\lg\lg n)^4$
and $r \ge 0$, then there is a representation of an arbitrary
perm $\pi$ on $[p]$ that uses ${\calp}(p) + \Theta((p \lg q)/q)$ bits,
and supports $\pi()$ and $\pi^{-1}()$ in $O(r + \lg n/ \lg\lg n)$ time each.
This assumes access to a pre-computed table of size $O(\sqrt n (\lg n)^c)$
bits that does not depend upon $\pi$, for some constant $c > 0$.
\end{lemma}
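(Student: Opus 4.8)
The plan is to take the $(q,r)$-Benes network of Lemma~\ref{lem:modben}, realise $\pi$ on it, and represent it as two parts: the switch settings of the outer ``butterfly'' layers, stored verbatim as a bit string, and the $2^r$ central $q$-permuters, each stored using the optimal $\calp(q)$-bit encoding of Lemma~\ref{lem:small}. First I would recall that by Lemma~\ref{lem:modben}(1) the network has $q2^{r-1}(2r-1)$ switches outside the central permuters, and $2^r$ central $q$-permuters. Storing one bit per outer switch costs $q2^{r-1}(2r-1)$ bits, and storing the $2^r$ central perms costs $2^r\calp(q)$ bits; concatenating these in level order (and within a level, in a fixed order of switches, then the permuters left to right) gives the encoding. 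To support $\pi()$, trace the packet for input $i$ through the $2r-1$ switch layers, reading one bit per switch crossed to decide whether it goes straight or crosses; when the packet enters central permuter number $b$ at its input-port $j$, invoke the $\pi()$ operation of that permuter's Lemma~\ref{lem:small} representation to get its output port, then continue tracing out through the remaining $r$ layers. The operation $\pi^{-1}()$ is symmetric, tracing backwards from output $i$ and calling $\pi^{-1}()$ on the relevant central permuter. Each of the $O(r)$ outer-layer hops is $O(1)$ time (a bit-probe plus arithmetic to locate the next switch), and the single central-permuter call costs $O(\lg n/\lg\lg n)$ by Lemma~\ref{lem:small} (whose hypothesis $q \le (\lg n)^2/(\lg\lg n)^4$ is exactly what the lemma assumes), so the total is $O(r + \lg n/\lg\lg n)$, and the precomputed tables are those of Lemma~\ref{lem:small}, of size $O(\sqrt n (\lg n)^{O(1)})$, independent of $\pi$.

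The space accounting is the crux of getting the stated $\calp(p)+\Theta((p\lg q)/q)$ redundancy rather than the naive $\calp(p)+\Omega(p)$ of Proposition~\ref{prop:simple}. With $p=q2^r$, the outer-switch cost is $q2^{r-1}(2r-1) = (p/2)(2r-1) = pr - p/2$ bits, and the central cost is $2^r\calp(q) = (p/q)\calp(q)$ bits. Since $\calp(q) = \lceil\lg q!\rceil = q\lg q - \Theta(q)$, we have $(p/q)\calp(q) = p\lg q - \Theta(p)$. Meanwhile $\calp(p) = \lceil\lg p!\rceil = p\lg p - \Theta(p) = p(\lg q + r) - \Theta(p)$. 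So the total encoding size is
\[
  (pr - p/2) + (p\lg q - \Theta(p)) \;=\; p\lg p - \Theta(p) + O(p)
\]
and the gap between this and $\calp(p)$ is $O(p)$ bits from the crude estimate — I must be more careful. The right way is to note that the information-theoretic content of the $(q,r)$-Benes network, when switch settings are chosen by the canonical realisation algorithm, is exactly the $2^r$ central perms (each uniformly from $[q!]$) together with the outer switches, and the standard Benes routing lemma shows the outer switches carry no redundancy beyond what is forced; more concretely, one compares $2^r\calp(q) + (\text{outer bits})$ against $\calp(p)$ using $\lg p! = \lg((q2^r)!)$ and the identity that the number of perms of $[q2^r]$ equals $(q!)^{2^r}$ times the number of valid outer-switch configurations, so the outer-switch bit count $q2^{r-1}(2r-1)$ overcounts the needed $\lg(\#\text{configs})$ by at most $O(2^r q) = O(p/q \cdot q) $ — here I should instead recurse: the redundancy $\rho(p)$ satisfies $\rho(q2^r) = \rho(\text{outer}) + 2^r\rho(q)$ where unrolling the standard Benes redundancy recurrence against $\lg\binom{2p}{p}$-type bounds at each of the $r$ levels contributes $\Theta(2^{r-i})$ per level, summing to $\Theta(2^r) = \Theta(p/q)$ times the per-permuter redundancy of $\Theta(\lg q)$ — wait, Lemma~\ref{lem:small} gives zero redundancy per permuter; the $\Theta((p\lg q)/q) = \Theta(2^r \lg q)$ term must therefore come from the difference $q2^{r-1}(2r-1)$ versus the true entropy of the outer network, which is $\lg\big((q2^r)!/(q!)^{2^r}\big)$.

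So the real computation I would carry out: the encoding uses $E(p) := q2^{r-1}(2r-1) + 2^r\calp(q)$ bits; the lower bound is $\calp(p) = \lceil\lg(q2^r)!\rceil$; and I need $E(p) - \calp(p) = \Theta(2^r\lg q) = \Theta((p\lg q)/q)$. Expanding with $\calp(q) = \lg q! + O(1)$ and $\lg(q2^r)! = \lg\binom{q2^r}{q,q,\ldots,q}\big/\text{(stuff)}$ — cleanest is: $\lg(q2^r)! = 2^r\lg q! + \lg\frac{(q2^r)!}{(q!)^{2^r}}$, and the outer network of a standard $2^r$-level Benes construction over $q$-permuters realises exactly the $\frac{(q2^r)!}{(q!)^{2^r}}$ cosets, with $q2^{r-1}(2r-1)$ switch bits; by the classical fact that a $2^m$-input Benes network's $m2^{m}-2^{m-1}$ switches over-represent by exactly $2^{m-1}-\lg(2^m)!+\ldots$ — I'll just bound $\lg\frac{(q2^r)!}{(q!)^{2^r}} \ge 2^r\lg q! \cdot 0 + (\text{something})$ and directly verify, using $\lg\frac{(q2^r)!}{(q!)^{2^r}} = q2^r\lg 2^r - \Theta(q2^r) = q2^r r - \Theta(p)$, that $E(p)-\calp(p) = (q2^{r-1}(2r-1) - q2^r r + \Theta(p)) + O(2^r) = (-q2^{r-1} + \Theta(p)) + O(2^r)$, so the dominant surviving term is $\Theta(p)$, not $\Theta((p\lg q)/q)$ — which means my encoding is too wasteful and I must instead, at the bottom level, \emph{not} store outer switches verbatim but absorb them, OR store the central perms together with a bit more cleverness. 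The main obstacle, and where I expect to spend real effort, is precisely this: achieving redundancy $\Theta((p\lg q)/q)$ and not $\Theta(p)$ forces me to avoid paying one full bit per outer switch; the fix (which I believe is what the authors do) is to observe that each pair consisting of a switch and the perm-rank it feeds can be packed jointly — i.e., replace ``one bit per outer switch plus $\calp(q)$ per central perm'' by a single mixed-radix integer over the whole network whose value lies in $[p!]$ up to a factor of $(1+1/q)^{O(p/q)}$, so that the redundancy telescopes to $O(p/q)\cdot\lg(\text{small factor}) = \Theta((p\lg q)/q)$. I would set this up by encoding, for each central permuter together with its two adjacent half-columns of switches, one number in a radix that is a power of two times $q!$, concatenate these $2^r$ numbers, and show the binary length exceeds $\calp(p)$ by only $\sum$ of $O(1)$-bit rounding terms — one per permuter — plus the recursive contribution, which sums to $\Theta(2^r\lg q)$ after accounting for the $r$ levels of the butterfly being merged in at cost $O(\lg q)$ bits each per central block.
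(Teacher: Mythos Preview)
Your high-level plan is exactly the paper's: realise $\pi$ on a $(q,r)$-Benes network, store the outer switch settings as raw bits, store each of the $2^r$ central $q$-permuters via Lemma~\ref{lem:small}, and trace a packet through $O(r)$ outer layers plus one central-permuter call. The query-time analysis is fine.

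The space analysis, however, derails for two concrete and fixable reasons, and the entire second half of your proposal is chasing a phantom.

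First, the switch count. You took $q2^{r-1}(2r-1)$ from the statement of Lemma~\ref{lem:modben}, but that statement contains a typo: unrolling the recursion $T(r)=q2^r+2T(r-1)$, $T(0)=0$, gives $T(r)=qr2^r$ switches in $2r$ layers (and indeed the proof of the present lemma says ``the outer $2r$ layers'' and uses $qr2^r$ in the calculation). With the correct count the spurious $-p/2$ term disappears.

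Second, and more importantly, you wrote $\calp(q)=q\lg q-\Theta(q)$ and $\calp(p)=p\lg p-\Theta(p)$, absorbing the Stirling constant $\lg e$ into unnamed $\Theta$-terms that then refuse to cancel. Keep it explicit: $\calp(m)=m\lg(m/e)+\Theta(\lg m)$. Then
\[
\calp(p)=q2^r\bigl(r+\lg q-\lg e\bigr)+\Theta(\lg p)=qr2^r+2^r\!\cdot q\lg(q/e)+\Theta(\lg p),
\]
while the encoding uses
\[
qr2^r+2^r\calp(q)=qr2^r+2^r\bigl(q\lg(q/e)+\Theta(\lg q)\bigr).
\]
The $qr2^r$ and $2^r q\lg(q/e)$ terms match exactly, leaving redundancy $\Theta(2^r\lg q)=\Theta((p\lg q)/q)$. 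No joint packing, no mixed-radix tricks, no telescoping over cosets is needed: storing one honest bit per outer switch already achieves the stated redundancy, because the $p\lg e$ contributions from $\calp(p)$ and from $2^r\calp(q)$ are identical and cancel.
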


\begin{proof}
Consider the $(q,r)$-Benes network that realizes the perm $\pi$, and
represent this network as follows. List all the switch settings of the
outer $2r$ layers of switches as in Proposition~\ref{prop:simple}, and
represent each of the central $q$-permuters using Lemma~\ref{lem:small}.
The representation of Lemma~\ref{lem:small} requires pre-computed
tables of size $O(\sqrt{n} (\lg n)^c)$ bits (for some constant $c > 0$),
which can be shared over all the applications of the lemma.
We now calculate the space used. Note that:
\begin{eqnarray*}
\calp{}(p) & = & p \lg (p/e) + \Theta(\lg p)
           ~ = ~ q 2^r ( r + \lg q - \lg e) + \Theta(\lg p)\\
           & = & q r 2^r + 2^r (q \lg (q/e)) + \Theta(\lg p)
\end{eqnarray*}
By Lemma~\ref{lem:modben} and Lemma~\ref{lem:small} the
space used by the above representation (excluding lookup tables)
is $q r 2^r + 2^r \calp(q) = q r 2^r + 2^r (q \lg (q/e) + \Theta(\lg q)) =
\calp(p) + \Theta((p \lg q)/q)$.

The running time for the queries follows from the fact that we need to
look at $O(r)$ bits among the outer layers of switch settings, and that the
representation of the central $q$-permuter (Lemma~\ref{lem:small})
supports the queries in $O(\lg n/ \lg\lg n)$ time.
\end{proof}

\begin{theorem}
\label{thm:benesmain}
An arbitrary perm $\pi$ on $[n]$ may be represented using
${\calp(n)} + O(n (\lg\lg n)^5 /(\lg n)^2)$ bits, such that $\pi()$ and $\pi^{-1}()$
can both be computed in $O(\lg n/\lg\lg n)$ time.
\end{theorem}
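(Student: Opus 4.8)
The plan is to combine Proposition~\ref{prop:approx} with Lemma~\ref{lem:level1}, essentially embedding an arbitrary perm on $[n]$ into a slightly larger perm on $[p']$ that has exactly the factored form $p' = q 2^r$ required by the Benes construction. First I would invoke Proposition~\ref{prop:approx} with $t = \Theta((\lg n)^2/(\lg\lg n)^4)$ (say $t = \lceil (\lg n)^2/2(\lg\lg n)^4 \rceil$) to obtain an integer $p' \geq n$ with $p' = q 2^\ell$, where $t < q \leq 2t$ and $p' < n(1 + 1/t)$. This choice of $t$ places $q$ in exactly the window $(\lg n)^2/2(\lg\lg n)^4 < q \leq (\lg n)^2/(\lg\lg n)^4$ that Lemma~\ref{lem:level1} demands, and forces the additive ``blow-up'' $p' - n < n/t = O(n (\lg\lg n)^4/(\lg n)^2)$.

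Next I would extend $\pi$ to a perm $\pi'$ on $[p']$ by fixing the extra $p' - n$ points (i.e.\ $\pi'(i) = i$ for $n \leq i < p'$ and $\pi'(i) = \pi(i)$ otherwise), and represent $\pi'$ using Lemma~\ref{lem:level1}. Since $q = \Theta((\lg n)^2/(\lg\lg n)^4)$ we have $\lg q = \Theta(\lg\lg n)$, so the redundancy term $\Theta((p' \lg q)/q)$ becomes $\Theta(n (\lg\lg n)^4 \cdot \lg\lg n / (\lg n)^2) = \Theta(n (\lg\lg n)^5/(\lg n)^2)$, as required. The time bound from Lemma~\ref{lem:level1} is $O(r + \lg n/\lg\lg n)$; here I must check that $r = \ell = O(\lg n/\lg\lg n)$, which holds because $2^\ell = p'/q \leq 2n \cdot (\lg\lg n)^4/(\lg n)^2 \leq n$, so $\ell \leq \lg n$ — in fact $\ell \leq \lg n - 2\lg\lg n + O(1)$ is not needed; $\ell = O(\lg n)$ would already be too weak, so I would note more carefully that $2^\ell \le p'/q < 2 n (\lg\lg n)^4/(\lg n)^2$ gives $\ell < \lg n - 2\lg\lg n + 4\lg\lg\lg n + O(1) = O(\lg n)$, and since we need $O(r + \lg n/\lg\lg n) = O(\lg n/\lg\lg n)$ we actually do need $r = O(\lg n/\lg\lg n)$. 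Hmm — so here is the one point that needs attention: $\ell$ as produced by Proposition~\ref{prop:approx} can be as large as $\Theta(\lg n)$, which would only give $O(\lg n)$ query time, not $O(\lg n/\lg\lg n)$.

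The resolution — and the step I expect to be the main obstacle — is to not use a single central permuter size but to recurse, or equivalently to choose the decomposition so that $r$ itself is $O(\lg n/\lg\lg n)$. The cleanest fix is: instead of peeling off $\ell = \Theta(\lg n)$ Benes layers down to permuters of size $q \approx (\lg n)^2/(\lg\lg n)^4$, peel off only $r = \Theta(\lg n/\lg\lg n)$ layers, leaving central permuters of size $q' = p'/2^r$. But $q'$ is then too large for Lemma~\ref{lem:small} (which needs $q' \le (\lg n)^2/(\lg\lg n)^4$). So the actual argument must be that $n/2^{\Theta(\lg n/\lg\lg n)}$ is still only $n^{1-o(1)}$, i.e.\ still too big; hence a bounded-depth recursion does not by itself suffice, and one genuinely needs the full Benes recursion of depth $\Theta(\lg n)$. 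I therefore expect the real proof to either (a) accept $O(\lg n)$-ish depth but observe that traversing all $2r-1 = O(\lg n)$ switch layers costs only $O(\lg n)$ \emph{bit}-probes, which is $O(\lg n/\lg\lg n)$ \emph{word}-probes / RAM time after packing $\Theta(\lg\lg n)$ consecutive switch settings per word and using table lookup on $\sqrt n (\lg n)^{O(1)}$-bit tables to walk a whole block of layers at once; or (b) iterate Lemma~\ref{lem:level1}/a multi-level version so that the outer layers are themselves processed $\Theta(\lg\lg n)$ at a time via precomputed tables. Either way, the key technical content beyond the routine parameter arithmetic above is showing that the $O(r)$ factor in Lemma~\ref{lem:level1}'s running time can be shaved to $O(r/\lg\lg n) + O(\lg n/\lg\lg n) = O(\lg n/\lg\lg n)$ by word-parallel table lookup over the outer Benes layers, using the same $\sqrt n\,(\lg n)^{O(1)}$-bit tables already assumed in Lemma~\ref{lem:small}. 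I would write up: (1) the $p \to p'$ reduction and fixed-point extension; (2) the space accounting giving $\calp(n) + O(n(\lg\lg n)^5/(\lg n)^2)$ (absorbing $\calp(p') - \calp(n) = O((p'-n)\lg n) = O(n(\lg\lg n)^4/(\lg n)^2 \cdot \lg n)$, which is subsumed); (3) the speedup of the outer-layer traversal by blocked table lookup to reach $O(\lg n/\lg\lg n)$ total time.
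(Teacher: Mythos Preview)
Your treatment of the \emph{time} speedup is essentially right and matches the paper: the outer $2r = \Theta(\lg n)$ Benes layers are traversed $\Theta(\lg\lg n)$ at a time by packing the $t'2^{t'} \le (\lg n)/2$ relevant switch bits for a group of $2^{t'+1}$ inputs into consecutive memory and using table lookup, with $t' = \lfloor \lg\lg n - \lg\lg\lg n\rfloor - 1$. So option~(a) is exactly what the paper does.

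The gap is in the \emph{space} accounting, and it is not just an arithmetic slip---your one-level scheme cannot hit the stated redundancy. With $t = \Theta((\lg n)^2/(\lg\lg n)^4)$, Proposition~\ref{prop:approx} gives $p' - n = O(n(\lg\lg n)^4/(\lg n)^2)$, hence
\[
\calp(p') - \calp(n) \;=\; \Theta\bigl((p'-n)\lg n\bigr) \;=\; O\!\left(\frac{n(\lg\lg n)^4}{\lg n}\right),
\]
which is \emph{larger} than $O(n(\lg\lg n)^5/(\lg n)^2)$ by a factor of $\lg n/\lg\lg n$; your claim ``which is subsumed'' is false. You cannot pick $t$ any larger to shrink this blow-up, because Lemma~\ref{lem:level1} hard-caps $q \le (\lg n)^2/(\lg\lg n)^4$, and Proposition~\ref{prop:approx} forces $q \in (t,2t]$.

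The paper avoids this by a \emph{two-level} construction. It first takes $t = (\lg n)^3$ and writes $n' = q 2^r$ with $q \approx (\lg n)^3$; now $n'-n = O(n/(\lg n)^3)$, so $\calp(n')-\calp(n) = O(n/(\lg n)^2)$, which \emph{is} subsumed. The central $q$-permuters are themselves only polylog-sized, and it is to \emph{these} that one applies Proposition~\ref{prop:approx} with the smaller parameter $\approx (\lg n)^2/(\lg\lg n)^4$ followed by Lemma~\ref{lem:level1}. The crucial point is that the second rounding is done on perms of size $l = \Theta((\lg n)^3)$, so the padding cost is $(p-l)\lg l = O(l(\lg\lg n)^4/(\lg n)^2 \cdot \lg\lg n)$---a $\lg\lg n$ factor rather than a $\lg n$ factor---and summing over the $2^r \approx n/(\lg n)^3$ central permuters gives exactly $O(n(\lg\lg n)^5/(\lg n)^2)$. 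In short: you must decouple the rounding that makes $n$ divisible by a large power of~$2$ from the rounding that makes the central permuter size fit Lemma~\ref{lem:small}; doing both in one shot forces the coarser $t$ and loses a $\lg n/\lg\lg n$ factor in the redundancy.
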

\begin{proof}
Let $t = (\lg n)^3$.  We first consider representing a perm $\psi$ on
$[l]$ for some integer $l$, $t < l \le 2t$.  To do this, we find an
integer $p = l (1+O((\lg\lg n)^4/(\lg n)^2))$ that satisfies the preconditions
of Lemma~\ref{lem:level1}; such a $p$ exists by
Proposition~\ref{prop:approx}.  An elementary calculation shows that
$\calp(p) = \calp(l) (1 + O((\lg\lg n)^4/(\lg n)^2)) =
\calp(l) + O(\lg n (\lg\lg n)^5)$.
We extend $\psi$ to a perm on $[p]$ by setting $\psi(i) =
i$ for all $l \le i < p$ and represent $\psi$.  By
Lemma~\ref{lem:level1}, $\psi$ can be represented using $\calp(p) +
\Theta((p \lg p) (\lg \lg n)^4/(\lg n)^2) = \calp(l) + \Theta(\lg n(\lg \lg n)^5)$
bits such that $\psi()$ and $\psi^{-1}()$ operations are supported in
$O(\lg n/ \lg\lg n)$ time, assuming access to a pre-computed table of size
$O(\sqrt n (\lg n)^c)$ bits, for some constant $c > 0$.

Now we represent $\pi$ as follows.  We choose an $n' \ge n$ such that
$n' = n(1+O(1/(\lg n)^3))$ and $n' = q2^r$ for some integers $q, r$ such
that $t < q \le 2t$.  Again we extend $\pi$ to a perm on $[n']$ by setting
$\pi(i) = i$ for $n \le i < n'$,
and represent this extended perm.  As in
Lemma~\ref{lem:level1}, we start with a $(q,r)$-Benes network that
realises $\pi$ and write down the switch settings of the $2r$ outer
levels in level-order.  The perms realised by the central
$q$-permuters are represented using Lemma~\ref{lem:level1}.
Ignoring any pre-computed
tables, the space requirement is $q r 2^r + 2^r (\calp(q) + \Theta(\lg
n(\lg \lg n)^5))$ bits, which is again easily shown to be $\calp(n') +
\Theta((n' \lg n')/q + 2^r \lg n(\lg \lg n)^5))$ $= \calp(n') +
\Theta(n (\lg \lg n)^5/(\lg n)^2)$ bits.  Finally, as above, $\calp(n')
= (1+O(1/(\lg n)^3)) \calp(n)$, and the space requirement is
$\calp(n) + \Theta(n (\lg \lg n)^5/(\lg n)^2)$ bits.

The running time for $\pi()$ and $\pi^{-1}()$ is clearly $O(\lg n)$.
To improve this to $O(\lg n/\lg \lg n)$, we now explain how to step
through multiple levels of a Benes network in $O(1)$ time, taking care
not to increase the space consumption significantly.  Consider a
$(q,r)$-Benes network and let $t = \floor{\lg\lg n - \lg \lg \lg n} -
1$.  Consider the case when $t \le r$ (the other case is easier), and
consider input number $0$ to the $(q,r)$-Benes network.  Depending
upon the settings of the switches, a packet entering at input $0$ may
reach any of $2^t$ switches in $t$ steps A little thought shows that
the only packets that could appear at the inputs to these $2^t$
switches are the $2^{t+1}$ packets that enter at inputs
$0,1,k,k+1,2k,2k+1,\ldots$, where $k = q2^{r-t}$.  The settings of the
$t2^t$ switches that could be seen by any one of these packets suffice
to determine the next $t$ steps of {\it all\/} of these packets.
Hence, when writing down the settings of the switches of the Benes
network in the representation of $\pi$, we write all the settings of
these switches in $t 2^t \le (\lg n)/2$ consecutive locations.  Using
table lookup, we can then step through $t$ of the outer $2r$ layers of
the $(q,r)$-Benes network in $O(1)$ time.  Since computing the effect
of the central $q$-permuter takes $O(\lg n/\lg \lg n)$ time, we see that the
overall running time is $O(r/t + \lg n/\lg \lg n) = O(\lg n/\lg \lg n)$.
\end{proof}

\subsection{Optimality}
\label{subsec:optimality}

We now consider the optimality of the solutions given in
the previous two sections: specifically, if they achieve the
best possible redundancy for a given query time. 
As noted in Introduction, Golynski \cite[Theorem 17]{golynski-thesis}
has shown that any data structure in the ``black-box'' model
that supports $\pi^{-1}$ in at most $t < n/2$ evaluations of $\pi()$ requires
an index of size $\Omega((n/t) \lg (n/t))$.  This shows the asymptotic
optimality of Theorem~\ref{thm:shortcuts2} for $t = n^{1-\Omega(1)}$.
In the cell probe model, Golynski \cite{golynski-09} shows that:
\begin{lemma}
\label{lem:golynski}
For any data structure
which uses ${\calp}(n) + r$ bits of space to represent a perm over $[n]$ and
supports $\pi()$ and $\pi^{-1}()$ in time $t_f$ and $t_i$ respectively,
such that $\max\{t_f,t_i\} \le (1/16)(\lg n/\lg \lg n)$, it holds that $r = \Omega((n \lg n)/(t_f\cdot t_i))$ bits.
\end{lemma}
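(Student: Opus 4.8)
The statement is Golynski's cell-probe redundancy lower bound for permutations, and the plan is to follow his encoding (``compression'') argument, which is the standard tool for redundancy lower bounds in this model. Fix a data structure that uses $\calp(n)+r$ bits, and view its memory as $m = (\calp(n)+r)/\lg n$ cells of $w = \lg n$ bits each. We may assume $r \le n\lg n$, since otherwise the claimed bound --- which is never more than $n\lg n$ --- holds trivially; hence $m \le 2n$. The goal is to show that if $r$ were substantially smaller than $(n\lg n)/(t_f t_i)$, then one could describe an arbitrary permutation $\pi$ on $[n]$ using strictly fewer than $\calp(n) = \lceil\lg n!\rceil$ bits, which is impossible.

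The encoding of $\pi$ would work as follows. Delete a carefully chosen set $S$ of cells, and transmit the remaining $m - |S|$ cells (costing $(\calp(n)+r) - |S|\lg n$ bits) together with a description of which cells were deleted (costing $\lg {m \choose |S|}$ bits). To recover $\pi$, the decoder first runs the stored $\pi()$ algorithm on every input whose forward-query probes no deleted cell; this yields $\pi$ on all inputs except a ``forward-bad'' set $F$ of inputs whose $\pi()$-query touches $S$. The key point is that the decoder now also knows the value \emph{set} $\pi(F)$, namely the set of values not yet produced. It therefore runs the stored $\pi^{-1}()$ algorithm on every $v \in \pi(F)$ whose inverse-query avoids $S$, thereby recovering $\pi^{-1}(v)$ for all such $v$. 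The inputs still unknown are precisely the ``doubly-bad'' ones --- those $i$ for which the $\pi()$-query on $i$ and the $\pi^{-1}()$-query on $\pi(i)$ both touch $S$ --- and for each of these we append $\pi(i)$ explicitly using $\lg n$ bits. (Modulo the standard subtlety of simulating adaptive queries, the decoder can tell which inputs and values remain missing, since the stored cells determine the relevant probe prefixes.) Thus $\pi$ is described in roughly $\calp(n) + r - |S|\lg n + \lg {m \choose |S|} + D\lg n$ bits, where $D$ is the number of doubly-bad inputs; as this must be at least $\calp(n)$ for every $\pi$, we obtain $r \ge |S|\lg n - \lg {m \choose |S|} - D\lg n$.

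What remains --- and this balancing act is the technical core of the proof and the main obstacle --- is to choose $S$ so that the right-hand side is $\Omega((n\lg n)/(t_f t_i))$. Since the forward queries make $n t_f$ cell probes in total and the inverse queries make $n t_i$, an averaging argument lets us take $S$ to consist only of cells of low inverse-degree (say, at most a constant times $t_i$), which guarantees that almost all of the inverse-queries used during decoding avoid $S$; consequently $D$ is governed by how many inputs have their forward-query hit $S$ while their image falls, in the inverse direction, on the comparatively few deleted cells. A further counting argument --- selecting $S$ greedily, or at random, among the low-inverse-degree cells and using that the forward-degrees also average to $t_f$ --- shows that the optimal size is $|S| = \Theta(m/(t_f t_i))$, for which $D = O(|S|)$ and $\lg {m \choose |S|} = o(|S| \lg n)$. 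Substituting gives $r = \Omega(|S| \lg n) = \Omega(m \lg n/(t_f t_i)) = \Omega(n \lg n/(t_f t_i))$. The hypothesis $\max\{t_f,t_i\} \le \frac{1}{16}(\lg n/\lg\lg n)$ is exactly what keeps the ``naming'' cost $\lg {m \choose |S|}$ and the remaining lower-order slack strictly below the main term throughout this computation, and is the source of the constant $1/16$. Making the estimate $D = O(|S|)$ fully rigorous is the delicate step, and for those details I would follow Golynski~\cite{golynski-09}.
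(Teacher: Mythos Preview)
The paper does not prove this lemma at all: it is stated as a result of Golynski and attributed to~\cite{golynski-09}, with no argument given. Your proposal is therefore not comparable to ``the paper's own proof,'' since there is none; what you have written is a reasonable high-level sketch of Golynski's encoding argument, and you yourself defer the delicate step (bounding the number of doubly-bad inputs) back to~\cite{golynski-09}. For the purposes of this paper, simply citing the result --- as the authors do --- is the intended treatment.
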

This shows that Corollary~\ref{cor1} is optimal for a range of values
of the parameter $t$.  Specficially, there is a constant $c$
(which depends upon the constant within the $O()$ in Corollary~\ref{cor1}
and the value $1/16$ in Lemma~\ref{lem:golynski}) such that the redundancy of
Corollary~\ref{cor1}
is asymptotically optimal for all $t \le c \lg n / \lg \lg n$.
In order to clarify the relationship of
Lemma~\ref{lem:golynski} to the results in Section~\ref{sec:benes} we
have the following proposition:

\begin{proposition}
In the cell probe model with word size $O(\log n)$, a perm $\pi$ non $[n]$ can
be represented as follows:

\begin{itemize}
\item[i.] Both $\pi()$ and $\pi^{-1}()$ can be computed using $2 \lg n / \lg \lg n + O(1)$ probes, and the space used is $\calp(n) + O(n (\lg \lg n)^2 / \lg n)$ bits.
\item[ii.] Both $\pi()$ and $\pi^{-1}()$ can be computed using $(2 + \epsilon) \lg n / \lg \lg n + O(1)$ probes, for any constant $\epsilon > 0$, and the space used is $\calp(n) + O(n (\lg \lg n)^3 / (\lg n)^2)$ bits.
\end{itemize}
\end{proposition}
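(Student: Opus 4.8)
The plan is to use the $(q,r)$-Benes network construction exactly as in the proof of Theorem~\ref{thm:benesmain}, but to balance the redundancy against the number of probes in the cell-probe model, where we have the freedom to choose the central permuter size $q$ more aggressively (and where there are no lookup tables or running-time concerns, only the count of $w$-bit words touched). Recall that a $(q,r)$-Benes network realizing $\pi$ on $[p]$, $p = q2^r$, contributes $q r 2^r$ bits for the $2r$ outer switch layers and $2^r$ central $q$-permuters, each of which we store in $\calp(q)$ bits; the total is $\calp(p) + \Theta((p \lg q)/q)$. Extending $\pi$ from $[n]$ to $[n']$ with $n' = q 2^r \ge n$ and $n' = n(1 + o(1))$ (via Proposition~\ref{prop:approx}) costs only a lower-order term in the space, so the redundancy is $\Theta((n \lg q)/q + (\text{extension overhead}))$.

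For part (i), I would take $q = \Theta(\lg n / \lg \lg n)$, so that a perm on $[q]$ fits in a single $\Theta(\lg n)$-bit word: it can then be stored in $\calp(q)$ bits and \emph{both} $\pi()$ and $\pi^{-1}()$ on it computed with a single cell probe (read the word, compute for free in the cell-probe model). The redundancy from the central permuters is $\Theta((n\lg q)/q) = \Theta(n (\lg\lg n)^2 / \lg n)$. To bound the probes: stepping through the $2r = \Theta(\lg n/\lg\lg n)$ outer switch layers naively would be $\Theta(\lg n/\lg\lg n)$ probes, but as in Theorem~\ref{thm:benesmain} we group the switch settings seen by a bundle of packets through $t = \Theta(\lg\lg n)$ consecutive layers into $t 2^t = O(\lg n)$ bits, i.e.\ $O(1)$ words, so $t$ layers cost one probe. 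Doing this on each of the two halves ($r$ layers before the center and $r$ after) gives $r/t + r/t = 2r/t$ probes for the outer layers; choosing $r = \lg n / \lg\lg n \cdot (1 + o(1))$ and optimizing constants yields $2\lg n/\lg\lg n + O(1)$ probes total, the ``$+O(1)$'' absorbing the one probe for the central permuter and the reading of $n,n'$ parameters.

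For part (ii), I would instead take $q$ as large as Lemma~\ref{lem:small}/Lemma~\ref{lem:level1} allow in spirit, $q = \Theta((\lg n)^2/(\lg\lg n)^4)$, reducing the central-permuter redundancy to $\Theta((n \lg q)/q) = \Theta(n (\lg\lg n)^4 \cdot \lg\lg n / (\lg n)^2) = \Theta(n(\lg\lg n)^3/(\lg n)^2)$ after absorbing $\lg q = \Theta(\lg\lg n)$; but now a perm on $[q]$ occupies $\Theta(\lg n / (\lg\lg n)^3)$ words, and I must recursively represent it as a $(q',r')$-Benes network itself (a two-level Benes recursion), which contributes the dominant probe cost. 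The central permuter at the bottom level is again of size $\Theta(\lg n/\lg\lg n)$, probed once; reading its outer Benes layers by the same bundling trick costs $(2 + o(1))\lg q/\lg\lg q$ probes $= o(\lg n/\lg\lg n)$ probes, and reading the outer layers of the top-level network costs $(2+o(1))\lg(n/q)/\lg\lg n$ probes. Summing, the leading term is still $2\lg n/\lg\lg n$, but the slack in the lower-order analysis and the extra recursion layer force the bound up to $(2+\epsilon)\lg n/\lg\lg n + O(1)$ for any constant $\epsilon > 0$. The main obstacle, and the place where the constant $\epsilon$ enters, is tracking precisely how many words the bundled switch settings of $t$ consecutive layers occupy when $t$ is pushed close to its maximum: the bundle of $2^{t+1}$ packets sees $t 2^t$ switches, and keeping $t 2^t \le \lg n$ (so that this is $O(1)$ words) caps $t$ at $\lg\lg n - \lg\lg\lg n - O(1)$, giving $2r/t = (2 + o(1))\cdot (r \lg\lg n / \lg n)\cdot(\lg n/\lg\lg n)/r$-type bookkeeping that I would need to carry out carefully; the ``$+\epsilon$'' is exactly the price of not optimizing this bookkeeping to the last lower-order term, and also of the extra Benes recursion in part (ii).
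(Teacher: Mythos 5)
Your part~(i) matches the paper's approach: take $q = \Theta(\lg n/\lg\lg n)$, store the central $q$-permuter as a single integer that fits in $O(1)$ cells, and bundle $t \approx \lg\lg n$ consecutive outer switch layers into one word. One slip: with $p = q2^r \approx n$ and $q = \Theta(\lg n/\lg\lg n)$, the number of outer layers is $2r = \Theta(\lg n)$, not $\Theta(\lg n/\lg\lg n)$ as you write (you later also write $r = \lg n/\lg\lg n \cdot (1+o(1))$, which would in fact make $2r/t = \Theta(\lg n/(\lg\lg n)^2)$, contradicting your own stated bound). The final probe count $2\lg n/\lg\lg n + O(1)$ is right once you put $r \approx \lg n$ and $t \approx \lg\lg n$ into $2r/t$, but the intermediate identification of $r$ is off.

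Part~(ii) has a genuine gap, on two counts. First, the arithmetic: with your choice $q = \Theta((\lg n)^2/(\lg\lg n)^4)$ the redundancy is $\Theta((n\lg q)/q) = \Theta\bigl(n\,(\lg\lg n)\cdot(\lg\lg n)^4/(\lg n)^2\bigr) = \Theta(n(\lg\lg n)^5/(\lg n)^2)$, not $\Theta(n(\lg\lg n)^3/(\lg n)^2)$ as you claim; you silently drop a factor of $(\lg\lg n)^2$. To hit the stated redundancy you need $q = \Theta\bigl((\lg n/\lg\lg n)^2\bigr)$, which is $(\lg\lg n)^2$ times larger. Second, and more fundamentally, the Benes-within-Benes recursion is both unnecessary and misleading about where $\epsilon$ comes from: in the cell-probe model the central $q$-permuter can simply be stored as the index of $\pi$ in a canonical enumeration of perms on $[q]$ and read in its entirety, taking $O(1 + q\lg q/\lg n)$ probes. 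With $q = \epsilon'(\lg n/\lg\lg n)^2$ for a suitable constant $\epsilon' < 1$ depending on $\epsilon$, this direct read costs at most $\epsilon\lg n/\lg\lg n$ probes, and the redundancy is $\Theta((n\lg q)/q) = O(n(\lg\lg n)^3/(\lg n)^2)$ (constant depending on $\epsilon$). The $\epsilon$ in the probe bound is precisely the cost of reading the larger central permuter directly, not a bookkeeping slack in the bundling of the outer layers. Your account of ``the main obstacle'' and ``where $\epsilon$ enters'' is therefore mistaken; the bundling argument contributes only a lower-order $(1+o(1))$ factor to the $2\lg n/\lg\lg n$ term in both parts.

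Incidentally, note that your choice of $q$ is constrained by Lemma~\ref{lem:small} because that lemma worries about RAM running time for decoding; but the cell-probe model dispenses with that entirely, which is exactly what lets the paper take $q$ as large as $\Theta((\lg n/\lg\lg n)^2)$.
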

\begin{proof} In the cell probe model, we note that given
a perm $\pi$ on $[q]$, one can
compute $\pi()$ and $\pi^{-1}$ on a perm $q$ in $O(1 + (q \lg q)/\lg n)$ time,
using ${\calp(q)}$ bits. This is done by
representing $\pi$ implicitly, e.g., as the index of $\pi$ in a
canonical enumeration of all perms on $[q]$, and computing
$\pi()$ and $\pi^{-1}$ by simply reading the entire representation
(which occupies $O(1 + (q \lg q) / \lg n)$ cells).  Two particular values
of $q$
are of interest here: $q_1 = \Theta(\lg n /\lg \lg n)$, when the time
is $O(1)$ probes, and $q_2 = \epsilon (\lg n /\lg \lg n)^2$, for some constant
$\epsilon < 1$,  when the time is at most $\epsilon \lg n /\lg \lg n$ probes.

Using these representations as the central $q$-permuter in
Lemma~\ref{lem:level1}, followed by Theorem~\ref{thm:benesmain},
we note that the number of probes made in the outer layers of the
Benes network is at most $2 \lg n / \lg \lg n$.  By adding the
probes made to the central $q$-permuter (for both $q=q_1$ and $q=q_2$),
we get the numbers of probes claimed. The redundancies are obtained by
straightforward calculation as in Lemma~\ref{lem:level1} and
Theorem~\ref{thm:benesmain}.
\end{proof}

The first of two cases represents the lowest number of probes that we are able
to achieve with our approach.
Although the number of probes is still higher than the
maximum number of probes allowed by Lemma~\ref{lem:golynski}, the redundancy
equals the lowest redundancy provable by Lemma~\ref{lem:golynski}.  However,
with a very small increase in the number of probes, the redundancy drops
considerably (and in fact is lower than that of Theorem~\ref{thm:benesmain}).

\subsection{Supporting Arbitrary Powers}\label{sec:powers1}

We now consider the problem of representing an arbitrary perm $\pi$ to
compute $\pi^k()$ for $k > 1$ (or $k < 1$) more efficiently than by repeated application of $\pi()$
(or $\pi^{-1}()$). Here we develop a succinct structure to support all
powers of $\pi$ (including $\pi()$ and $\pi^{-1}$). The results in this
section assume that we have ${\calp(n)}$ bits (plus some redundancy) to
store the representation, i.e., we do not work in the ``black-box'' model.

\begin{theorem}
\label{mainpower}
Suppose there is a representation $R$ taking $s(n)$ bits to store an
arbitrary perm $\pi$ on $[n]$, that supports $\pi ()$ in time $t_{f}$, and
$\pi^{-1} ()$ in time $t_{i}$. Then there is a representation for an arbitrary
perm on $[n]$ taking $s(n) + O(n \lg n/\lg \lg n)$ bits in which
$\pi^{k} ()$ for any integer $|k| \le n$ can be supported
in $t_{f} + t_{i} + O(1)$ time, and one taking $s(n) + O(\sqrt{n} \lg n)$ bits
in which $\pi^{k} ()$ can be supported in $t_{f} + t_{i} + O(\lg \lg n)$ time.
\end{theorem}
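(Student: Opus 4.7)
My plan reduces the problem to efficient random access inside the cycles of $\pi$. Since $\pi^k(i)$ is simply the element at cycle-position $(p_i + k) \bmod L_{C_i}$ in $i$'s cycle $C_i$, and $k$ can be reduced modulo $L_{C_i}$ in constant time, it suffices to support two primitives in the claimed time: (A) given $i$, return $L_{C_i}$ and $i$'s in-cycle position $p_i$; (B) given a cycle and a position, return the element there. During preprocessing I would decompose $\pi$ into its cycles (by traversing via $R$'s primitives), recording the cycle length of each element and fixing a canonical orientation/origin in each cycle.

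Along each cycle of length $L$, I would mark every $t$-th element as an \emph{anchor}, for a parameter $t$ chosen per bound: $t = \Theta(\lg\lg n)$ for the first statement and $t = \Theta(\sqrt n)$ for the second. For each anchor I would store a $\lg n$-bit pointer to the underlying element, contributing $O((n/t)\lg n)$ bits, which matches the stated redundancies of $O(n\lg n/\lg\lg n)$ and $O(\sqrt n\lg n)$ respectively. For each element I would also store its offset within its block in $O(\lg t)$ bits together with $o(n)$-bit rank/select indices on the anchor bit-vector, so that (A) takes one $\pi^{-1}$-call (to align $i$ to its block anchor) plus $O(1)$ work using only the stored data. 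Query (B) proceeds symmetrically: compute $q = (p_i + k) \bmod L_{C_i}$, decompose $q = b\cdot t + \delta$, look up the block-$b$ anchor in $O(1)$ from the pointer array, and then advance by $\delta$ steps to reach $\pi^k(i)$.

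The principal obstacle is step (B)'s intra-block advance, which naively uses up to $t-1$ primitive $\pi$-calls and hence $\Theta(t \cdot t_f)$ time. For the second bound (where $t \approx \sqrt n$) the walking must be accelerated by a \emph{recursive} anchoring: anchors within each outer block, and anchors within each inner block, for $O(\lg\lg n)$ levels. Each level is crossed by one $O(1)$-time array lookup and a final $\pi$-call handles the innermost $O(1)$-offset, for total time $t_f + t_i + O(\lg\lg n)$. For the first bound ($t \approx \lg\lg n$) the intra-block walk is $\Theta(\lg\lg n)$ steps, which must be collapsed to a \emph{single} $\pi$-call plus $O(1)$ word-level operations; I expect this to require auxiliary per-block shortcut information encoded within the $O(n\lg n/\lg\lg n)$-bit budget together with universal lookup tables, and the delicate balancing of per-block auxiliary space against walk length is where the main technical work lies.
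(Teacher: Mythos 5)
Your high-level reduction --- $\pi^k(i)$ is the element at cycle-position $(p_i+k)\bmod \lambda$ in $i$'s cycle --- matches the paper's, but your implementation has a genuine gap, one you yourself flag: the intra-block advance. Walking $\delta<t$ steps along the cycle costs $\Theta(t\cdot t_f)$ time, and neither of your proposed fixes works within the stated bounds. For the $O(\sqrt n\lg n)$-bit version, storing a per-element offset of $O(\lg t)=O(\lg n)$ bits already costs $\Theta(n\lg n)$ bits, and recursive anchoring down to constant spacing places $\Theta(n)$ pointers at the finest level, again $\Theta(n\lg n)$ bits --- both blow the $O(\sqrt n\lg n)$ budget. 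For the $O(n\lg n/\lg\lg n)$-bit version you explicitly leave unresolved the collapse of the $\Theta(\lg\lg n)$-step walk to a single $\pi$-call. Primitive (A) is also problematic as described: if $R$ stores $\pi$ itself, one call to $\pi^{-1}$ returns the cycle-predecessor of $i$, not $i$'s position within its cycle, so recovering $p_i$ likewise requires walking to an anchor.

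The idea you are missing is to change \emph{which} permutation is stored in $R$. The paper writes all cycles out as a single array $\psi$ of length $n$ (cycles grouped in nondecreasing order of length) and stores $\psi$, not $\pi$, in the assumed representation. Then ``position of $x$ in the cycle listing'' is exactly $\psi^{-1}(x)$ (one query, time $t_i$) and ``element at position $s$'' is exactly $\psi(s)$ (one query, time $t_f$); no anchors, offsets, or walking are needed. The only remaining task is to map a position $j$ to its cycle's left endpoint and length, and here the key observation is that there are only $O(\sqrt n)$ \emph{distinct} cycle lengths: with equal-length cycles stored consecutively, a predecessor query over the $O(\sqrt n)$ group boundaries yields the cycle length, and the left endpoint follows by arithmetic. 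Implementing the predecessor structure as a fully indexable dictionary gives $O(1)$ extra time and $o(n)$ extra bits (well within the first bound); implementing it as an explicit array plus a Y-fast trie gives $O(\lg\lg n)$ extra time and $O(\sqrt n\lg n)$ bits, yielding the second bound.
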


\begin{proof}
Consider the cycle representation of the given perm $\pi$, in which for all
cycles of $\pi$, we write down the elements comprising the cycle, in the
order in which they appear in the cycle, starting with the smallest element in the cycle.  It will be convenient to consider the logical array $\psi$
of length $n$, which comprises the cycles written in nondecreasing order of length, with logical separators marking the boundary of each cycle (see Fig.~\ref{fig:cyclerep} for an example)\footnote{One can dispense with the logical separators by writing the cycles in order of decreasing minimum element, but this is not as convenient for our purposes.}.  Clearly, ignoring
the logical separators between cycles, $\psi$ is itself a permutation.
\begin{figure}
\begin{center}
\epsfxsize 0.8\textwidth \epsfbox{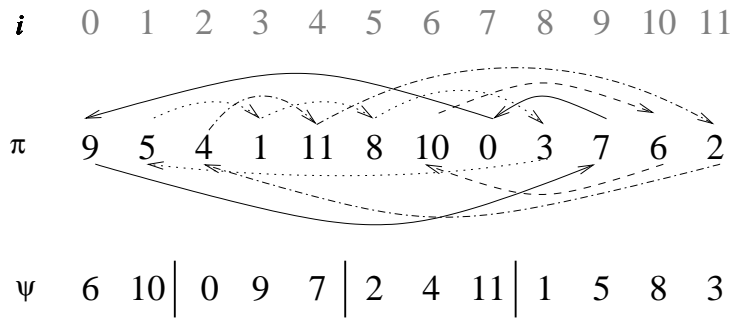}
\end{center}
\caption{A permutation $\pi$ and the logical array $\psi$ representing its cycles.}
\label{fig:cyclerep}
\end{figure}

To compute $\pi^{k}(x)$ for any (positive or negative) $k$ we do the following:
\begin{enumerate}
\item find the position $j$ in $\psi$ that contains $x$,
\item find the left endpoint $l$ of the segment of $\psi$ that represents
the cycle containing $i$, and the length $\lambda$ of this cycle and
\item return the element of $\psi$ in position $s = l + ({{(j-l+k)}\bmod{\lambda}})$.
\end{enumerate}

The data structure for implementing this is as follows.
We represent $\psi$ in the assumed representation $R$.
In Step (1), $j$ is computed as $\psi^{-1}(i)$ in time $t_{i}$, and in
Step (3), the return value is just $\psi(s)$, computed in time $t_{f}$.
We now focus
on Step (2).  Let $\lambda_1 < \lambda_2 < \ldots < \lambda_z$ be
the \emph{distinct} cycle lengths in $\pi$ (the example in
Fig.~\ref{fig:cyclerep} has $z=3$); note that $z = O(\sqrt{n})$.
We store the sequence $\{\lambda_i\}$ in an array, using
$O(\sqrt{n}\lg n)$ bits.  Also consider the set $S = \{s_i\}$,
where $s_1 = 0$ and for $i=2,\ldots,z$, $s_i$ is the
total length of all cycles in $\pi$ whose length is strictly less
than $\lambda_i$ (note that $s_i$ is the starting position of
the sequence of cycles of size $\lambda_i$).  Thus, if $j$ is the
position of $x$ in $\psi$ in Step (1), then the length $\lambda$
of the cycle containing $x$ is $\lambda_t$, where $t = \fullrank(j,S)$.
Also, since all the cycles of length $\lambda$ begin at
$s_t = \Select(S, t)$, it is straightforward to compute the
left endpoint of the cycle containing $x$.  It only remains
to describe how to represent $S$.  We choose two options, giving
the claimed results:

\begin{itemize}
\item to represent $S$ in the FID of Theorem~\ref{litomfid},
taking $\lg {n \choose z} + O(n \lg \lg n/\lg n) = O(n \lg \lg n/\lg n)$ bits,
which supports $\fullrank$ and $\Select$ in $O(1)$ time.
\item to represent $S$ as an array, supporting $\Select$ in $O(1)$
time and also as a predecessor data structure (e.g. the Y-fast trie
\cite{Willard83}) which supports $\fullrank$ in $O(\log \log n)$ time.
The space used by this option is $O(\sqrt{n} \lg n)$ bits. 
\qedhere
\end{itemize}
\end{proof}

As an immediate corollary, we get, from Theorem \ref{thm:benesmain}
\begin{corollary}
There is a representation to store an arbitrary perm $\pi$ on $[n]$
using at most ${\calp(n)} + O(n (\lg \lg n)^5/(\lg n)^2)$
bits that can support $\pi^{k}()$ for any $k$ in $O(\lg n/\lg \lg n)$ time.
\end{corollary}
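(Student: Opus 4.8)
The plan is to read this off by composing Theorem~\ref{thm:benesmain} with Theorem~\ref{mainpower}, being a little careful about which of the two trade-offs in the latter I invoke.

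First I would instantiate the representation of Theorem~\ref{thm:benesmain}: it supplies a representation $R$ of an arbitrary perm on $[n]$ occupying $s(n) = \calp(n) + O(n(\lg\lg n)^5/(\lg n)^2)$ bits that supports both $\pi()$ and $\pi^{-1}()$ in time $t_f = t_i = O(\lg n/\lg\lg n)$. Then I would plug $R$ into Theorem~\ref{mainpower} as the assumed base structure.

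Here the choice matters, and it is the only subtlety. Theorem~\ref{mainpower} offers two options: the first adds $O(n\lg n/\lg\lg n)$ bits of redundancy, which swamps the $O(n(\lg\lg n)^5/(\lg n)^2)$ redundancy of $R$ and would destroy the claimed bound, so I must discard it and use the second option, which adds only $O(\sqrt{n}\lg n)$ bits at the cost of $O(\lg\lg n)$ extra time. Since $\sqrt{n}\lg n = o\bigl(n(\lg\lg n)^5/(\lg n)^2\bigr)$, the total space remains $\calp(n) + O(n(\lg\lg n)^5/(\lg n)^2)$ bits; and since $O(\lg\lg n)$ is absorbed into $O(\lg n/\lg\lg n)$, the time to compute $\pi^k()$ is $t_f + t_i + O(\lg\lg n) = O(\lg n/\lg\lg n)$, as required.

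There is essentially no obstacle beyond this trade-off choice. One minor loose end I would address in a sentence: Theorem~\ref{mainpower} is stated for $|k| \le n$, whereas the corollary asks for arbitrary $k$; but Step~(3) of its proof returns the answer via $s = l + \bigl((j-l+k)\bmod\lambda\bigr)$, and this single modular reduction is valid, and takes $O(1)$ time on the RAM, for any $k$ bounded by a polynomial in $n$ — the standing assumption of the paper — so the restriction $|k|\le n$ is inessential and the corollary follows.
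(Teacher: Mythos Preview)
Your proposal is correct and matches the paper's approach: the paper states the corollary as ``immediate'' from Theorem~\ref{thm:benesmain} combined with Theorem~\ref{mainpower}, and your observation that one must pick the second (predecessor-based, $O(\sqrt{n}\lg n)$-bit) option of Theorem~\ref{mainpower} to preserve the $O(n(\lg\lg n)^5/(\lg n)^2)$ redundancy is exactly the point that makes the corollary go through.
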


\section{Succinct trees with level-ancestor queries}\label{sec:level-anc}
In this section we consider the problem of supporting \emph{level-ancestor}
queries on a static rooted ordered tree.
The structure developed here 
will be used in the next section as a substructure in representing a 
function efficiently.
Given a rooted tree $T$ with $n$ nodes, the 
level-ancestor problem is to preprocess $T$ to answer queries of the
following form: Given a vertex $v$ and an integer $i>0$, find the
$i$th vertex on the path from $v$ to the root, if it exists. 
Existing solutions take $\Theta(n \lg n)$ bits to
answer queries in $O(1)$ time \cite{Dietz,BV,AH,BF}, and our
solution stores $T$ using (essentially optimal)
$2n$ bits of space, and uses auxiliary structures of $o(n)$ bits to
support level-ancestor queries in $O(1)$ time.  
Another useful feature of our
solution (which we need in the function representation) is that it
also supports finding the level-successor (or predecessor) 
of a node, i.e., the node
to the right (left) of a given node on the same level, if it exists, in
constant time.

A high-level view of our structure and the query algorithm is as
follows: for any constant $c>0$  
we construct a structure $A$, that given a node $x$ and
any (positive or negative) integer $k$, $|k| \le \lg^c n$, 
supports finding the ancestor (or the first successor in pre-order, if $k \le 0$) 
of $x$ whose depth is $\depth(x) + k$ (this structure is our main contribution).
Applying the above with $c=2$ (say), we also construct another structure, $B$, which supports
level-ancestor queries on nodes whose depths are multiples of $\lg^2
n$, and whose heights are at least $\lg^2 n$. To support a level-ancestor query,
structure $A$ is first used to find the closest ancestor of the given
node, whose depth is a multiple of $\lg^2 n$ and whose height is 
at least $\lg^{2} n$. Then structure $B$ is
used to find the ancestor which is the closest descendant of the
required node and whose depth is a multiple of $\lg^2 n$. Structure $A$
is again used to find the required node from this node. The choice of
different powers of $\lg n$ in the structures given below are somewhat
arbitrary, and could be fine-tuned to slightly improve the lower-order
term.

The structure $A$ consists of
the tree $T$ represented in $2n$ bits as a balanced parenthesis
(BP) sequence as in \cite{MR}, by visiting the nodes of the tree
in depth first order and writing an open parenthesis whenever a node
is first visited, and a closing parenthesis when a node is visited
after all its children have been visited. Thus, each node has exactly
one open and one closing parenthesis corresponding to it. Hereafter,
we also refer a node by the position of either the open or the closing
parenthesis corresponding to it in the BP sequence of the
tree.  We store an existing auxiliary structure of size $o(n)$ bits that
answers the following queries in $O(1)$ time on the BP sequence (see \cite{MR,Geary-BP} for
details):

\begin{itemize}
\item $\close(i)$: find the position of the closing parenthesis that
matches the open parenthesis at position~$i$.
\item $\open(i)$: find the position of the open parenthesis that
matches the closing parenthesis at position~$i$.
\item $\excess(i)$: find the difference between the number of open
parentheses and the number of closing parentheses from the beginning
up to the position~$i$.
\end{itemize}
Note that the $\excess$ of a position $i$ is simply the depth of the
node $i$ in the tree.  Our new contribution is to give a 
$o(n)$-bit structure to support the following operation in $O(1)$ time:

\begin{itemize}
\item $\nextexcess(i,k)$: find the least position $j > i$ such that
$\excess(j) = k$.
\end{itemize}

We only support this query for $\excess(i) - O(\lg^c n) \le k \le 
\excess(i) + O(\lg^c n)$ for some fixed constant $c$. In the following lemma, we 
fix the value of $c$ to be $2$. 
Observe that $\nextexcess(i,k)$ gives:
\begin{itemize}
\item[(a)] the ancestor of $i$ at depth $k$, if $k < \depth(i)$, and 
\item[(b)] the next node after $i$ in the level-order traversal of the tree,
if $k = \depth(i)$, and
\item[(c)] the next node after $i$ in pre-order, if $k > \depth(i)$. 
\end{itemize}

We now describe the auxiliary structure to support the $\nextexcess$
query in constant time using $o(n)$ bits of extra space, showing the following:

\begin{theorem}\label{thm:next-excess} 
Given a balanced parenthesis sequence of length $2n$, one can support 
the operations \open, \close, \excess{} and $\nextexcess(i,k)$ 
where $|k-\excess(i)| \le \lg^{2} n$, all in constant time
using an additional index of size $o(n)$ bits.
\end{theorem}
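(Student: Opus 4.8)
The plan is to build a standard two-level (``macro/micro'') decomposition of the balanced parenthesis sequence, using the existing $o(n)$-bit support for $\open$, $\close$ and $\excess$ as a black box, and adding a $o(n)$-bit index that localizes any $\nextexcess(i,k)$ query with $|k-\excess(i)|\le\lg^2 n$ to a constant number of word-sized table lookups. First I would cut the BP sequence into blocks of $b=\Theta(\lg n)$ bits (say $b=\frac{1}{2}\lg n$), and group $\Theta(\lg^{2} n)$ consecutive blocks into superblocks. Within a block, since a block fits in a machine word, a single lookup into a precomputed table of size $\sqrt n\,(\lg n)^{O(1)}=o(n)$ bits can answer: given an entry excess $e$ and a target relative excess $\delta$ with $|\delta|\le b$, the first position in the block whose excess equals $e+\delta$, or a report that no such position exists together with the block's minimum and maximum excess. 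So the only work is to \emph{skip over} the blocks, between position $i$ and the block actually containing the answer, quickly.

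The key observation is that $\nextexcess(i,k)$ asks for the first position \emph{after} $i$ at a prescribed absolute excess $k$. Because excess changes by $\pm 1$ per step, the answer block is the first block after $i$'s block whose excess range $[\min\text{-excess},\max\text{-excess}]$ contains $k$; we need to find this block in $O(1)$ time. I would store, for each superblock, the minimum and maximum excess attained within it (this is $O(n/\lg n)=o(n)$ bits), and store relative min/max excesses of each block within its superblock in $O(b)=O(\lg\lg n)$ bits per block, for $O(n\lg\lg n/\lg n)=o(n)$ bits total. To locate the answer block I would proceed in three phases exactly as in classical rank/select-style structures: scan at the superblock level, then at the block level within the located superblock, then inside the block. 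The superblock-level and block-level scans must themselves be $O(1)$; here I would use the fact that $|k-\excess(i)|\le\lg^2 n$, so the answer lies within $O(1)$ superblocks of $i$ (a superblock spans $\Theta(\lg^2 n)$ positions and excess moves by at most $1$ per position, so within $O(\lg^2 n/\lg^2 n)=O(1)$ superblocks either the target excess $k$ is reached or the excess has moved monotonically away — and in the ``away'' case we instead argue the answer, if any exists at all in that direction, is forced into the next few superblocks, so at most a constant number of superblocks need to be examined). Having narrowed to $O(1)$ superblocks, I would use a precomputed table indexed by the (short) array of per-block relative min/max excesses of one superblock, together with the target relative excess, to return in $O(1)$ the first qualifying block; this table has $2^{O(\text{bits per superblock descriptor})}$ entries, and since a superblock holds $O(\lg^2 n/\lg n)=O(\lg n)$ blocks each described by $O(\lg\lg n)$ bits, the descriptor is $O(\lg n\lg\lg n)$ bits, making the table size superpolynomial --- that is too big.

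To fix the table-size problem I would add one more level: split each superblock's block-descriptor array into chunks of $\Theta(\lg n/\lg\lg n)$ blocks, so a chunk descriptor is $O(\lg n)$ bits and the lookup table fits in $2^{O(\lg n)}\cdot(\lg n)^{O(1)}$... still too large. The clean way, and the one I would actually carry out, is to make the \emph{intermediate} level coarser: use blocks of $b=\frac{1}{2}\lg n$ bits, ``miniblocks'' of $\Theta(\lg n)$ blocks, and ``superblocks'' of $\Theta(\lg n)$ miniblocks, storing absolute min/max excess per superblock ($O(n/\lg^2 n\cdot\lg n)=o(n)$ bits) and per miniblock ($O(n/\lg^2 n\cdot\lg n)=o(n)$ bits) and per block relative to its miniblock ($O(n/\lg n\cdot\lg\lg n)=o(n)$ bits), then resolving the answer by: $O(1)$ superblock steps (justified by $|k-\excess(i)|\le\lg^2 n$ exactly as above, now a superblock spans $\Theta(\lg^2 n)$ positions), one table lookup over a superblock's array of $O(\lg n)$ miniblock relative min/max values ($O(\lg n)$-bit descriptor $\Rightarrow$ $2^{O(\lg n)}$-entry table: take the descriptor to be $\epsilon\lg n$ bits wide by making miniblocks fewer/larger, giving table size $n^{\epsilon}(\lg n)^{O(1)}=o(n)$), one analogous table lookup over a miniblock's array of block descriptors (same size bound), and finally one table lookup inside the identified block. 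Each of these tables is $o(n)$ bits and depends only on $n$, and all other stored arrays are $o(n)$ bits, so the total index is $o(n)$ bits and every phase is $O(1)$ time. I expect the main obstacle to be exactly this balancing act --- keeping every lookup table of size $n^{o(1)}$ or $O(\sqrt n\,\mathrm{polylog})$ while still resolving the query in a constant number of levels --- together with carefully handling the boundary case where the queried excess $k$ is never reached after $i$ (return ``no such position''), and the direction analysis that uses $|k-\excess(i)|\le\lg^2 n$ to bound the number of superblocks scanned; once the level sizes are pinned down, each individual table and array is a routine construction.
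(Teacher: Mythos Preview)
Your within-superblock machinery (multi-level decomposition with small lookup tables) is fine and is morally the same as the paper's $\sqrt{\lg n}$-ary tree over blocks. The fatal gap is in the inter-superblock search. You assert that because $|k-\excess(i)|\le\lg^{2}n$ and a superblock spans $\Theta(\lg^{2}n)$ positions, the answer lies within $O(1)$ superblocks of $i$. This is false: the bound is on the \emph{difference of excess values}, not on the \emph{distance to the answer}. Take the BP sequence $(^{\,n}\,)^{\,n}$ and let $i$ be the $5$th open parenthesis, so $\excess(i)=5$; then $\nextexcess(i,5)$ is the $(n-4)$th closing parenthesis, at distance $\Theta(n)$. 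In general the excess can rise far above $k$ (or dip below it, if $k>\excess(i)$) and stay there for arbitrarily many superblocks before returning; nothing forces it to hit $k$ nearby, nor to ``move monotonically away.'' Your ``away case'' sentence is where the argument collapses: there is no mechanism that forces the answer into the next few superblocks.

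The paper's proof handles exactly this long-range case with two devices you are missing. First, superblocks are made large ($\lg^{4}n$), and for each superblock $B$ with excess range $[e_{1},e_{2}]$ one stores, for each target $k$ in $[e_{1}-\lg^{2}n,e_{1})\cup[e_{2},e_{2}+\lg^{2}n)$, the explicit answer position; this costs $O(\lg^{3}n)$ bits per superblock, $o(n)$ total. Second, for targets $k\in[e_{1},e_{2}]$ one stores a pointer from $B$ to the first later superblock whose excess range contains $k$; after merging equal pointers these form a planar ``visibility'' graph on the $n/\lg^{4}n$ superblock-vertices, hence only $O(n/\lg^{4}n)$ pointers in total, again $o(n)$ bits (plus a small per-superblock index to map $k$ to its pointer). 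Following one such pointer jumps directly to the correct superblock, after which the within-superblock structure finishes the query. Without an analogue of these long-range pointers, your scheme cannot answer $\nextexcess$ in $O(1)$ time.
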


\begin{proof}
The auxiliary structure to support \open{}, \close{} and \excess{} in constant time 
using $o(n)$ additional bits has been described by Munro and Raman~\cite{MR}
(see also \cite{Geary-BP} for a simpler structure). We now describe the 
auxiliary structures required to support the \nextexcess{} query in constant time.

We split the parenthesis sequence corresponding to the tree into 
superblocks of size $s = \lg^4 n$ and each superblock into blocks 
of size $b = (\lg n)/2$.  
Since the excess values of two consecutive positions differ only by
one, the set containing the excess values of all the positions in a
superblock/block forms a single range of integers, which we denote 
as the {\em excess-range} of the superblock/block. We store this 
excess range information for each superblock, which requires 
$O(n \lg n / \lg^{4} n) = o(n)$ bits for the entire sequence. 
For each block, we also store the excess-range information, where 
$\excess$ is defined with respect to the beginning of the superblock. 
As the excess-range for each block can be stored using $O(\lg\lg n)$
bits, the space used over all the blocks is $O(n \lg\lg n / \lg n) = o(n)$ bits.

For each superblock, we store the following 
structure to support the queries within the superblock (i.e., if the answer 
lies in the same superblock as the query element) in $O(1)$ time:

We build a complete tree with branching factor $\sqrt{\lg n}$ (and
hence constant height) with blocks at the leaves. Each internal node of 
this tree stores the excess ranges of all its children, where the 
excess-range of an internal node is defined as the union of the
excess-ranges of all the leaves in its subtree. 
Thus, the size of this structure for each superblock 
is $O(s \lg\lg n/ b) = o(s)$ bits. Using this structure, given any position
$i$ in the superblock and a number $k$, we can find the position
$\nextexcess(i,k)$ in constant time, if it exists within the superblock.
More specifically, a query is answered by starting at the leaf (block) $v$ 
containing the position $i$, traversing the tree upwards till we find the first 
ancestor node which has a child with preorder number larger than that of 
$v$ whose excess-range contains $k$, and then traversing downwards 
to reach the leaf containing the answer to the query; searches at the 
internal nodes and leaves are performed using precomputed tables, 
as the information stored at these nodes is either $O(\sqrt{\lg n} \lg\lg n)$ 
bits for internal nodes, or $(\lg n)/2$ bits for leaves.

Let $[e_1, e_2]$ be the range of $\excess$ values in a superblock $B$.  
Then for each $i$ such that $e_1 - \lg^2 n \le i < e_1$ or 
$e_2 \le i < e_2 + \lg^2 n$, we store the least position to the right of 
superblock $B$ whose excess is $i$, in an array $A_B$.  

In addition, for each $i$, $e_1 \le i \le e_2$, we store a pointer to
the first superblock $B'$ to the right of superblock $B$ such that $B'$ has a
position with excess $i$. Then we remove all multiple pointers (thus
each pointer corresponds to a range of excesses instead of just one
excess). The graph representing these pointers between superblocks is
planar. [One way to see this is to draw the graph on the Euclidean
plane so that the vertex corresponding to the $j$-th superblock $B$, with
$\excess$ values in the range $[e_1,e_2]$, is represented as a
vertical line with end points $(j,e_1)$ and $(j,e_2)$.  Then, there is
an edge between two superblocks $B$ and $B'$ if and only if the vertices
(vertical lines) corresponding to these are `visible' to each
other (i.e., a horizontal line connecting these two vertical lines at
some height does not intersect any other vertical lines in the
middle).] Since the number of edges in a planar graph on $m$ vertices
is $O(m)$, the number of these inter-superblock pointers (edges) is
$O(n/s)$ as there are $n/s$ superblocks (vertices).  The total space
required to store all the pointers and the array $A_B$ is 
$O(n \lg^3 (n/s)) = o(n)$ bits.

Thus, each superblock has a set of pointers associated with a set of ranges
of excess values. Given an excess value, we need to find the range
containing that value in a given superblock (if the value belongs to the
range of excess values in that superblock), to find the pointer associated
with that range. For this purpose, we store the following auxiliary
structure: If a superblock has more than $\lg n$ ranges associated with it
(i.e., if the degree of the node corresponding to a superblock in the graph
representing the inter-superblock pointers is more than $\lg n$),
then we store a bit vector for that superblock that has a $1$ at the
position where a range starts, and $0$ everywhere else. We also store
an auxiliary structure to support $\rank$ queries on this bit vector
in constant time. Since there are at most $n/(s \lg n)$ superblocks
containing more than $\lg n$ ranges, the total space used for storing
all these bit vectors together with the auxiliary structures is $o(n)$
bits. If a superblock has at most $\lg n$ ranges associated with it, then
we store the lengths of these ranges (from left to right) using the
searchable partial sum structure of \cite{RRR-WADS}, that supports
predecessor queries in constant time. This requires $o(s)$ bits for 
every such superblock, and hence $o(n)$ bits overall.

Given a query $\nextexcess(i,k)$, let $B$ be the superblock to which the
position $i$ belongs. We first check to see if the answer lies within
the superblock $B$ (using the prefix sums tree structure mentioned above),
and if so, we output the position. Otherwise, let $[e_1,e_2]$ be the
range of $\excess$ values in $B$. If $e_1 - \lg^2 n \le k < e_1$ or 
$e_2 \le k < e_2 + \lg^2 n$, then
we can find the answer from the array $A_B$. Otherwise (when $e_1 \le
k \le e_2$), we first find the pointer associated with the range
containing $k$ (using either the bit vector or the partial sum
structure, associated with the superblock) and use this pointer to find the
block containing the answer. Finding the answer, given the superblock in
which it is contained, is done using the prefix sums tree structure
stored for that superblock.

Thus, using these structures, we can support $\nextexcess(i,k)$ for
any $i$ and $|k - \excess(i)| \le \lg^2 n$ in constant time.  
\end{proof}

By using the balanced parenthesis representation 
of the given tree  and by storing the auxiliary structures of Theorem~\ref{thm:next-excess}, 
we can support the following:
given a node in the tree find its $k$-th ancestor, for $k \le \lg^2 n$, and also 
the next node in the level-order traversal of the tree in constant time. 
To support general level ancestor queries, we do as follows.

Firstly, we mark all nodes of the tree that are at a depth which is a multiple of
$\lg^2 n$ and whose height is at least $\lg^2 n$ (similar to~\cite{AH}). 
There are $O(n/\lg^2 n)$ such nodes. 
We store all these marked nodes as a tree (preserving
the ancestor relation among these nodes) and store a linear space
(hence $o(n)$-bit) structure that supports level-ancestor queries in
constant time \cite{BF}. Note that one level in this tree corresponds
to exactly $\lg^2 n$ levels in the original tree. We also store the
correspondence between the nodes in the original tree and those in the
tree containing only the marked nodes.

A query for $\levelancestor(x,k)$, the ancestor of $x$ at height $k$
from $x$ (i.e., at depth $\depth(x) - k$), is answered as follows: If
$k \le \lg^2 n$, we find the answer using a $\nextexcess$ query.
Otherwise, we first find the least ancestor of $x$ which is marked
using at most two $\nextexcess$ queries (the first one to find the
least ancestor whose depth is a multiple of $\lg^2 n$, and the next
one, if necessary, to find the marked ancestor whose height is at least
$\lg^2 n$). From this we find the highest marked ancestor of $x$ which
is a descendant of the answer node, using the level-ancestor structure
for the marked nodes. The required ancestor is found from this node
using another $\nextexcess$ query, if necessary. 

The query $\levelsuccessor(x)$, which returns the successor of node 
$x$ in the level order (i.e., the node to the right of $x$ which is in the 
same level as $x$), can be supported in constant time using a 
$\nextexcess(x,depth(x))$ query.
Since all the nodes in a subtree are together in the
parenthesis representation, checking whether a node $x$ is a
descendant of another node $y$ can be done in constant time by
comparing either the open or closing parenthesis position of $x$ with
the open and closing parenthesis positions of $y$. Hence the representation 
also supports the $\isancestor$ operation in constant time.

Thus we have:

\begin{corollary}\label{cor:level-ancestor}
Given an unlabeled rooted tree with $n$ nodes, there is a structure
that represents the tree using $2n+o(n)$ bits of space and supports
$\parent$, $\firstchild$, $\levelancestor$, $\levelsuccessor$ and
$\isancestor$ queries in O(1) time.
\end{corollary}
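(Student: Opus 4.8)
The plan is to collect all the navigational machinery developed in this section and check that each of the five operations listed falls out of what has already been built. The representation is fixed: the tree $T$ is stored as a $2n$-bit balanced parenthesis sequence, augmented with the $o(n)$-bit index of Theorem~\ref{thm:next-excess} (which gives \open, \close, \excess{} and $\nextexcess(i,k)$ for $|k-\excess(i)| \le \lg^2 n$), together with the $o(n)$-bit structure over the marked nodes (nodes whose depth is a multiple of $\lg^2 n$ and whose height is at least $\lg^2 n$) carrying a linear-space constant-time level-ancestor structure \`a la Bender--Farach-Colton, plus the $o(n)$-bit correspondence between original nodes and marked nodes. The total space is $2n + o(n)$ bits, so the space claim is immediate; what remains is to verify the query times.

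First I would dispatch the easy operations. The operation $\parent(x)$ is $\nextexcess$-free: since each node occupies one open and one matching close parenthesis, $\parent(x)$ is read off from \open/\close and \excess{} in $O(1)$ time (this is already available from \cite{MR}), and likewise $\firstchild$. The operation $\isancestor(x,y)$ reduces to comparing the open/close positions of $x$ with those of $y$, since a subtree forms a contiguous run of parentheses — $O(1)$ time, as noted just above the statement. The operation $\levelsuccessor(x)$ is exactly $\nextexcess(x, \depth(x))$, which is case (b) of the observation preceding Theorem~\ref{thm:next-excess}, and here $|k - \excess(x)| = 0 \le \lg^2 n$, so it is supported in $O(1)$ time. (Level-predecessor is symmetric, using a ``previous-excess'' variant or reversing the sequence.)

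The one operation needing real argument is $\levelancestor(x,k)$, and here I would simply rehearse the three-phase query algorithm already described in the text. If $k \le \lg^2 n$, answer directly with a single $\nextexcess(x, \depth(x) - k)$ query, which is within range. Otherwise: (i) use at most two $\nextexcess$ queries to climb from $x$ to its lowest marked ancestor (one query to reach the nearest ancestor whose depth is a multiple of $\lg^2 n$ — a climb of less than $\lg^2 n$, hence in range — and possibly one more to skip up to a node of height $\ge \lg^2 n$); (ii) use the linear-space level-ancestor structure on the marked-node tree to jump to the highest marked ancestor of $x$ that is still a descendant of the target — valid because consecutive levels of the marked tree are exactly $\lg^2 n$ apart in $T$, so the target lies within $\lg^2 n$ levels of this marked node; (iii) finish with one more $\nextexcess$ query over that short remaining span. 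Each phase is $O(1)$, so $\levelancestor$ is $O(1)$.

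The main obstacle — really the only substantive point — is confirming that every $\nextexcess$ call made along the way stays within the $|k - \excess(i)| \le \lg^2 n$ window for which Theorem~\ref{thm:next-excess} guarantees constant time; this is why the marking granularity is chosen to be $\lg^2 n$ and why the case split at $k \le \lg^2 n$ appears. Once that bookkeeping is checked, the corollary is just the assembly of Theorem~\ref{thm:next-excess}, the $o(n)$-bit BP navigation index of \cite{MR}, and the marked-node level-ancestor structure of \cite{BF}, and the proof is complete.
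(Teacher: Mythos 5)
Your proposal matches the paper's own proof essentially verbatim: the same $2n$-bit BP encoding with the $o(n)$-bit index of Theorem~\ref{thm:next-excess}, the same marked-node tree (depth a multiple of $\lg^2 n$, height at least $\lg^2 n$) carrying a Bender--Farach-Colton structure, the same three-phase $\levelancestor$ algorithm with the case split at $k \le \lg^2 n$, and the same reductions of $\levelsuccessor$, $\isancestor$, $\parent$, $\firstchild$ to $\nextexcess$ and the standard BP operations. The bookkeeping you highlight (that every $\nextexcess$ call stays within the $\lg^2 n$ window) is indeed the substantive point and is argued exactly as in the paper.
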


\section{Representing functions}\label{sec:functs}
We now consider the representation of functions $f: [n] \rightarrow [n]$.
Given such a function $f$, we equate it to a digraph in which every node is of
outdegree $1$, and represent this graph space-efficiently. We then show how
to compute arbitrary powers of the function by translating them into
the navigational operations on the digraph.

More specifically, given an arbitrary function $f: [n] \rightarrow [n]$, consider the digraph
$G_f = (V,E)$ obtained from it, where $V = [n]$ and $E = \{
\angle{i,j} : f(i) = j \}$. In general this digraph consists of a set
of connected components where each component has a directed cycle with
each vertex being the root of a (possibly single node) directed tree,
with edges directed towards the root. See Figure
\ref{fig:function-graph}(a) for an example. We refer to each connected component as
a \emph{gadget}.

\begin{figure}[h]
\subfigure[Graph representation of the function $f(x) = (x^2 + 2x
-1) \bmod 19$, for $0 \le x \le 18$. The vertex labels in the
brackets correspond to the function $g$ obtained by renaming the
vertices] { \epsfxsize \textwidth
\epsfbox{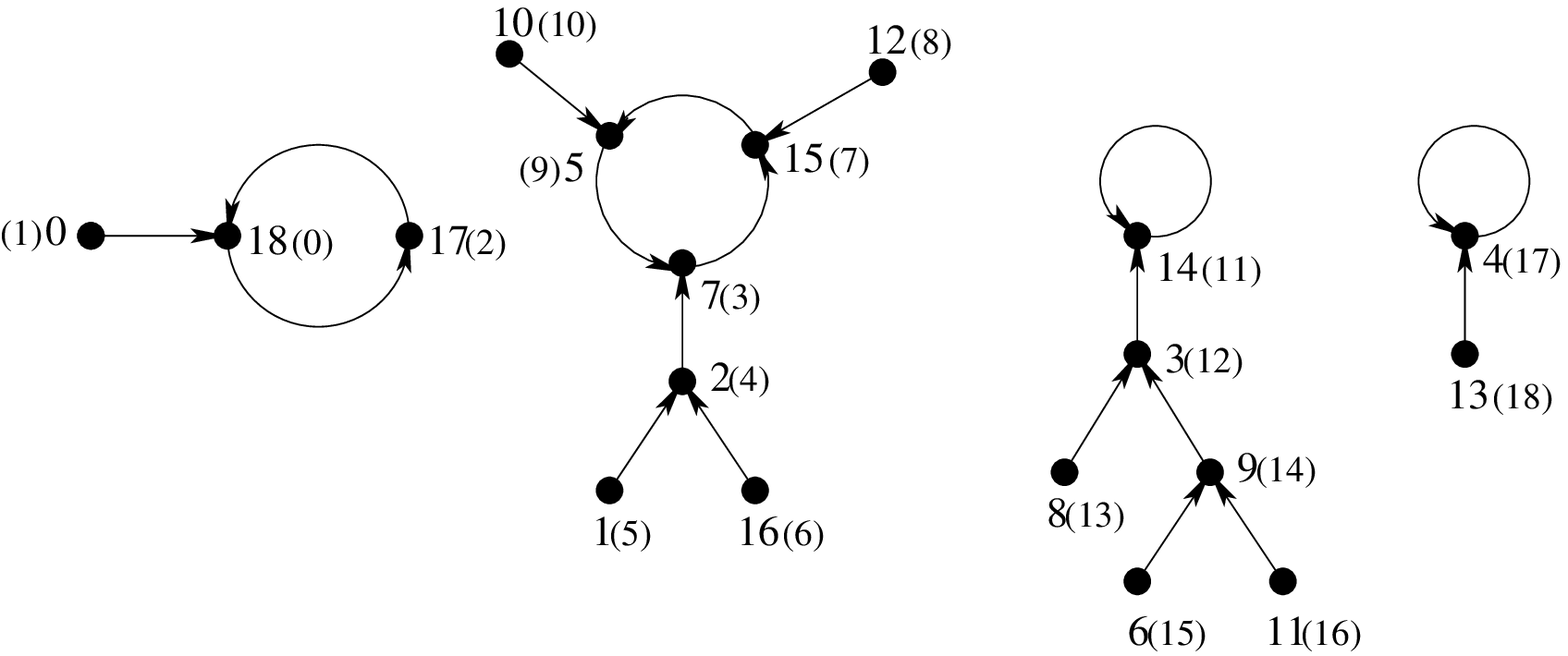}}
%
\subfigure[Perm defining the isomorphism between $G_f$ and $G_g$] {
\vbox{\hsize=\textwidth
\renewcommand{\tabcolsep}{5pt}
  \def\sp{\hspace{0cm}}
  \begin{center}
\begin{tabular}{ccccccccccccccccccc}
0\sp &1\sp &2\sp &3\sp &4\sp &5\sp &6\sp &7\sp &8\sp &9\sp &10\sp %
     &11\sp &12\sp &13\sp &14\sp &15\sp &16\sp &17\sp &18\\
1\sp &5\sp &4\sp &12\sp &17\sp &9\sp &15\sp &3\sp &13\sp &14\sp %
     &10\sp &16\sp &8\sp &18\sp &11\sp &7\sp &6\sp &2\sp &0\\
\end{tabular}
\end{center}}}
\subfigure[Parenthesis representation and the bit vectors indicating the
starting positions of the gadgets and the trees (auxiliary structures are
not shown)]
{ \vbox{\hsize=\textwidth

\renewcommand{\tabcolsep}{1.8pt}
\begin{tabular}{cccclcclcccccccclcccclcccclcccccccccccclcccc}
( &( &) &) &  &( &) & & ( &( &( &) &( &) &) &) & &( &( &) &) & &( &( &) &) & &( &( &( &) &( &( &) &( &) &) &) &) & &( &( &) &) \\
1 &0 &0 &0 &  &0 &0 & & 1 &0 &0 &0 &0 &0 &0 &0 & &0 &0 &0 &0 & &0 &0 &0 &0 & &1 &0 &0 &0 &0 &0 &0 &0 &0 &0 &0 &0 & &1 &0 &0 &0 \\
1 &0 &0 &0 &  &1 &0 & &1  &0 &0 &0 &0 &0 &0 &0 & &1 &0 &0 &0 & &1 &0 &0 &0 & &1 &0 &0 &0 &0 &0 &0 &0 &0 &0 &0 &0 & &1 &0 &0 &0 \end{tabular}}}
\caption{Representing a function}
\label{fig:function-graph}
\end{figure}

The main idea of our representation is to store the structure of the
graph $G_f$ as a tree $T_{f}$ such that the forward and inverse queries
can be translated into appropriate navigational operations on the tree.
We store the bijection between the nodes labels in $G_{f}$ and the preorder
numbers of the `corresponding' nodes in $T_{f}$ as a perm $\pi$.
To support the queries for powers of $f$, we need to
find the node in $T_{f}$ corresponding to a given label, perform the
required navigational operations on the tree to find the answer node(s), and
finally return the label(s) corresponding to the answer node(s).
Hence we store the perm $\pi$ using one of the perm representations
from Section~\ref{sec:perms} so that $\pi()$ and $\pi^{-1}()$ can be supported
efficiently.

We define a gadget to be {\em wide} if its cycle length is larger than
$\lg^{1/3} n$, and {\em narrow} otherwise. The {\em size} of a gadget
or a tree is defined as the number of nodes in it.
Before constructing the tree $T_{f}$, we first re-order the gadgets and the tree
nodes within each gadget as follows:
(i) We first order the gadgets so that all the narrow gadgets are before
any of the wide gadgets.
(ii) Wide gadgets are ordered arbitrarily among themselves, while
narrow gadgets are ordered in the non-decresing order of their sizes.
(iii) Within each group of narrow gadgets with the same size, we arrange
them in the non-decreasing order of their cycle lengths (the cycle
length of a gadget is the number of trees in the gadget).
(iv) For each gadget whose cycle length is greater than $1$,
we break the cycle by selecting a tree with maximal height among
all the tree that belong to the gadget and deleting the outgoing edge
from the root of this tree. We then order the trees such that the trees
are in the reverse order as we move along the cycle edges in the forward
direction (thus the tree with the maximal height that was selected,
is the last tree in this order).
(v) We also arrange the nodes within each tree such that the
leftmost path of any subtree is the longest path in that subtree,
breaking the ties arbitrarily.

We now construct a tree that encodes the structure of the function $f$.
Let $C_1, C_2, \dots, C_p$ be the gadgets in $G_f$ and let
$T_i^1, T_i^2, \dots, T_i^{q_{i}}$ be the trees in the $i$-th gadget, for
$1 \le i \le p$, after the re-ordering of the gadgets and the nodes the
within the trees. Let $r_{i}^{j}$ be the root of the tree
$T_{i}^{j}$, for $1 \le i \le p$ and $1 \le j \le q_{i}$. We refer the node
$r_{i}^{1}$ as the {\em root of the gadget} $C_{i}$.

Construct a tree $T_{f}$ with root $r$ whose children are the $p$ nodes:
$r_{1}^{1}, r_{2}^{1}, \dots r_{p}^{1}$. For $1 \le i \le p$, under the node
$r_{i}^{1}$ add the path $r_{i}^{2} - r_{i}^{3} - \dots - r_{i}^{q_{i}}$. Also attach
the subtree under the root $r_{i}^{j}$ in $T_{i}^{j}$ to the node $r_{i}^{j}$ in $T_{f}$.
The size of $T_{f}$ is $n+1$ (the $n$ nodes in $G_{f}$ plus the new root $r$).
We represent the tree $T_{f}$ using the structure of Corollary~\ref{cor:level-ancestor}
using $2n+o(n)$ bits. Items (iv) and (v) above ensure that the leftmost path in any
subtree of $T_{f}$ is a longest path in that subtree, and hence is represented by a
sequence of open parentheses in the BP sequence. This enables us to
find the descendent of any node in the subtree at a given level, if it exists,
in constant time.

We number of the nodes of $T_{f}$ with their pre-order numbers,
starting from $0$ for the root $r$.
Every node in the tree $T_{f}$, except for the root $r$,
corresponds to a unique node in the graph $G_{f}$, and
this correspondence can be easily determined from the
construction of the tree.
As mentioned earlier, we store this bijection $\pi$ between the labels in
$G_{f}$ and the preorder numbers in $T_{f}$ by representing the perm $\pi$
that supports $\pi()$ and $\pi^{-1}()$ efficiently.

In addition to the perm $\pi$ and the tree $T_{f}$, we store the following
data structures using $o(n)$ bits:
\begin{enumerate}
\item An array $A$ storing the distinct sizes of the narrow gadgets in the
increasing order (i.e., the sequence $s_{1}, s_{2}, \dots, s_{d}$,
where $1 \le s_{1} < s_{2} < \dots < s_{d} \le n$, and for $1 \le i \le d$
there exists a narrow gadget  of size $s_{i}$ in $G_{f}$). Note than $d = O(\sqrt{n})$.
\item An FID for the set $B = \{ p_{1}, p_{2}, \dots p_{d} \}$, where $p_{i}$
is the preorder number of the first narrow gadget (in the above ordering)
whose size is $s_{i}$ (or equivalently, the sum of the sizes of all the narrow
gadgets in $G_{f}$ whose sizes are less than $s_{i}$), for $1 \le i \le d$.
\item An FID for the multiset $C = \{   s_{i,j} \}$, for $1 \le i \le d$ and $1 \le j \le n^{1/3}$,
where $s_{i,j}$ is the sum of the sizes of all the gadgets whose sizes are:
(i) less than $s_{i}$, and (ii) equal to $s_{i}$ whose cycle lengths are at most $j$.
(A $\fullrank$ operation in this FID enables us to find the cycle length of the gadget
containing the node with a given preorder number, if it is in a narrow gadget).
\item An array $A'$ that stores the size and cycle length of each wide gadget,
in the above ordering of the wide gadgets.
\item An FID for the set $B' = \{ p'_{1}, p'_{2}, \dots p'_{d'} \}$, where $d'$ is the
number of wide gadgets in $G_{f}$, and $p'_{i}$ is the preorder number of the
root of the $i$-th wide gadget (in the above ordering).
\end{enumerate}

Given a node in a tree, we can find its $k$-th successor (i.e., the
node reached by traversing $k$ edges in the forward direction), if it
exists within the same tree, in constant time using a $\levelancestor$
query.  The $k$-th successor of node $r_{i}^{j}$ (the root of the $j$th
tree in the $i$th gadget) can be found in $O(1)$ time by computing the
length of the cycle in the $i$th gadget, using $\fullrank$ and
$\select$ operations on the the above FIDs.  By combining these two,
we can find the $k$-th successor of an arbitrary node in a gadget in
constant time.

Given a node $x$ in a gadget, if it is not the root of any tree, then
we can find all its $k$-th predecessors (i.e., all the nodes reachable
by traversing $k$ edges in the reverse direction) in optimal time
using the tree structure by finding all the descendant nodes of $x$
that are $k$ levels below, as follows:
we first find the leftmost descendant in the subtree rooted at $x$ at
the given level, if it exists, in constant time, as the leftmost path is
represented by a sequence of open parentheses in the parenthesis
representation of the tree. From this node, we can find all the nodes
at this level by using the $\levelsuccessor$ operation to find the next
node at this level, checking whether the node is a descendant of $x$
using the $\isancestor$ operation, and stopping when this test fails.

To report the set of all $k$-th predecessors of a node $r_{i}^{j}$ (which is the root of the
$j$th tree in the $i$th gadget), if $j+k \le q_{i}$, then we report all the nodes in the subtree
(of $T_{f}$) rooted at $r_{i}^{j}$ that are at the same level as $r_{i}^{j+k}$. Otherwise,
we first find all trees $T_{x}^{y}$ which contain at least one answer, and then report all
the answers in each of those trees.

Now to find all the trees $T_{i}^{j}$ that contain at least one answer, we observe
that if $T_{i}^{j'}$ contains at least one node that is a $k$-th predecessor of $r_{i}^{j}$,
then it also contains at least one node that is a $(q_i + (k \bmod q_i))$-th predecessor
of $r_{i}^{j}$ (here $q_{i}$ is the number of trees in the $i$th gadget).
Also, the set of all $(q_i + (k \bmod q_i))$-th predecessors of $r_{i}^{j}$ is a subset of
the set of $k$-th predecessors of $r_{i}^{j}$, when $k \ge q_{i}$. In other words, the
set of all trees that contain at least one $k$-th predecessor of $r_{i}^{j}$
is the same as the set of all trees that contain at least one $(q_i + (k \bmod q_i))$-th
predecessor of $r_{i}^{j}$.

Thus to find the $k$-th predecessors of $r_{i}^{j}$, we identify two subsets of trees
whose union is the set of all trees in the gadget $C_i$ that contain at least one
answer. These two subsets are the set of all trees that contain at least one node
\begin{itemize}
\item at a depth of $k$ in the subtree rooted at node $r_{i}^{j}$ in $T_{f}$, and
\item at a depth of $k - (q_{i} - j)$ in the subtree rooted at $r_{i}^{1}$ in $T_{f}$.
\end{itemize}
Once we identify all the trees containing at least one answer, we can report all
the answer nodes in the tree $T_{f}$ in time linear in the number of such
nodes, as explained earlier. Each of these node numbers are then transformed
into their corresponding node numbers in $G_{f}$ using the representation of $\pi$.

Combining all these, we have:

\begin{theorem}
If there is a representation of a perm on $[n]$ that takes $P(n)$
space and supports forward in $t_f$
time and inverse in $t_i$ time, then there is a representation of a
function $f: [n] \rightarrow [n]$ that takes $P(n) + 2n + o(n)$ bits of space
and supports $f^k(i)$ in $O(t_f + t_i * |f^k(i)|)$ time (or in $O(t_i + t_f * |f^k(i)|)$ time),
for any integer $k$ (which can be stored in $O(1)$ words) and for any $i \in [n]$.
\end{theorem}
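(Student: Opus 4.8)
The plan is to assemble the final theorem from the machinery already set up in Section~\ref{sec:functs}, verifying that (a) the space bound is as claimed and (b) every case of the query $f^k(i)$ is handled within the stated time. For space, I would simply add up the components: the tree $T_f$ of size $n+1$ represented via Corollary~\ref{cor:level-ancestor} costs $2n + o(n)$ bits; the permutation $\pi$ costs $P(n)$ bits by hypothesis; and the auxiliary structures (1)--(5) --- the arrays $A, A'$ of distinct narrow-gadget sizes and wide-gadget data, and the three FIDs for $B$, $C$, $B'$ --- each occupy $o(n)$ bits, using $d, d' = O(\sqrt n)$ and Theorem~\ref{litomfid}(a) for the FIDs, together with the bound $n^{1/3}$ on the cycle length of a narrow gadget so that the multiset $C$ has $O(\sqrt n \cdot n^{1/3}) = o(n)$ elements. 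Summing gives $P(n) + 2n + o(n)$ bits.

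Next I would organize the query algorithm into the cases already sketched in the text. Given $i$ and $k$, first compute $x = \pi(i)$, the preorder number in $T_f$ of the node for $i$, in time $t_f$ (or $\pi^{-1}$, time $t_i$, depending on the orientation chosen). For $k \ge 0$ (forward): if $x$ is not a root of one of the gadget-trees, a $\levelancestor$ query on $T_f$ either lands inside the same tree (done in $O(1)$) or, when $k$ exceeds the height available, we detect this, jump to the gadget root via the FIDs for $B$, $C$ (narrow case) or $A'$, $B'$ (wide case), extract the cycle length $q_i$, reduce $k$ modulo $q_i$, and chase at most one more $\levelancestor$ step --- all $O(t_f + t_i + 1)$. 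For $k < 0$ (inverse): the $|k|$-th predecessors are exactly the descendants of $x$ at depth $|k|$ in $T_f$; using that the leftmost path of every subtree is a longest path (ensured by re-ordering steps (iv)--(v)), the leftmost such descendant is found in $O(1)$ via a $\nextexcess$/level-ancestor-style query, and the remaining ones are enumerated by repeated $\levelsuccessor$ followed by an $\isancestor$ check, stopping at the first failure --- this is $O(t_i \cdot |f^k(i)|)$ after translating each answer back through $\pi^{-1}$ (or $\pi$). The one genuinely fiddly sub-case is when $x$ \emph{is} a tree root $r_i^j$: here one must either report the subtree-at-depth-$k$ nodes under $r_i^j$ directly (when $j + k \le q_i$), or else identify, via the two depth conditions listed in the text --- depth $k$ under $r_i^j$ and depth $k - (q_i - j)$ under the gadget root $r_i^1$ --- all trees containing an answer, using the reduction to $(q_i + (k \bmod q_i))$-th predecessors; I would spell out why the union of those two subtree-slices captures every answer tree exactly once.

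For the running time I would then tally: locating $x$ costs $t_f$ (or $t_i$); all the FID operations and tree navigations are $O(1)$ each; and each of the $|f^k(i)|$ reported answers costs one $\pi^{-1}$ (or $\pi$) evaluation plus $O(1)$ tree work, giving the claimed $O(t_f + t_i \cdot |f^k(i)|)$ (symmetrically $O(t_i + t_f \cdot |f^k(i)|)$ if we store the inverse bijection instead). When $|f^k(i)| = 0$, i.e.\ no predecessor exists, the leftmost-descendant query fails immediately and we return the empty set in $O(t_f + t_i + 1)$ time, consistent with the bound. I would also note that $k$ fitting in $O(1)$ words is what lets the modular reductions $k \bmod q_i$ be done in $O(1)$ time.

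The main obstacle I expect is not any single calculation but the careful case analysis around gadget roots and the wraparound of $k$ around a cycle: one has to argue that the two subtree-slices in $T_f$ (under $r_i^j$ at depth $k$, and under $r_i^1$ at depth $k-(q_i-j)$) together cover \emph{exactly} the set of trees holding a $k$-th predecessor of $r_i^j$, with no omissions and with a clean way to avoid double-reporting a node that sits in the overlap; the equivalence ``contains a $k$-th predecessor $\iff$ contains a $(q_i + (k \bmod q_i))$-th predecessor'' stated in the text is the lever, but making the reporting bound genuinely output-sensitive requires that we never enumerate an empty slice's worth of $\levelsuccessor$ calls, which is why the longest-leftmost-path property (items (iv)--(v)) is essential. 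Everything else is a routine addition of $o(n)$ terms and a routine composition of $O(1)$-time primitives.
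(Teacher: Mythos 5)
Your proposal is correct and follows essentially the same route as the paper: the same decomposition of $G_f$ into gadgets and trees, the same tree $T_f$ represented via Corollary~\ref{cor:level-ancestor}, the same use of the permutation $\pi$ between labels and preorder numbers, the same $o(n)$-bit FIDs/arrays for cycle-length lookup, and the same query algorithm using $\levelancestor$ for forward steps and $\levelsuccessor$/$\isancestor$ for enumerating predecessors in output-sensitive time, with the gadget-root/cycle-wraparound case handled via the two subtree-slices. The only nit is your ``one more $\levelancestor$ step'' after the modular reduction: when the target tree root lies \emph{below} the current one on the cycle path, what is actually needed is a descent to a given depth along the leftmost path (possible in $O(1)$ precisely because items (iv)--(v) make that path a run of open parentheses), not a $\levelancestor$ query --- but this is the same primitive you invoke for the predecessor case, so the argument is sound.
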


Using the succinct perm representation of
Corollary \ref{cor1}, we get:

\begin{corollary}
There is a representation of a function $f: [n] \rightarrow [n]$ that
takes $(1+\epsilon) n \lg n + O(1)$ bits of space for any fixed
positive constant $\epsilon$, and supports $f^k(i)$ in $O(1+|f^k(i)|)$
time, for any integer $k$ (which can be stored in $O(1)$ words and
for any $i \in [n]$.
\end{corollary}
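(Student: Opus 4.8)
The plan is to apply the theorem immediately preceding this corollary, instantiated with the succinct permutation representation of Corollary~\ref{cor1}, choosing the parameter $t$ of that corollary to be a constant depending only on $\epsilon$. (The remark following Corollary~\ref{cor1} already carries out essentially this substitution at the level of perms; here we simply push it through the function-to-tree reduction.)

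Concretely, let $c$ be the constant hidden in the $O((n/t)\lg n)$ redundancy term of Corollary~\ref{cor1}, and set $t = \ceil{3c/\epsilon}$, a positive integer constant. For all $n$ larger than a threshold $n_0 = n_0(\epsilon)$ we have $t \le \lg n$, so Corollary~\ref{cor1} applies and yields a representation of an arbitrary perm on $[n]$ of size $P(n) = \calp(n) + O((n/t)\lg n) \le n\lg n + (\epsilon/3)\,n\lg n + O(1)$ bits that supports $\pi()$ in $t_f = O(1)$ time and $\pi^{-1}()$ in $t_i = O(t) = O(1)$ time. Plugging this into the preceding theorem gives a function representation using $P(n) + 2n + o(n)$ bits; since $2n + o(n) = o(n\lg n)$, for $n$ large enough this total is at most $(1+\epsilon)\,n\lg n + O(1)$. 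Because $t$ is a constant, the query bound $O(t_f + t_i\cdot|f^k(i)|)$ simplifies to $O(1 + |f^k(i)|)$, and the paper's standing assumption that $k$ is polynomially bounded guarantees $k$ fits in $O(1)$ words. Finally, for the finitely many $n \le n_0$ the function is one of boundedly many objects and can be stored in $O(1)$ bits, which the additive $O(1)$ term absorbs; this also covers any residual lower-order slack.

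I do not expect a genuine obstacle here: the argument is a direct substitution, and the only thing that requires attention is the bookkeeping of lower-order terms --- verifying that the perm redundancy $O((n/t)\lg n)$ together with the additive $2n + o(n)$ from the function-to-tree reduction can be pushed below $\epsilon\,n\lg n$ by taking $t$ a sufficiently large constant, and checking that, once $t$ is fixed as a constant, no term in the time bound grows faster than the target $O(1 + |f^k(i)|)$.
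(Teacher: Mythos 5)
Your proposal is correct and is exactly the argument the paper intends, which it compresses into the single line ``Using the succinct perm representation of Corollary~\ref{cor1}, we get:'' before stating the result. You have simply unpacked that substitution, fixing $t$ as a constant depending on $\epsilon$, verifying that the $2n+o(n)$ overhead of the function-to-tree reduction is absorbed by the $\epsilon\,n\lg n$ slack, and noting that constant $t$ collapses the $O(t_f + t_i\cdot|f^k(i)|)$ bound to $O(1+|f^k(i)|)$.
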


\subsection{Functions with arbitrary ranges}

So far we considered functions whose domain and range are the same set
$[n]$. We now consider functions $f: [n] \rightarrow [m]$ whose domain and
range are of different sizes, and deal with the two cases:
(i) $n > m$ and (ii) $n < m$ separately. These results can
be easily extended to the case when neither the domain nor the range
is a subset of the other. We only consider the queries for positive
powers.

\noindent
{\bf Case (i) $n > m$: } A function $f: [n] \rightarrow [m]$, where $n > m$
can be represented by storing the restriction of $f$ on $[m]$
using the representation mentioned in the previous section,
together with the sequence $S = f(m+1), f(m+2), \ldots, f(n)$ stored in an
array. This gives a representation that supports forward queries efficiently.

To support the inverse queries, we store the sequence
$S$ using a representation that supports
$\access$ and $\select$ queries efficiently, where $\access(i)$
returns the value $f(m+i)$, and $\select(j,k)$ returns the $k$-th
occurrence of the value $j$ in the sequence.
We use the following representation which is implicit
in Golynski~et~al.~\cite{Golynskietal}:
A sequence $S$ of length $n$ from an alphabet of size $k$
(where $n \geq k$) can be represented as a collection of
$\lceil{n/k}\rceil$ perms over $[k]$ together with $O(n)$
bits such that a $\select$ or an $\access$ query on $S$ can be
answered by performing a single $\pi()$ or $\pi^{-1}$ query on
one of the perms, together with a constant amount of
computation.

In addition, we augment the directed graph $G_{f}$, representing the
function $f$ restricted to $[m]$, with {\em dummy} nodes as follows:
if $f(m+i) = j$, then we add a dummy node $v$ as a `child' of the node
corresponding to $j$ in $G_{f}$. The node $v$ is a {\em representative}
of the set $\{ i | f(i) = j, i > m \}$. We represent this augmented directed
graph to support the forward and inverse queries, using $O(m)$ bits.
We also represent the perm that maps the `real' (non-dummy)
nodes to their original values in the function $f$. Finally, we store an
FID that indicates the positions of the dummy nodes in the order
determined by the representation of $G_{f}$, using $O(m)$ bits (note
that the size of the graph $G_{f}$ is $O(m)$).

To answer a query $f^{k}(i)$ for $i \in [n]$ and $k \ge 1$, we first find
the node $v$ corresponding to $i$ in the augmented graph $G_{f}$.
The node $v$ is a `real' node if $i \le m$, and can be found using the
perm $\pi$ that maps the nodes of $G_{f}$ to their values in $f$ and
the FID indicating the positions of dummy nodes.
We then find the node $u$ that is reached by traversing $k$
edges in the forward direction, using the structure of $G_{f}$. Finally,
the value corresponding to the node $u$ is obtained using the
perm $\pi$. If $i > m$, then the node $v$
is a dummy node, and we can find $j = f(i)$ using an $\access$ query
on the string $S$, and use the fact that $f^{k}(i) = f^{k-1}(j)$ to compute
the answer.

To answer a query $f^{-k}(i)$ for $i \in [m]$ and $k \ge 1$, we first find
the node corresponding to the value $i$ in $G_{f}$, find all the nodes
that can be reached by traversing $k$ edges in the backward direction,
and return the values corresponding to all such nodes.
Thus we have:

\begin{theorem}
If there is a representation of a perm on $[n]$ that takes $P(n)$
space and supports forward in $t_f$
time and inverse in $t_i$ time, then there is a representation of a
function $f: [n] \rightarrow [m]$, $n \ge m$ that takes
$(n-m) \ceil{\lg m} + P(m) + O(m)$ bits of space and supports $f^k(i)$ in
$O(t_f + t_i)$ time, for any positive integer $k$ and for any $i \in [n]$.
There is another representation of $f$ that takes $\ceil{n/m} P(m) + O(m)$
bits that supports, for any $k \ge 1$, $f^{k}(i)$ in $O(t_{f}+t_{i})$ time, and
$f^{-k}(i)$ in $O(t_{f} + t_i * |f^{-k}(i)|)$ time (or in $O(t_i + t_f * |f^{-k}(i)|)$ time).
\end{theorem}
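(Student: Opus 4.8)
The plan is to handle the two representations in the theorem statement separately, building in each case directly on the two constructions already sketched for Case (i) in the preceding text. For the first representation, I would store the restriction $f|_{[m]}$ using the function representation of the previous section (which, by the main function theorem, costs $P(m) + 2m + o(m)$ bits and supports $f^{k}(i)$ on the restricted domain in $O(t_f + t_i \cdot |f^{k}(i)|)$ time), and store $S = f(m+1),\ldots,f(n)$ in a plain array of $(n-m)\lceil \lg m\rceil$ bits. A forward query $f^{k}(i)$ for $i > m$ is answered by one array access to get $j = f(i) = f(m+i-m)$ and then invoking the restricted-domain structure on $f^{k-1}(j)$; for $i \le m$ we go straight to the restricted-domain structure. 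I would note that since $|f^{k}(i)| = 1$ for $k \ge 1$ (a function has a unique forward image), the time simplifies to $O(t_f + t_i)$ — indeed to $O(t_f)$ on the graph plus the $O(t_i)$ needed to translate the starting label through $\pi$. The space is $(n-m)\lceil\lg m\rceil + P(m) + O(m)$ as claimed, absorbing the $2m + o(m)$ into $O(m)$.

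For the second representation I would use the Golynski-et-al.\ decomposition quoted just above: store $S$ (now viewed as a length-$n$ sequence over alphabet $[m]$, with $n \ge m$) as $\lceil n/m\rceil$ perms over $[m]$ plus $O(n)$ bits, so that $\access$ and $\select$ on $S$ each reduce to a single $\pi()$ or $\pi^{-1}()$ call plus $O(1)$ work. Here I would be careful: the theorem claims $\lceil n/m\rceil P(m) + O(m)$ bits, so the ``$+O(n)$'' from the Golynski representation and the array $A'$, the FIDs, and $G_f$ itself must all be charged against the $\lceil n/m\rceil P(m)$ term (legitimate since $P(m) = \Omega(m\lg m)$ and $\lceil n/m\rceil \ge n/m$, giving $\lceil n/m\rceil P(m) = \Omega(n\lg m) \ge \Omega(n)$) — or one absorbs the $+O(m)$ for the graph-side structures and the $O(n)$ into the leading term. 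Then I would describe the augmented graph $G_f$ on the $O(m)$ real nodes plus one dummy node per distinct value appearing in $S$, the perm $\pi$ mapping real nodes to their $f$-values, and the FID marking dummy positions, exactly as in the text. Forward queries $f^{k}(i)$, $k\ge 1$: locate $i$'s node (via $\pi^{-1}$ and the dummy-FID if $i \le m$, or via $\access(i-m)$ on $S$ if $i > m$, reducing to $f^{k-1}(j)$), walk $k$ steps in $G_f$ using the $k$-th-successor machinery (which is $O(1)$ on the graph), then translate back via $\pi$ — total $O(t_f + t_i)$. Inverse queries $f^{-k}(i)$ for $i \in [m]$: find $i$'s node in $G_f$, enumerate all $k$-th predecessors in the graph (each reported node costs one $\pi$ translation), and additionally, for each real predecessor node $u$ that is the head of a dummy node, enumerate the preimages among $\{m+1,\ldots,n\}$ via $\select$ on $S$ (each $\select$ is one $\pi^{\pm1}$ call); summing, the cost is $O(t_f + t_i\cdot|f^{-k}(i)|)$, and symmetrically $O(t_i + t_f\cdot|f^{-k}(i)|)$ by swapping the roles of $\pi$ and $\pi^{-1}$ in the two representations.

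I expect the main obstacle to be bookkeeping rather than any deep idea: specifically, verifying that the $k$-th-predecessor enumeration in the augmented graph still runs in time linear in the output size even when many of the $k$-th predecessors are dummy nodes hanging off a single real node (so that one real node can contribute an unbounded number of answers, all found through a run of $\select$ calls on the same perm), and confirming that $|f^{-k}(i)|$ counted in $[n]$ — including these dummy-represented preimages — is exactly what the time bound must be measured against. A secondary point needing care is that the recursion $f^{k}(i) = f^{k-1}(j)$ for $i > m$ only ``peels'' one step, so the cost does not blow up with $k$; after the first access we are entirely inside $[m]$ and the restricted-domain (graph) structure handles arbitrary $k$ in $O(1)$ graph-time. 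I would state both parts together as the displayed theorem and remark only that the bounds are unchanged if one interchanges which perm representation is used for $\pi$ versus for the sequence $S$, which is what gives the two symmetric time bounds.
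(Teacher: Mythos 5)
Your proposal follows the paper's construction essentially verbatim for both representations: plain array plus the $[m]\to[m]$ structure for the first, and the Golynski sequence representation plus the dummy-node-augmented $G_f$, perm $\pi$, and dummy-marking FID for the second, with the same query algorithms. You are correct, and your remark about folding the $O(n)$ Golynski overhead and the extra perm into the $\lceil n/m\rceil P(m)$ budget is a valid and slightly more careful accounting than the paper makes explicit.
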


\noindent
{\bf Case(ii) $n < m$: } For a function $f: [n] \rightarrow [m]$, where $n <
m$, larger powers (i.e., $f^k(i)$ for $k \ge 2$) are not defined in
general (as we might go out of the domain after one or more applications
of the function).

Let $R$ be the set of all elements in the range $[m]$ that have pre-images in
the domain $[n]$ whose values are greater than $n$. In the graph $G_{f}$
representing the function $f$, each element in $R$ corresponds to the root
of a tree with no outgoing edges. We order these trees such that elements
corresponding to these roots are in the increasing order. We then store an
indexable dictionary for the set $R \subseteq [m]$ using
$\lg {m \choose |R|} + o(|R|) + O(\lg\lg m)$ bits . Since $|R| \le n$,
this space is at most $n \lg (m/n) + O(n + \lg\lg m)$ bits. The size of the graph $G_{f}$
is $O(n)$ and hence is stored in $O(n)$ bits using the representation described
in the previous section. Finally, we store the correspondence between the node
numbering given by the $O(n)$-bit representation and the actual node labels in
$G_{f}$, except for the nodes corresponding to $R$. As all these nodes are in
the set $[n]$, we need to store a perm $\pi$ over $[n]$.

A query for $f^k(i)$, for $i \in [n]$ and $k \ge 1$ is answered by first
finding the node corresponding to $i$ in $G_{f}$ using $\pi$, then finding
the $k$-th node in the forward direction, if it exists, using the structure
of $G_{f}$, and finally finding the element corresponding to this node,
using the representation of $\pi$ again.
To find the set $f^{-k}(i)$, for $i \in [m]$ and $k \ge 1$, we first find the
node $x$ corresponding to $i$ in $G_{f}$ using either the representation
of $\pi$ if $i \le n$, or using the indexable dictionary stored for the set
$R$ if $n < i \le m$. We then find all the nodes reachable from $x$ by
taking $k$ edges in the backward direction. We finally report the elements
corresponding to each of these nodes, using the representation of $\pi$.
Thus we have:

\begin{theorem}
If there is a representation of a perm on $[n]$ that takes $P(n)$
space and supports forward in $t_f$
time and inverse in $t_i$ time, then there is a representation of a
function $f: [n] \rightarrow [m]$, $n < m$ that takes $n \lg (m/n) + P(n)
+ O(n)$ bits. For any positive integer $k$, this representation supports
the queries for $f^k(i)$, for any $i \in [n]$ (returns the power if
defined and $-1$ otherwise) in $O(t_f + t_i)$ time, and supports
$f^{-k}(i)$, for any $i \in [m]$ in $O(t_{f} + t_i * |f^{-k}(i)|)$ time
(or in $O(t_i + t_f * |f^{-k}(i)|)$ time).
\end{theorem}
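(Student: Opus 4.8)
\section*{Proof proposal}

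The construction is the one sketched in the paragraphs immediately before the theorem; what remains is to assemble it and account for space and time. I would first fix $G_f=(V,E)$ with $V=[n]\cup f([n])$ and $E=\{\angle{i,f(i)}:i\in[n]\}$, note $|V|\le 2n$, and observe that every component of $G_f$ is either a gadget in the sense of Section~\ref{sec:functs} (a directed cycle carrying rooted trees) or a tree whose root lies in $R=f([n])\setminus[n]$ and has no outgoing edge. Since $R\subseteq[m]$ and $|R|\le n$, store $R$ in an indexable dictionary (Theorem~\ref{litomfid}(b)) using $\lg{m\choose|R|}+o(|R|)+O(\lg\lg m)$ bits; this is at most $n\lg(m/n)+O(n)$ bits because $x\mapsto x\lg(m/x)+O(x)$ is nondecreasing for $x\le m/e$ (and $\lg{m\choose|R|}\le m=O(n)$ when $n>m/e$), the additive $O(\lg\lg m)$ being absorbed into $n\lg(m/n)+O(n)$. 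The digraph $G_f$ has $O(n)$ nodes, so it is encoded in $O(n)$ bits by the structure of Corollary~\ref{cor:level-ancestor} (with a super-root added, plus the $o(\cdot)$-bit FIDs of Section~\ref{sec:functs} that record gadget sizes and cycle lengths), and the bijection $\pi$ on $[n]$ between the non-$R$ nodes and their labels is stored in $P(n)$ bits with the stated query times. The total space is $n\lg(m/n)+P(n)+O(n)$ bits.

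For the queries I would reuse the navigational primitives of Section~\ref{sec:functs}. To compute $f^k(i)$ for $k\ge1$, $i\in[n]$: look up the node $x$ of $i$ through $\pi$, take $k$ forward edges from $x$ in $G_f$ --- an $O(1)$-time step combining a $\levelancestor$ query inside a tree with an $O(1)$ cycle-length computation inside a gadget --- and either report that the walk reaches a node of $R$ before completing $k$ steps (output $-1$) or look up the label of the endpoint through $\pi$; since each query makes one lookup in each direction of $\pi$, the total is $O(t_f+t_i)$. To compute $f^{-k}(i)$ for $k\ge1$, $i\in[m]$: locate the node $x$ of $i$ through $\pi$ if $i<n$ and through the indexable dictionary for $R$ if $n\le i<m$ (returning $\emptyset$ if $i\notin[n]\cup R$, or if the relevant subtree is empty); then enumerate the nodes $k$ backward edges away exactly as in Section~\ref{sec:functs} --- find the leftmost descendant $k$ levels below $x$ with $\nextexcess$, sweep that level with $\levelsuccessor$ while $\isancestor(x,\cdot)$ holds, and, when $x$ is the root of a tree of a gadget, combine the two families of trees identified in Section~\ref{sec:functs} --- outputting the label of each of the $|f^{-k}(i)|$ answers through $\pi$. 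Orienting $\pi$ so the initial label-to-node lookup is its forward ($t_f$) direction gives cost $O(t_f+t_i\cdot|f^{-k}(i)|)$, and the opposite orientation gives $O(t_i+t_f\cdot|f^{-k}(i)|)$.

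The one step needing real care --- and the main obstacle --- is correctness of the forward walk once $[n]$ is allowed to induce cycles: a node of $[n]$ may sit on a cycle (so $f^k(i)$ is always defined and the walk wraps around it), inside a tree hanging off a cycle, or inside a tree rooted at some $r\in R$ (so the walk stalls at $r$). I would check that the $\levelancestor$-plus-cycle-length procedure of Section~\ref{sec:functs} --- which in the pure-function setting may assume every tree root lies on a cycle --- extends correctly to the new $R$-rooted trees, and, crucially, that ``the $k$-th forward successor of $x$ does not exist'' holds \emph{exactly} when $f^k(i)$ is undefined, so that $-1$ is returned precisely when it should be. The remainder is routine bookkeeping over data structures already constructed in Section~\ref{sec:functs}.
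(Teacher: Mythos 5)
Your proposal follows the paper's proof essentially step for step: define $R = f([n]) \setminus [n]$ and store it in an indexable dictionary, encode the $O(n)$-node digraph $G_f$ with the machinery of Section~\ref{sec:functs}, keep the label-to-node bijection on $[n]$ as the permutation $\pi$ in $P(n)$ bits, and answer $f^k$ and $f^{-k}$ by translating them into forward and backward walks in $G_f$. The extra care you take in bounding $\lg {m \choose |R|}$ and in flagging the $R$-rooted (cycle-free) trees as needing explicit handling in the forward walk usefully makes precise what the paper leaves implicit, but it is not a departure from the paper's argument.
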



\begin{thebibliography}{10}
\bibitem{AH}
S. Alstrup and J. Holm.
\newblock Improved algorithms for finding level-ancestors in dynamic
trees.
\newblock In {\em Proceedings of the 27th International Conference on
Automata, Language and Programming}, LNCS 1853, 73--84, 2000.

\bibitem{BYM}
D. A. Bader, M. Yan, B. M. W. Moret.
\newblock A linear-time algorithm for computing inversion
distance between signed permutations with an experimental study.
\newblock University of New Mexico Technical Report HPCERC2001-005 
(August 2001): http://www.hpcerc.unm.edu/Research/tr/HPCERC2001-005.pdf

\bibitem{BF}
M. A. Bender and M. Farach-Colton.
\newblock The level ancestor problem simplified.
\newblock In {\em Proceedings of LATIN}, LNCS 2286, 508--515, 2002.

\bibitem{bdmrrr}
D. Benoit, E. D. Demaine, J. I. Munro, R. Raman, V. Raman and S. S. Rao.
\newblock Representing trees of higher degree.
\newblock {\it Algorithmica}, {\bf 43(4)} 275--292, 2005.

\bibitem{BV}
O. Berkman and U. Vishkin.
\newblock Finding level-ancestors in trees.
\newblock {\it Journal of Computer and System Sciences}, {\bf 48(2)} 214--230,
(1994).

\bibitem{BFCM}
A. Z. Broder, M. Charikar, A. M. Frieze and M. Mitzenmacher.
\newblock Min-wise independent permutations. 
\newblock {\it Journal of Computer System Sciences}, {\bf 60} 630--659, (2000).

\bibitem{CLRS}
T.~H. Cormen, C.~E. Leiserson, R.~L. Rivest and C.~Stein.
\newblock Introduction to Algorithms (3rd edition).
\newblock {\em The MIT Press}, 2009.

%
\bibitem{Dietz}
P. F. Dietz.
\newblock Finding level-ancestors in dynamic trees.
\newblock In {\em Proceedings of the 2nd Workshop on Algorithms and 
Data Structures}, LNCS 519, 32--40, 1991.

\bibitem{DPT-10}
Y. Dodis, M. Patrascu and M. Thorup.
\newblock Changing base without losing space.
\newblock In {\em Proceedings of ACM Symposium on Theory of Computing}, 593--602, 2010.

\bibitem{GP07}
A. G\'{a}l and P. B. Miltersen.
\newblock The cell probe complexity of succinct data structures. 
\newblock \emph{Theor. Comput. Sci.}, 379(3): 405-417 (2007).

\bibitem{Geary-BP}
R. Geary, N. Rahman, R. Raman and V. Raman.
\newblock A simple optimal representation for balanced parentheses.
\newblock {\em Theoretical Computer Science}, {\bf 368(3)} 231--246 (2006).

\bibitem{GRR}
R. Geary, R. Raman and V. Raman.
\newblock Succinct ordinal trees with level-ancestor queries.
\newblock {\em ACM Transactions on Algorithms}, {\bf 2(4)} 510--534 (2006)

\bibitem{golynski-07}
A. Golynski.
\newblock Optimal lower bounds for rank and select indexes. 
\newblock \emph{Theor. Comput. Sci.} 387(3): 348-359 (2007).

\bibitem{golynski-09}
A. Golynski.
\newblock Cell probe lower bounds for succinct data structures.
\newblock In {\em Proceedings of the ACM-SIAM Symposium on Discrete
  Algorithms}, 625--634, 2009.

\bibitem{golynski-thesis}
A. Golynski.
\newblock Upper and Lower Bounds for Text Indexing Data Structures.
\newblock PhD thesis, University of Waterloo, 2007.

\bibitem{Golynskietal}
A. Golynski, J. I. Munro and S. S. Rao.
\newblock Rank/select operations on large alphabets: a tool for text indexing.
\newblock In {\em Proceedings of the ACM-SIAM Symposium on Discrete
  Algorithms}, 368--373, 2006.

\bibitem{GV00}
R. Grossi and J. S. Vitter.
\newblock Compressed suffix arrays and suffix trees with applications to text
  indexing and string matching.
\newblock In {\em Proceedings of the ACM Symposium on Theory of Computing},
  397--406, 2000.

\bibitem{HMR}
M. He, J. I. Munro and S. S. Rao.
\newblock A categorization theorem on suffix arrays with applications to 
space efficient text indexes.
\newblock In {\em Proceedings of the ACM-SIAM Symposium on Discrete
  Algorithms}, 23--32, 2005.

\bibitem{HMR-trees}
M. He, J. I. Munro and S. S. Rao.
\newblock Succinct ordinal trees based on tree covering.
\newblock In {\em Proceedings of the International Conference on
Automata, Language and Programming}, LNCS 4596: 509--520, 2007.

\bibitem{hellman}
M. E. Hellman.
\newblock A Cryptanalytic Time-Memory Tradeoff.
\newblock {\em IEEE Transactions on Information Theory}, {\bf 26} 401--406 (1980).

\bibitem{Jacobson}
G. Jacobson.
\newblock Space-efficient static trees and graphs.
\newblock In {\em Proceedings of the Annual IEEE Symposium on Foundations of
  Computer Science}, 549--554, 1989.

\bibitem{Knuth}
D. E. Knuth. 
\newblock Efficient representation of perm groups.
\newblock {\it Combinatorica} {\bf 11} 33--43 (1991).

\bibitem{LeightonBook}
F.~T. Leighton.
\newblock {\em Introduction to Parallel Algorithms and Architectures: Arrays,
  Trees and Hypercubes}.
\newblock Computer Science and Information Processing. Morgan Kauffman, 1992.

\bibitem{bitprobe}
P.~B. Miltersen.
\newblock The bit probe complexity measure revisited.
\newblock In {\em Proceedings of the Annual Symposium on Theoretical
Aspects of Computer Science}, LNCS {\bf 665} 662- 671, Springer-Verlag, 1993.

%
\bibitem{MR}
J.~I. Munro and V. Raman.
\newblock Succinct representation of balanced parentheses and static trees.
\newblock {\em SIAM Journal on Computing}, {\bf 31} (3):762-776, 2002.

%
\bibitem{MRRR}
J.~I. Munro, R. Raman, V. Raman and S.~S. Rao.
\newblock Succinct representations of permutations.
\newblock In {\em Proceedings of the International Conference on
Automata, Language and Programming}, LNCS 2719: 345--356, 2003.

\bibitem{MR-functions}
J. I. Munro and S. S. Rao.
\newblock Succinct representations of functions.
\newblock In {\em Proceedings of the International Conference on
Automata, Language and Programming}, LNCS 3142: 1006--1015, 2004.

%
\bibitem{succincter}
M. Patrascu.
\newblock Succincter.
\newblock In {\em Proceedings of the Annual IEEE Symposium on
 Foundations of Computer Science}, 305--313, 2008.

\bibitem{PV10}
M. Patrascu and E. Viola.
\newblock Cell-Probe Lower Bounds for Succinct Partial Sums. 
\newblock  In {\emph Proc. 21st Annual ACM-SIAM SODA}, pp. 117-122, 2010.

\bibitem{Pouyanne}
N.Pouyanne.
\newblock On the number of permutations admitting an m-th root.
\newblock {\em The Electronic Journal of Combinatorics}, 9 (2002).

\bibitem{RRR-WADS}
R. Raman, V. Raman and S.~S. Rao.
\newblock Succinct dynamic data structures.
\newblock In {\em Proceedings of the Workshop on Algorithms and Data 
Structures}, LNCS 2125: 426--437, 2001.

\bibitem{RRR-SODA}
R. Raman, V. Raman and S.~S. Rao.
\newblock Succinct indexable dictionaries with applications to encoding 
  $k$-ary trees and multisets.
\newblock {\it ACM Transactions on Algorithms}, {\bf 3(4)}, 2007.


\bibitem{SN10}
K. Sadakane and G. Navarro.
\newblock Fully-functional succinct trees.
\newblock {\em Proceedings of the ACM-SIAM Symposium on Discrete
  Algorithms}, 134--149, 2010.

\bibitem{Willard83}
Willard, Dan E. 
\newblock Log-logarithmic worst-case range queries are possible in space $\Theta(N)$. 
\newblock \emph{Information Processing Letters} \textbf{17} (1983) pp. 81–84.

\end{thebibliography}
\end{document}